\acrodef{mal} [mIS] {Maximal Independent Set}
\acrodef{mis} [MIS] {Maximum Independent Set}
\acrodef{mni} [MNI] {Mininum Image}
\acrodef{nav} [nAV] {non Articulation Vertex}
\acrodef{flexis} [FLEXIS] {FLExible Frequent Subgraph Mining using MaXimal Independent Set}
\acrodef{fsm} [FSM] {Frequent Subgraph Mining}
\acrodef{support} [$\sigma$] {\textit{support}}
\acrodef{lerp} [$\lambda$] {\textit{slider}}
\definecolor{todobackground}{RGB}{255,255,179} % Pale yellow
\definecolor{todonebackground}{RGB}{152,251,152} % Pale green
\definecolor{checkbackground}{RGB}{180,225,255} % Pale blue
\definecolor{changedTextColor}{RGB}{128, 0, 0} % Red
\definecolor{mehtaComment}{RGB}{128, 0, 0}
\definecolor{akshitComment}{RGB}{0, 0 128}
\tikzset{
  redNode/.style={draw, circle, fill=red!20, minimum size=8pt, inner sep=0},
  blueNode/.style={draw, circle, fill=blue!20, minimum size=8pt, inner sep=0},
  greenNode/.style={draw, circle, fill=green!40, minimum size=8pt, inner sep=0},
  yellowNode/.style={draw, circle, fill=yellow!20, minimum size=8pt, inner sep=0},
  markedNode/.style={draw, circle, fill=gray!20, dotted, minimum size=8pt, inner sep=0},
  edgeNode/.style={draw, circle, fill=black!10, minimum size=8pt, inner sep=0},
  singleEdge/.style={},
  doubleArrow/.style={<->, >=stealth},
  singleArrow/.style={->, >=stealth},
  flowArrow/.style={->, >=stealth, dashed, thick},
  clusterNode/.style={
    inner sep=0pt, circular sector, shape border uses incircle,
  },
  circularNode/.style={
    inner sep=0pt, circle, shape border uses incircle,
  },
}
\newcommand{\qoutes}[1]{``#1''}
\newcommand{\sqoutes}[1]{`#1'}
\newcommand{\edgepath}[0]{\leadsto}
\newcommand{\gPath}[2]{#1 \edgepath #2}
\newcommand{\set}[1]{\{#1\}}
\newcommand{\abs}[1]{\lvert#1\rvert}
\newcommand{\FunctionCall}[2]{\textproc{#1}(#2)}
\newcommand{\sForAll}[2]{\ForAll{#1}#2\EndFor} % snappy version of \ForAll...\EndFor
\newcommand{\sWhile}[2]{\While{#1}#2\EndWhile} % snappy version of \While...\EndWhile
\newcommand{\sIf}[2]{\If{#1}#2\EndIf}          % snappy version of \If...\EndIf
\newcommand{\OR}{\textbf{ or }}
\newcommand{\Not}{\textbf{ not }}
\newcommand{\runningDatagraph}[1]{
  \begin{tikzpicture}

    \def\smallerScale{16}
    \def\biggerScale{20}
    \def\smallerWidth{4}
    \def\biggerWidth{6}

    \node[redNode] (d1) at (-1.3, 0) {$d_1$};
    \node[redNode] (d2) at (-0.4, 0) {$d_2$};
    \node[redNode] (d3) at (0.4, 0) {$d_3$};
    \node[redNode] (d4) at (1.3, 0) {$d_4$};

    \node[blueNode] (d5) at (-0.8, 1.0) {$d_5$};
    \node[blueNode] (d6) at (0, 1.0) {$d_6$};
    \node[blueNode] (d7) at (0.8, 1.0) {$d_7$};

    \draw[doubleArrow] (d1) -- (d2);
    \draw[doubleArrow] (d2) -- (d3);
    \draw[doubleArrow] (d3) -- (d4);

    \draw[doubleArrow] (d5) -- (d6);
    \draw[doubleArrow] (d6) -- (d7);

    \draw[doubleArrow] (d1) -- (d5);
    \draw[doubleArrow] (d2) -- (d5);

    \draw[doubleArrow] (d2) -- (d6);
    \draw[doubleArrow] (d3) -- (d6);

    \draw[doubleArrow] (d3) -- (d7);
    \draw[doubleArrow] (d4) -- (d7);

    \begin{scope}[on background layer]
      \ifnum #1=1
        \node[circularNode, fill=blue!40, scale=\biggerScale] at (d1) {};
        \node[circularNode, fill=blue!40, scale=\biggerScale] at (d5) {};
        \node[circularNode, fill=blue!40, scale=\biggerScale] at (d2) {};
        \draw[singleEdge, fill=blue!40, draw=blue!40, line width=\biggerWidth] (d1) -- (d5);
        \draw[singleEdge, fill=blue!40, draw=blue!40, line width=\biggerWidth] (d2) -- (d5);
        \draw[singleEdge, fill=blue!40, draw=blue!40, line width=\biggerWidth] (d2) -- (d1);
        \node[circularNode, fill=blue!40, scale=\biggerScale] at (d3) {};
        \node[circularNode, fill=blue!40, scale=\biggerScale] at (d4) {};
        \node[circularNode, fill=blue!40, scale=\biggerScale] at (d7) {};
        \draw[singleEdge, fill=blue!40, draw=blue!40, line width=\biggerWidth] (d3) -- (d4);
        \draw[singleEdge, fill=blue!40, draw=blue!40, line width=\biggerWidth] (d7) -- (d4);
        \draw[singleEdge, fill=blue!40, draw=blue!40, line width=\biggerWidth] (d7) -- (d3);
      \fi

      \ifnum #1=2
        \node[circularNode, fill=blue!40, scale=\biggerScale] at (d3) {};
        \node[circularNode, fill=blue!40, scale=\biggerScale] at (d6) {};
        \node[circularNode, fill=blue!40, scale=\biggerScale] at (d2) {};
        \draw[singleEdge, fill=blue!40, draw=blue!40, line width=\biggerWidth] (d3) -- (d6);
        \draw[singleEdge, fill=blue!40, draw=blue!40, line width=\biggerWidth] (d2) -- (d6);
        \draw[singleEdge, fill=blue!40, draw=blue!40, line width=\biggerWidth] (d2) -- (d3);
      \fi

      \ifnum #1=3
        \node[circularNode, fill=blue!40, scale=\biggerScale] at (d1) {};
        \node[circularNode, fill=blue!40, scale=\biggerScale] at (d5) {};
        \node[circularNode, fill=blue!40, scale=\biggerScale] at (d2) {};
        \node[circularNode, fill=blue!40, scale=\biggerScale] at (d6) {};
        \draw[singleEdge, fill=blue!40, draw=blue!40, line width=\biggerWidth] (d1) -- (d5);
        \draw[singleEdge, fill=blue!40, draw=blue!40, line width=\biggerWidth] (d2) -- (d5);
        \draw[singleEdge, fill=blue!40, draw=blue!40, line width=\biggerWidth] (d2) -- (d1);
        \draw[singleEdge, fill=blue!40, draw=blue!40, line width=\biggerWidth] (d2) -- (d6);
        \draw[singleEdge, fill=blue!40, draw=blue!40, line width=\biggerWidth] (d5) -- (d6);
      \fi

      \ifnum #1=4
        \node[circularNode, fill=blue!40, scale=\biggerScale] at (d3) {};
        \node[circularNode, fill=blue!40, scale=\biggerScale] at (d7) {};
        \node[circularNode, fill=blue!40, scale=\biggerScale] at (d2) {};
        \node[circularNode, fill=blue!40, scale=\biggerScale] at (d6) {};
        \draw[singleEdge, fill=blue!40, draw=blue!40, line width=\biggerWidth] (d3) -- (d7);
        \draw[singleEdge, fill=blue!40, draw=blue!40, line width=\biggerWidth] (d6) -- (d3);
        \draw[singleEdge, fill=blue!40, draw=blue!40, line width=\biggerWidth] (d2) -- (d3);
        \draw[singleEdge, fill=blue!40, draw=blue!40, line width=\biggerWidth] (d2) -- (d6);
        \draw[singleEdge, fill=blue!40, draw=blue!40, line width=\biggerWidth] (d7) -- (d6);
      \fi

      \ifnum #1=5
      \node[circularNode, fill=blue!40, scale=\biggerScale] at (d1) {};
      \node[circularNode, fill=blue!40, scale=\biggerScale] at (d5) {};
      \node[circularNode, fill=blue!40, scale=\biggerScale] at (d2) {};
      \draw[singleEdge, fill=blue!40, draw=blue!40, line width=\biggerWidth] (d1) -- (d5);
      \draw[singleEdge, fill=blue!40, draw=blue!40, line width=\biggerWidth] (d2) -- (d5);
      \draw[singleEdge, fill=blue!40, draw=blue!40, line width=\biggerWidth] (d2) -- (d1);
      \fi

      \ifnum #1=6
      \node[circularNode, fill=blue!40, scale=\biggerScale] at (d3) {};
      \node[circularNode, fill=blue!40, scale=\biggerScale] at (d4) {};
      \node[circularNode, fill=blue!40, scale=\biggerScale] at (d7) {};
      \draw[singleEdge, fill=blue!40, draw=blue!40, line width=\biggerWidth] (d3) -- (d4);
      \draw[singleEdge, fill=blue!40, draw=blue!40, line width=\biggerWidth] (d7) -- (d4);
      \draw[singleEdge, fill=blue!40, draw=blue!40, line width=\biggerWidth] (d7) -- (d3);
      \fi

      \ifnum #1=7
      \node[circularNode, fill=blue!40, scale=\biggerScale] at (d1) {};
      \node[circularNode, fill=blue!40, scale=\biggerScale] at (d5) {};
      \node[circularNode, fill=blue!40, scale=\biggerScale] at (d2) {};
      \draw[singleEdge, fill=blue!40, draw=blue!40, line width=\biggerWidth] (d1) -- (d5);
      \draw[singleEdge, fill=blue!40, draw=blue!40, line width=\biggerWidth] (d2) -- (d5);
      \draw[singleEdge, fill=blue!40, draw=blue!40, line width=\biggerWidth] (d2) -- (d1);
      \node[circularNode, fill=blue!40, scale=\biggerScale] at (d3) {};
      \node[circularNode, fill=blue!40, scale=\biggerScale] at (d6) {};
      \node[circularNode, fill=blue!40, scale=\biggerScale] at (d2) {};
      \draw[singleEdge, fill=blue!40, draw=blue!40, line width=\biggerWidth] (d3) -- (d6);
      \draw[singleEdge, fill=blue!40, draw=blue!40, line width=\biggerWidth] (d2) -- (d6);
      \draw[singleEdge, fill=blue!40, draw=blue!40, line width=\biggerWidth] (d2) -- (d3);
      \node[circularNode, fill=blue!40, scale=\biggerScale] at (d3) {};
      \node[circularNode, fill=blue!40, scale=\biggerScale] at (d4) {};
      \node[circularNode, fill=blue!40, scale=\biggerScale] at (d7) {};
      \draw[singleEdge, fill=blue!40, draw=blue!40, line width=\biggerWidth] (d3) -- (d4);
      \draw[singleEdge, fill=blue!40, draw=blue!40, line width=\biggerWidth] (d7) -- (d4);
      \draw[singleEdge, fill=blue!40, draw=blue!40, line width=\biggerWidth] (d7) -- (d3);
      \fi
    \end{scope}

  \end{tikzpicture}
}
\newcommand{\runningDistinctPatterns}[5]{
    % Definitions for customization
    \def\nodeDistance{1} % Distance between nodes
    \def\nodeSize{4} % Size of the nodes
    \def\textOffset{0.25cm} % Distance below the line to place the text. Adjust this value to move the text further down.

    % Directly setting the offsetX and offsetY values without temporary commands
    \pgfmathsetmacro{\offsetX}{#3} % Using the argument directly for offsetX
    \pgfmathsetmacro{\offsetY}{#4} % Using the argument directly for offsetY

    % Adjusting node styles to use the defined node size
    \tikzset{
      every node/.style={minimum size=\nodeSize, inner sep=0pt}
    }

    \coordinate (A) at (\offsetX, \offsetY);
    \coordinate (B) at (\offsetX + \nodeDistance, \offsetY);
    \coordinate (C) at ($(A)!0.5!(B)!1*sin(60)*1cm!90:(B)$); % This calculates the third vertex of an equilateral triangle
    \coordinate (C1) at (\offsetX + \nodeDistance, \offsetY + \nodeDistance);
    \coordinate (D1) at (\offsetX, \offsetY + \nodeDistance);

    \coordinate (MidAB) at ($(A)!0.5!(B)$);

    \node[below=\textOffset] at (MidAB) {#5};

    % Color combination based on parameter #1
    \ifcase#1\relax
    \or % Combination 1: Red, Green, Blue
      \node[redNode]    (A) at (A) {$1$};
      \node[greenNode]  (B) at (B) {$2$};
      \node[blueNode]   (C) at (C) {$3$};
    \or % Combination 2: Red, Green, Yellow
      \node[redNode]    (A) at (A) {$1$};
      \node[greenNode]  (B) at (B) {$2$};
      \node[yellowNode] (C) at (C) {$3$};
    \or % Combination 3: Red, Blue, Yellow
      \node[redNode]    (A) at (A) {$1$};
      \node[blueNode]   (B) at (B) {$2$};
      \node[yellowNode] (C) at (C) {$3$};
    \or % Combination 4: Green, Blue, Yellow
      \node[greenNode]  (A) at (A) {$1$};
      \node[blueNode]   (B) at (B) {$2$};
      \node[yellowNode] (C) at (C) {$3$};
   \or % Pattern 5: 4-vertex pattern with an edge missing between blue and yellow (Square)
      \node[redNode] (A) at (A) {$1$};
      \node[greenNode] (B) at (B) {$2$};
      \node[blueNode] (C1) at (C1) {$3$};
      \node[yellowNode] (D1) at (D1) {$3$};
      \draw[doubleArrow] (A) -- (B);
      \draw[doubleArrow] (B) -- (C1);
      \draw[doubleArrow] (D1) -- (A);
      \draw[doubleArrow] (A) -- (C1); % D1iagonal
      \draw[doubleArrow] (B) -- (D1); % D1iagonal
      % Specifically not drawing the edge between blue (C1) and yellow (D1) if needed
    \or % Pattern 6: C1omplete 4-vertex clique (Square)
      \node[redNode] (A) at (A) {$1$};
      \node[greenNode] (B) at (B) {$2$};
      \node[blueNode] (C1) at (C1) {$3$};
      \node[yellowNode] (D1) at (D1) {$3$};
      \draw[doubleArrow] (A) -- (B);
      \draw[doubleArrow] (B) -- (C1);
      \draw[doubleArrow] (C1) -- (D1);
      \draw[doubleArrow] (D1) -- (A);
      \draw[doubleArrow] (A) -- (C1); % D1iagonal
      \draw[doubleArrow] (B) -- (D1); % D1iagonal
    \else % Default case if #1 is out of range
      \node[text=red] at ($(A)!0.5!(B)$) {Invalid index};
    \fi

    % Drawing edges based on #2
    \ifnum#1<5
      \ifnum#2=0
        % No disconnection
        \draw[doubleArrow] (A) -- (B);
        \draw[doubleArrow] (B) -- (C);
        \draw[doubleArrow] (C) -- (A);
      \else
        % Disconnect specified node
        \ifnum#2=1 % Disconnect A
          \draw[doubleArrow] (B) -- (C);
        \else
          \ifnum#2=2 % Disconnect B
            \draw[doubleArrow] (C) -- (A);
          \else
            \ifnum#2=3 % Disconnect C
              \draw[doubleArrow] (A) -- (B);
            \fi
          \fi
        \fi
      \fi
    \fi
}
\newcommand{\crefnames}[3]{%
  \@for\next:=#1\do{%
    \expandafter\crefname\expandafter{\next}{#2}{#3}%
  }%
}
\newcommand{\changed}[1]{\textcolor{black}{#1}}
\newcommand{\added}[1]{\textcolor{black}{#1}}
\title{FLEXIS: \textbf{FLEX}ible Frequent Subgraph Mining using Maximal \textbf{I}ndependent \textbf{S}ets}
\author{Akshit Sharma}
\affiliation{
  \institution{Colorado School of Mines}
  \city{Golden}
  \country{USA}
}
\email{akshitsharma@mines.edu}
\author{Sam Reinehr}
\affiliation{
  \institution{Colorado School of Mines}
  \city{Golden}
  \country{USA}
}
\email{sam@reinehr.dev}
\author{Dinesh Mehta}
\affiliation{
  \institution{Colorado School of Mines}
  \city{Golden}
  \country{USA}
}
\email{dmehta@mines.edu}
\author{Bo Wu}
\affiliation{
  \institution{Colorado School of Mines}
  \city{Golden}
  \country{USA}
}
\email{bwu@mines.edu}
\begin{document}

\begin{abstract}
  Frequent Subgraph Mining (FSM) is the process of identifying common subgraph patterns that surpass a predefined frequency threshold. While FSM is widely applicable in fields like bioinformatics, chemical analysis, and social network anomaly detection, its execution remains time-consuming and complex. This complexity stems from the need to recognize high-frequency subgraphs and ascertain if they exceed the set threshold. Current approaches to identifying these patterns often rely on edge or vertex extension methods. However, these strategies can introduce redundancies and cause increased latency. To address these challenges, this paper introduces a novel approach for identifying potential $k$-vertex patterns by combining two frequently observed $(k-1)$-vertex patterns. This method optimizes the breadth-first search, which allows for quicker search termination based on vertices count and support value. Another challenge in FSM is the validation of the presumed pattern against a specific threshold. Existing metrics, such as Maximum Independent Set (MIS) and Minimum Node Image (MNI), either demand significant computational time or risk overestimating pattern counts. Our innovative approach aligns with the MIS and identifies independent subgraphs. Through the \qoutes{Maximal Independent Set} metric, this paper offers an efficient solution that minimizes latency and provides users with control over pattern overlap. Through extensive experimentation, our proposed method achieves an average of 10.58x speedup when compared to GraMi and an average 3x speedup when compared to T-FSM.
\end{abstract}

\maketitle
\section{Introduction}~\label{sec:Introduction}
Frequent Subgraph Mining (FSM) has seen significant interest recently due to its applications in various domains such as chemical analysis~\cite{kong:2022:molecule}, bioinformatics~\cite{mrzic:2018:grasping}, social network anomaly detection~\cite{bindu:2016:mining}, and Android malware detection~\cite{martinelli:2013:classifying}. FSM involves identifying recurring subgraphs in a larger graph that exceed a predefined frequency threshold. Despite its growing importance, FSM poses challenges due to its time-consuming and complex nature, especially as it requires identifying high-frequency subgraphs and verifying if they surpass a set threshold - an NP-complete problem. FSM is generally solved in two steps, namely,
\begin{enumerate*}
  \item \textbf{Generation Step}: identifying potential high-frequency subgraphs within the large graph, and
  \item \textbf{Metric Step}: determining if these subgraphs occur more than a predefined threshold.
\end{enumerate*}
Thus, to effectively solve the problem, determining an efficient method for both steps is important.

Considerable research efforts have been directed towards both steps of this approach. Generation Step has been tackled using techniques based on edge extension or vertex extension. In the case of edge extension methods, there are two main approaches. 
\begin{enumerate*}
\item  A possible $k$ edge subgraph is derived from two frequent $(k-1)$-subgraphs. The process of merging two viable subgraphs sharing identical $(k-2)$ edge subgraphs is detailed in multiple works~\cite{kuramochi:2004:efficient, kuramochi:2005:finding}. However, these approaches are either used for transactional graphs, in which the occurrence of patterns in multiple graphs is considered, as opposed to our single graph approach, or the extension is on an edge level. The main disadvantage of this approach is that during the breadth-first search, the candidate pattern graphs with the same number of vertices are placed at different levels hence curbing the early termination criteria based on the number of vertices and support value.
\item The other approach involves the addition of an edge to all frequent $(k-1)$-subgraphs followed by eliminating redundancies~\cite{yuan:2023:tfsm, elseidy:2014:grami}. This approach generates a lot of redundancies and removing them takes a considerable amount of time. 
\end{enumerate*}

In the case of the vertex extension methods as illustrated by works such as~\cite{chen:2020:pangolin, chen:2022:efficient, yao:2020:locality, chen:2021:sandslash}, the potential $k$-vertex subgraphs are established by exhaustively generating all feasible extensions of $(k-1)$-vertex frequent subgraphs, and eliminating redundancies. This method is similar to the second edge extension method and thus suffers from the same drawbacks.

%Thus, to reduce the number of redundancies and to ensure that the candidate patterns with the same number of vertices are present at the same level of the tree, we propose a novel approach that merges two frequent $(k-1)$-vertex subgraph patterns to find a potential $k$-vertex subgraph pattern. This approach is similar to the first edge extension method, however, it is different in the sense that it is based on vertex extension and not edge extension.

This paper proposes to identify potentially frequent subgraph patterns directly in the pattern space. Our approach generates $k$-vertex candidate patterns by merging two $(k-1)$-vertex patterns that have been already identified as frequent. The implementation of this merging strategy tackles several non-trivial challenges, including the identification of suitable merge points that maintain the graph's meaningful connectivity, handling labels and attributes on vertices and edges to ensure coherent merged patterns, and ensuring that the merged patterns are unique and non-redundant by checking for automorphisms and determining canonical forms.

The Metric Step in Frequent Subgraph Mining has seen advancements, particularly with the \ac{mis} metric. This metric is known for its accuracy and calculates pattern frequency by counting disjoint, independent patterns in the data graph. However, its NP-complete nature requires considerable computational time. An alternative, the \ac{mni}, finds the maximum count of independent sets for a pattern but can result in non-disjoint sets and potential vertex overlap, leading to pattern count overestimation. \cite{yuan:2023:tfsm} introduced a fractional-score method based on \ac{mni}, reducing overestimation by factoring in each vertex's contribution to a pattern. Despite fewer false positives, this method still faces significant inaccuracies~\cite{yuan:2023:tfsm}. A limitation of both MIS and MNI is the lack of user control over the overlap between identified patterns.

In contrast, our proposed approach introduces a mechanism to align with the MIS, which exclusively identifies independent patterns without vertex overlap. This is achieved through the utilization of \ac{mal}. \ac{mal} is an approximation of MIS. We leverage a mathematical relationship that exists between these metrics, which enables us to establish the lower bound of support for the \ac{mal} metric, based on the support for the MIS metric. The upper bound corresponds to the support value for MIS. Users can define the degree of overlap with the MIS metric by selecting the interpolation level, referred to as the \textit{slider value}. This way, they have control over the extent of overlap with the MIS metric, offering a metric applicable across various applications. 
\added{For instance, in network analysis, a degree of overestimation may be acceptable, whereas, in fields like chemical composition analysis or biomedical research, precision is of paramount importance, even if it means potentially missing some patterns. While the overlap chosen by the user may lead to missing some patterns, it ensures that the patterns identified are accurate.}

This paper makes the following contributions,

\begin{itemize}
\item It proposes an innovative method for generating candidate subgraph patterns by merging two frequently occurring smaller subgraph patterns, each with $(k-1)$ vertices, to efficiently form potential larger subgraph patterns with $k$ vertices to effectively prune the search space.
\item The paper introduces a new metric based on the Maximal Independent Set, allowing for the enumeration of pattern graphs within a data graph, with a user-defined slider for controlling the overlap with \ac{mis} metric, enhancing both accuracy and flexibility in frequent subgraph mining.
\item  It conducts extensive experiments to demonstrate the efficiency of the proposed algorithms, showing a significant improvement in computational time and memory usage compared to existing graph mining methods like GraMi and T-FSM.
\end{itemize}

% We organize the paper as follows, the preliminaries are discussed in \cref{sec:preliminaries}, our approach and our framework are described in \cref{sec:flexis}, our results are presented in \cref{sec:results}, related works in \cref{sec:related}, and we conclude our paper in \cref{sec:conclusion}.

\section{Preliminaries}~\label{sec:preliminaries}

In this section, we introduce the definitions and terminologies for the terms used in the rest of the paper. 

\subsection{General Graph Terminologies}

%In this section, we give a brief explanation of the general terminologies that are commonly used in the rest of the paper.

\subsubsection{Articulation Vertex}
An articulation vertex (AV) in an undirected graph is a vertex whose removal disconnects the graph.
An AV in a directed graph is defined by making all directed edges undirected and then using the 
definition for undirected graphs.

\subsubsection{Graph Isomorphism}
Two labeled graphs $G_1(V_1, E_1)$ and $G_2$ $(V_2, E_2)$, where each vertex $v$ has label $l(v)$, are isomorphic if there exists a bijection $f: V_1 \rightarrow V_2$ such that for any two vertices $u, v \in V_1$, $(u, v) \in E_1 \iff (f(u), f(v)) \in E_2$ and $l(v) = l(f(v))$ for each $v \in V_1$.

\subsubsection{Automorphism}
An automorphism of a graph is a special case of graph isomorphism, where a graph is mapped onto itself. Formally, an automorphism of a labeled graph $G=(V,E)$ is a bijection $f:V \rightarrow V$ such that for any two vertices $u,v \in V$, $(u,v) \in E \iff (f(u),f(v)) \in E$ and $l(v) = l(f(v))$ for each $v \in V$. An automorphism is thus a permutation of $V$.  The set of all automorphisms of a graph can be produced using {\em generators}. Generators are a compact subset of automorphisms whose composition results in the generation of its automorphisms. 

Graph $P_1$ (\cref{fig:patt1}) has two automorphisms: (1) the identity permutation (1,2,3) which maps each vertex onto itself and 
\begingroup
\hbadness=10000 
\begin{figure}[ht]
  \centering
  \begin{subfigure}{0.21\linewidth}
    \centering
    \resizebox{\linewidth}{!}{ 
      \begin{tikzpicture}
  \node[redNode] (u1) at (0, 0) {$u_1$};
  \node[redNode] (u2) at (1.0, 0) {$u_3$};
  \node[blueNode] (u4) at (0.5, 1.0) {$u_2$};
  \draw[doubleArrow] (u1) -- (u2);
  \draw[doubleArrow] (u2) -- (u4);
  \draw[doubleArrow] (u1) -- (u4);
\end{tikzpicture}
    }
    \caption{$P_1$}
    \label{fig:patt1}
  \end{subfigure}
  \begin{subfigure}{0.21\linewidth}
    \centering
    \resizebox{\linewidth}{!}{ 
      \begin{tikzpicture}
  \node[blueNode] (u1) at (0, 1.0) {$u_4$};
  \node[blueNode] (u2) at (1.0, 1.0) {$u_2$};
  \node[redNode] (u3) at (0.5, 0) {$u_1$};
  \draw[doubleArrow] (u1) -- (u2);
  \draw[doubleArrow] (u2) -- (u3);
  \draw[doubleArrow] (u1) -- (u3);
\end{tikzpicture}
    }
    \caption{$P_2$}
    \label{fig:patt2}
  \end{subfigure}
  \begin{subfigure}{0.52\linewidth}
    \centering
    \resizebox{\linewidth}{!}{ 
      \begin{tikzpicture}
  \node[redNode] (d1) at (-1.5, 0) {$d_1$};
  \node[redNode] (d2) at (-0.5, 0) {$d_2$};
  \node[redNode] (d3) at (0.5, 0) {$d_3$};
  \node[redNode] (d4) at (1.5, 0) {$d_4$};

  \node[blueNode] (d5) at (-1.0, 1.0) {$d_5$};
  \node[blueNode] (d6) at (0, 1.0) {$d_6$};
  \node[blueNode] (d7) at (1.0, 1.0) {$d_7$};

  \draw[doubleArrow] (d1) -- (d2);
  \draw[doubleArrow] (d2) -- (d3);
  \draw[doubleArrow] (d3) -- (d4);

  \draw[doubleArrow] (d5) -- (d6);
  \draw[doubleArrow] (d6) -- (d7);

  \draw[doubleArrow] (d1) -- (d5);
  \draw[doubleArrow] (d2) -- (d5);

  \draw[doubleArrow] (d2) -- (d6);
  \draw[doubleArrow] (d3) -- (d6);

  \draw[doubleArrow] (d3) -- (d7);
  \draw[doubleArrow] (d4) -- (d7);
\end{tikzpicture}
    }
    \caption{Data graph $D$}
    \label{fig:datagraph}
  \end{subfigure}
  \caption{The double arrow represents directed edges in both directions. Labels are denoted by vertex colors.}
  \label{fig:example}
    \Description{}
\end{figure}
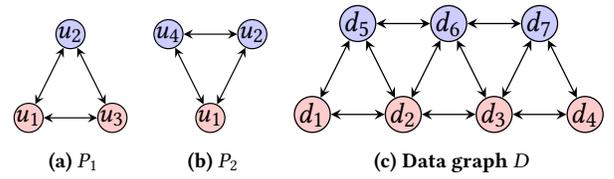
\endgroup
(2) permutation (3,2,1) which maps $u_1$ and $u_3$ to each other and $u_2$ onto itself - note this holds because $u_1$ and $u_3$ have the same color (label).
In this instance, the set of generators is the same as the set of automorphisms. If all vertices in $P_1$ had the same label, it would have six automorphisms corresponding to the 3! permutations of its vertices. These can be generated by using two generators: (2,1,3) which swaps $u_1$ and $u_2$ and (1,3,2) which swaps $u_2$ and $u_3$. 
%We use generators from Bliss~\cite{JunttilaKaski:ALENEX:2007:bliss,JunttilaKaski:TAPAS:2011:bliss} to generate automorphims in \acs{flexis}.

\begin{comment}
For example, in \cref{fig:articulation} the articulation vertex is $d_3$. Upon removal of $d_3$, the graph gets divided into three subgraphs with vertices $\set{ d_1, d_2, d_5, d_6 }$, $\set{ d_7 }$ and $\set{ d_4 }$. Mathematically, $v$ is an articulation vertex of graph $G$ iff there exist at least two other vertices $u, w$ such that there is at least one path ($\gPath{u}{w}$) from $u$ to $w$ in $G$ and  $v$ lies on every path from $u$ to $w$ in $G$.

\begin{figure}[hb]
    \centering
    \input{figures/articulation}
    \caption{Illustration of an articulation vertex}
    \label{fig:articulation}
    \Description{}
\end{figure}
\end{comment}

\subsubsection{Subgraph Isomorphism}
Given vertex-labeled data and pattern graphs $D (V_d,E_d)$ and $P (V_p,E_p)$, respectively,
and a set of labels $L$, where each vertex 
$v$ has a label $l(v) \in L$: $P$ is {\em subgraph isomorphic} 
to (is a subgraph of) $D$ if there exists an injective function 
$f: V_p \rightarrow V_d$ such that (1) $l(p) = l(f(p))$ for all $p \in V_p$ and (2) $(p_1, p_2) \in E_p \implies (f(p_1),f(p_2)) \in E_d$ for all $(p_1, p_2) \in E_p$. {\em Note}: our definition of subgraph isomorphism does not require an edge
in $P$ if one exists between the corresponding vertices in $D$.

% First permutation of the graph will be $\set{1, 2, 3}$. Applying generator $Gen_1$ will give the next permutation as $\set{3, 2, 1}$, which is automorphism to the first permutation of $P_1$.

% Let a \textbf{labeled data graph} be represented as $D(V_d, E_d, L_d)$ and a \textbf{labeled pattern graph} represented as $P(V_p, E_p, L_p)$. Then the pattern graph $P$ is subgraph isomorphic to a data graph $D$ if $|V_p| \leq |V_d|$ and if there is an injective relationship between the vertices, edges and the labels of the pattern graph and the data graph. In other words, each vertex, edge and label in the pattern graph matches to a distinct vertex, edge and label in the data graph.
%For instance the pattern graph in \cref{fig:example} (b) is subgraph isomorphic to the induced subgraph formed by vertices $\set{d_1, d_2, d_5}$ in \cref{fig:example} (a).

\subsubsection{Bliss}

The Bliss library~\cite{JunttilaKaski:ALENEX:2007:bliss,JunttilaKaski:TAPAS:2011:bliss} is used to determine whether two vertex-labeled graphs (undirected or directed)  $G_1$ and $G_2$ are isomorphic. 
It computes (1) a hash (which may not be unique) and (2) 
a canonical form (which is unique) for both $G_1$ and $G_2$. If the hashes are unequal, $G_1$ and $G_2$ are quickly determined to be non-isomorphic. 
If the hashes are equal, their canonical forms are checked and if equal, $G_1$ and $G_2$ are isomorphic. Bliss also computes
generators and automorphisms of a graph.

\subsection{Graph Mining}

Frequent subgraph mining (FSM) identifies pattern graphs that are subgraphs of a given data graph 
$D$ that occur a certain number \acs{support} of times. This paper focuses on FSM~\cite{bringmann:2008:frequent} 
on labeled, directed graphs. However, our techniques can be trivially extended to unlabeled and/or undirected graphs.
Figure~\ref{fig:example} illustrates these concepts. Both patterns $P_1$ (\cref{fig:patt1}) and $P_2$ (\cref{fig:patt2}) are subgraphs of data graph $D$ (\cref{fig:datagraph}). Specifically, $P_1$ can be seen to be a subgraph of $D$
using six functions $f$ that map $(u_1, u_2, u_3)$ to: 
(i) $(d_1,d_5,d_2)$, 
(ii) $(d_2,d_5,d_1)$, 
(iii) $(d_2,d_6,d_3)$,
(iv) $(d_3,d_6,d_2)$,
(v) $(d_3,d_7,d_4)$,
(vi) $(d_4,d_7,d_3)$.

Recall that we approach the graph mining problem in two steps: generation (derive candidate subgraphs) and metric (determine which candidate subgraphs occur in the data graph and are frequent). Using this terminology, both $P_1$ and $P_2$ may be viewed as candidate subgraphs. Although we have not yet formally defined how we count occurrences (i.e., the metric), a threshold of 7 would result in $P_1$ being classified as infrequent in $D$.

\subsubsection{Anti-monotone Property}
 Let $P$ and $Q$ be subgraphs of data graph $D$ such that $P$ is a subgraph of $Q$ and $Q$ is a k-size pattern while $P$ is $(k-1)$-size pattern. Anti-monotone property states the number of instances of $P$ $\geq$ number of instances of $Q$ in $D$.

\subsection{Generation Step terminologies}~\label{sec:def}
% In the generation step, all possible candidates are generated. There have been many pruning methods that have been explored in the literature~\cite{elseidy:2014:grami}. In this section, we give the basic terminologies required for the novel pruning method that we introduce in this paper.

\subsubsection{Core Graph and Marked Vertex}~\label{sec:def:coreGraph}
A core graph is obtained from a pattern graph by disconnecting one of its vertices along with its incident edges. $C_m^{u_j}$ denotes the core graph obtained by disconnecting vertex $u_j$ from pattern graph $P_m$. The disconnected vertex $u_j$ is referred to as the marked vertex of the core graph. 
% For instance in \cref{fig:core:c1u1}, the vertices $\set{u_3, u_2}$ are connected and the vertex $\set{u_1}$ is disconnected, the CoreGraph in this setting is all the three vertices together. 
$C_1^{u1}$ denotes the core graph (\cref{fig:core:c1u1}) formed by disconnecting vertex $u_1$ from pattern $P_1$; $u_1$ is the
marked vertex of $C_1^{u1}$.

\begin{figure}[htb]
  \centering
  \begin{subfigure}{0.28\linewidth}
    \centering
    \begin{tikzpicture}
  \node[markedNode] (u1) at (0, 0) {$u_1$};
  \node[blueNode] (u2) at (1, 0) {$u_2$};
  \node[redNode] (u3) at (0, 1) {$u_3$};
  \draw[doubleArrow] (u2) -- (u3);
\end{tikzpicture}
    \caption{CoreGraph $C_1^{u1}$}
    \label{fig:core:c1u1}
  \end{subfigure}
  \begin{subfigure}{0.28\linewidth}
    \centering
    \begin{tikzpicture}
  \node[redNode] (u1) at (0, 0) {$u_1$};
  \node[markedNode] (u2) at (1, 0) {$u_2$};
  \node[redNode] (u3) at (0, 1) {$u_3$};
  \draw[doubleArrow] (u1) -- (u3);
\end{tikzpicture}
    \caption{CoreGraph $C_1^{u2}$}
    \label{fig:core:c1u2}
  \end{subfigure}
  \begin{subfigure}{0.33\linewidth}
    \centering
    \begin{tikzpicture}
  \node[markedNode] (u1) at (0, 0) {$u_1$};
  \node[edgeNode] (u4) at (1.4, 0.80) {$u_4$};
  \node[edgeNode] (u5) at (0.8, 0.25) {$u_5$};
  \node[blueNode] (u2) at (2.0, 0) {$u_2$};
  \node[redNode] (u3) at (0, 1) {$u_3$};
  \draw[singleArrow] (u3) -- (u4);
  \draw[singleArrow] (u4) -- (u2);
  \draw[singleArrow] (u2) -- (u5);
  \draw[singleArrow] (u5) -- (u3);
\end{tikzpicture}
    \caption{ext CoreGraph $E_1^{u1}$}
    \label{fig:ext:core:e1}
  \end{subfigure}
  \caption{CoreGraphs \& extended CoreGraph for pattern $P_1$}
  \label{fig:core:example}
    \Description{}
\end{figure}
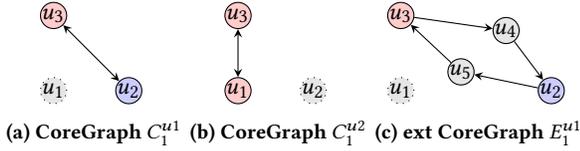

%\subsubsection{Marked Vertex}
%The vertex $\set{u_1}$ that was disconnected from the pattern $P_1$ to form the CoreGraph is referred to as the marked vertex of the CoreGraph.

\subsubsection{Core Graph Isomorphism}
Two core graphs are isomorphic iff the core graphs with marked vertices excluded are isomorphic. For example, $C_1^{u1}$ is isomorphic to $C_1^{u3}$. 

% \subsubsection{AllCores}~\label{sec:def:allCores}
% AllCores is an array that contains all the CoreGraphs formed from all the patterns of the same size. That is, if we assume that $P_1$ and $P_2$ are the only size $3$ patterns, $AllCores$ will be $\set{C_1^{u1}, C_1^{u2}, C_1^{u3}, C_2^{u1}, C_2^{u2}, C_2^{u4}}$.

% \subsubsection{CoreMap}~\label{sec:def:coreMap}
% A CoreMap in a pattern graph refers to a collection of CoreGraphs that are generated from the given pattern. For instance, $CoreMap(P_1)$ will be $\set{C_1^{u1}, C_1^{u2}}$. $C_1^{u3}$ is not included in the CoreMap as it is isomorphic to $C_1^{u1}$.

\subsubsection{Core Groups} \label{sec:def:coreGroup}

The set of all core graphs generated from patterns of the same size, grouped by isomorphic core graphs, is collectively
referred to as core groups. A single core group is represented as a <key, value> pair,  where the key is a core graph without the marked vertex and the value is a list of all core graphs that are isomorphic to the key and to each other. Three core groups are generated from pattern graphs $P_1$ and $P_2$ (\cref{fig:patt1} and \cref{fig:patt2}): $CoreGroup(C_1^{u3}) = \set{C_1^{u3}, C_1^{u1}, C_2^{u2}, C_2^{u4}}$, $CoreGroup(C_1^{u2}) = \set{C_1^{u2}}$, $CoreGroup(C_2^{u1}) = \set{C_2^{u1}}$.

\subsubsection{Extended Core Graphs}

Although the preceding definitions assume that only vertices are labeled, there exist applications where edges are also labeled. We are able to transform edge-labeled graphs into vertex-labeled graphs as follows: an edge $(u,v)$ with label $L(u,v)$ is replaced with two edges $(u,w)$ and $(w,v)$. The newly introduced vertex $w$ is assigned label $L(u,v)$.
%  Unique CoreGraphs ($C_{k-1}^{x}$) are determined by using Bliss~\cite{JunttilaKaski:ALENEX:2007:bliss, JunttilaKaski:TAPAS:2011:bliss}. Bliss cannot handle edge labels and bi-directed edges, mandating the need to determine extended CoreGraphs ($E_{k-1}^{x}$).
\cref{fig:ext:core:e1} shows the extended CoreGraph of \cref{fig:core:c1u1}, in which $u_4$ is introduced in edge $(u_3, u_2)$ and $u_5$ is introduced in edge $(u_2, u_3)$. 

%Upon giving the extended CoreGraph to Bliss, a canonical form of the graph and a hash for the graph are created. The hash that is generated by Bliss is used to determine the uniqueness of the CoreGraph. Note, CoreGraphs $C_1^{u_1}$ and $C_1^{u_3}$ would be the same, so the set will contain only one of them.
%For the rest of the paper, we will assume CoreGraph $C_{k-1}^{x}$ to be extended CoreGraph $E_{k-1}^{x}$.

\subsection{Metric Step}

\begin{figure*}[ht]
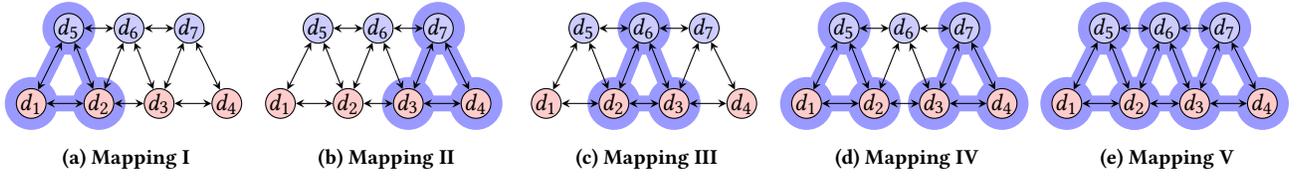

  \centering
  \begin{subfigure}{0.19\linewidth}
    \centering
    \runningDatagraph{5}
    \caption{Mapping I}
    \label{fig:dataGraphMapping1}
  \end{subfigure}
  \begin{subfigure}{0.19\linewidth}
    \centering
    \runningDatagraph{6}
    \caption{Mapping II}
    \label{fig:dataGraphMapping2}
  \end{subfigure}
  \centering
  \begin{subfigure}{0.19\linewidth}
    \centering
    \runningDatagraph{2}
    \caption{Mapping III}
    \label{fig:dataGraphMapping3}
  \end{subfigure}
  \begin{subfigure}{0.19\linewidth}
    \centering
    \runningDatagraph{1}
    \caption{Mapping IV}
    \label{fig:dataGraphMapping4}
  \end{subfigure}
  \begin{subfigure}{0.19\linewidth}
    \centering
    \runningDatagraph{7}
    \caption{Mapping V}
    \label{fig:dataGraphMapping5}
  \end{subfigure}
  \caption{All possible Mappings for pattern $P_1$}
  \label{fig:dataGraphMapping}
\end{figure*}

The metric step counts the number of occurrences of each candidate pattern in the data graph and determines
whether it is frequent. We next review metrics such as MIS, MNI and FS that have been used in the literature 
to determine subgraph frequency counts and introduce our metric, the mIS, a key contribution of this paper.
The \ac{mis} is the gold standard in pattern frequency counting due to its accuracy in providing exact counts. However, its high computational complexity has made it challenging to use. The \ac{mni} serves as a faster, approximate alternative to MIS but can significantly overestimate pattern counts. The Fractional-Score (FS) metric is a variation of MNI that attempts to address the overestimation but still occasionally overestimates certain patterns. To address these drawbacks, we propose \ac{mal}, a metric that preserves the independent set property of MIS while providing the flexibility to choose between overestimation or underestimation of patterns. Additionally, \ac{mal} has the advantage of running with similar runtimes as \ac{mni}, making it a practical choice for subgraph mining tasks. 

% \begin{figure}[hb]
%   \centering
%   \begin{subfigure}{0.49\linewidth}
%     \centering
%     \runningDatagraph{7}
%     \caption{\ac{mni} mapping}
%     \label{fig:dataGraphMapping5}
%   \end{subfigure}
%   \begin{subfigure}{0.49\linewidth}
%     \centering
%     \runningDatagraph{1}
%     \caption{\ac{mis} mapping}
%     \label{fig:dataGraphMapping4}
%   \end{subfigure}
%   \caption{\ac{mni} and \ac{mis} mappings}
%   \label{fig:dataGraphMapping_mni_mis}
%     \Description{}
% \end{figure}

% The metric step is the second phase of graph mining. During this step, the goal is to identify which among the candidate patterns that were generated are actually considered frequent. In the literature, two metrics are defined for graph mining, Maximum Independent Set (MIS)~\cite{kuramochi:2005:finding} and Minimum Image based (\ac{mni})~\cite{bringmann:2008:frequent}. A variation of the \ac{mni} method, known as Fractional Score~\cite{yuan:2023:tfsm} has also been proposed. 

Let $M = \set{m_1, m_2, \ldots, m_{|M|}}$ be the set of all possible mappings from $P \to D$. Each mapping $m_i$ represents
an injective function $f: V_p \rightarrow V_d$ represented by a list of of pairs of the form $(v_p, v_d)$ which maps each $v_p \in V_P$ to vertex $v_d = f(v_p) \in D$.

\subsubsection{Independent Set}
An independent set $I \subseteq M$, is a subset of $M$ where for any two distinct mappings $m_i$, $m_j$ in $I$, 
and for any vertex pair $(v_{p_i},v_{d_i}) \in m_i$ and any vertex pair $(v_{p_j}, v_{d_j}) \in m_j$, it holds 
that $v_{d_i} \neq v_{d_j}$. That is there are no two mappings that share the same vertex in $D$. For instance, in the case of pattern $P_1$, an independent set could be any one of the mappings in \cref{fig:dataGraphMapping} except \cref{fig:dataGraphMapping5}.

\subsubsection{\acf{mis}~\cite{kuramochi:2005:finding}}
\ac{mis} is an independent set of maximum size among all independent sets; i.e., there is
no other independent set with a greater number of mappings. For pattern $P_1$, \cref{fig:dataGraphMapping4} is the \ac{mis}.

\subsubsection{\acf{mal}}
\ac{mal} is an independent set such that there is no possible addition of a mapping that preserves independence. For pattern $P_1$, maximal independent sets are depicted in \cref{fig:dataGraphMapping3,fig:dataGraphMapping4}. However, \cref{fig:dataGraphMapping1,fig:dataGraphMapping2} are not \ac{mal}, since additional mappings can be added that
preserve independence.

%  \ac{mal} is a subset of all the independent sets, where no additional mappings can be added without violating the independence property. For instance, for pattern $P_1$, is as shown in \cref{fig:dataGraphMapping3,fig:dataGraphMapping4}. However, \cref{fig:dataGraphMapping1,fig:dataGraphMapping2} are not \ac{mal}, since a match can be added without violating the property of independent sets.
% \changed{Suppose mapping function generated via \ac{mal} is $F \gets \set{f_1, f_2, \ldots, f_n}$, then it will not be possible to add any new mapping of pattern graph to the data graph to any $f_i \notin F$ without violating the independence property.}
%  $\is{\set{d_2, d_3, d_6}}$ is a maximal independent set, however, $\is{\set{d_1, d_2, d_5}}$ is not, since $\set{d_3, d_4, d_7}$ can be added without violating the property of independent sets. 

%\subsubsection{\acf{mni}}
%Consider a subgraph $P(V_p, E_p)$ and a data graph $D(V_d, E_d)$. Let $f_1,\ldots,f_m$ be isomorphisms from $P \rightarrow D$. For each vertex $v \in P$, we define $F(v) = \set{f_1,\ldots,f_m}$ as the set of unique images of $v$ in $D$ under these isomorphisms. According to \cite{elseidy:2014:grami}, \ac{mni} of $P$ in $D$ is $min \set{|F(v)|:v \in V_p} $.

\subsubsection{\acf{mni}~\cite{elseidy:2014:grami}}
Recall that $M = \set{m_1, \ldots, m_{|M|}}$ 
is the set of all possible mappings from $P \to D$. Let $f_i$ be the injective mapping used in mapping $m_i$. For each vertex $v \in P$, define $F(v) = \set{f_1(v),\ldots,f_{|M|}(v)}$ as the 
set of unique images of $v$ in $D$ under these mappings. \ac{mni} of $P$ in $D$ is $min \set{|F(v)|:v \in V_p} $. 
For $P_1$ in $D$, $F(u_1) = F(u_3) = \set{d_1, d_2, d_3, d_4}$, $F(u_2) = \set{d_5, d_6, d_7}$, giving an MNI of 3
(\cref{fig:dataGraphMapping5}). 

% According \cite{elseidy:2014:grami}, 

% Let $s_1,\ldots,s_m$ be a set of isomorphisms of subgraph 

% A \ac{mni} method~\cite{bringmann:2008:frequent} is a set of all possible mappings $M$ from $P$ to $D$. For instance, the \ac{mni} of $P_1$ is \cref{fig:dataGraphMapping5}.

\changed{
\subsubsection{Fractional-Score~\cite{yuan:2023:tfsm}}
A Fractional-Score based method is an extension of \ac{mni}, which uses fractional scores to reduce overestimation. The fractional score is determined based on the contribution of each vertex to its neighbors. For instance, in \cref{fig:example}, for data graph $D$ and pattern graph, $P_1$ the contribution of $d_5$ to $d_1$ is $1/2$, this is because there are two vertices with the same label that $d_5$ can equally contribute to. With this value, the total contribution of each vertex to a specific label is calculated to be $3$. However, this value is still more than the \ac{mis} value of 2. Thus, the fractional score method still overestimates the support value.
}

% As a more suitable metric for Subgraph Mining, we propose to use \ac{mal}. It preserves the crucial independent set property of MIS and provides the flexibility to choose between overestimation or underestimation of patterns (using a slider value). Additionally, \ac{mal} has the advantage of running with similar runtimes as \ac{mni}, making it a practical choice for subgraph mining tasks.

\section{FLEXIS}~\label{sec:flexis}

The \ac{flexis} framework presents an approach for discovering frequent subgraphs within a data graph. 
% This framework is built upon the concept of merging two prevalent $(k-1)$-vertex patterns to potentially identify a $k$-vertex pattern. It introduces a user-defined parameter referred to as the \textit{slider value} to finely regulate the degree of overlap between patterns identified by the \ac{mis} and \ac{mal}. Additionally, it incorporates a pruning technique aimed at decreasing the search space by reducing the number of candidate patterns under consideration.

\subsection{Methodological Contributions}

\subsubsection{Contribution 1}~\label{sec:contri1}
The previous section introduced a new metric \ac{mal} which adopts the best features of \ac{mis} and \ac{mni}. It crucially retains the independence property (i.e., that vertices do not overlap) of \ac{mis} and thus overcomes the overestimation problem of \ac{mni}~\cite{yuan:2023:tfsm}. Like \ac{mni}, \ac{mal} can be computed quickly, whereas computing \ac{mis} is expensive due to its NP-hardness. In addition, \ac{mal} provides a user-controlled parameter $\lambda$ that can be used to tune the accuracy-speed trade-off of the algorithm. The theoretical basis for this is the important approximation result included below.
\begin{theorem}~\label{theo:mal}
Given a pattern graph with $n$ vertices. Let $m$ denote the number of mappings in a {\bf maximal} (mIS) independent set and 
$M$ the number of mappings in a {\bf maximum} independent set. Then $m \le M \le mn$.
\end{theorem}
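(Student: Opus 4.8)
The plan is to split the claim into its two inequalities, which have very different characters. The left inequality $m \le M$ is essentially definitional: a maximal independent set is in particular an independent set, and $M$ is by definition the size of the largest independent set, so its size $m$ cannot exceed $M$. I would dispose of this in a single sentence and devote the rest of the argument to the right inequality.

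The substance lies in $M \le mn$. First I would fix a maximal independent set $S$ with $\abs{S} = m$ and a maximum independent set $T$ with $\abs{T} = M$. The key is a counting argument based on the data vertices consumed by $S$. Since each mapping is an injective function from the $n$-vertex pattern into $D$, every mapping uses exactly $n$ vertices of $D$; and since the mappings in $S$ are pairwise independent, they occupy pairwise disjoint vertex sets. Letting $U$ denote the set of all data vertices appearing in some mapping of $S$, this yields $\abs{U} = mn$.

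The crux of the proof is the following claim: every mapping $t \in T$ must use at least one vertex of $U$. I would establish this by invoking maximality of $S$. If some $t \in T$ used no vertex of $U$, then $t$ would be independent from every mapping in $S$, so $S \cup \set{t}$ would be a strictly larger independent set, contradicting that $S$ is maximal. (The case $t \in S$ is immediate, since such a $t$ trivially uses vertices in $U$.) This is the step I expect to be the main obstacle, in the sense that it is precisely where the definition of \emph{maximal}---as opposed to merely independent---is doing all the work; the inequality fails for an arbitrary independent set.

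Finally I would finish with a second counting step. Because $T$ is itself an independent set, distinct mappings in $T$ occupy disjoint sets of data vertices. Assigning to each $t \in T$ one of its vertices lying in $U$ (which exists by the claim) therefore defines an injection from $T$ into $U$, whence $M = \abs{T} \le \abs{U} = mn$. Combining the two bounds gives $m \le M \le mn$, as required.
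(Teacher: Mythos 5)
Your proposal is correct and is essentially the paper's argument: both rest on the observation that the $m$ mappings of the maximal set occupy exactly $mn$ pairwise-disjoint data vertices, and that maximality forces every mapping of the maximum set to use at least one of these vertices, which bounds $M$ by $mn$ since the maximum set's mappings are themselves disjoint. The only difference is presentational --- the paper runs a single global contradiction from $M > mn$, whereas you prove the key intersection claim directly and conclude via an explicit injection from $T$ into $U$, which is arguably a cleaner packaging of the identical idea.
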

\begin{proof}%[Proof for \cref{theo:mal}]
$m \le M$ follows immediately from the definitions of mIS and MIS. If the second inequality is false, we have $M > mn$.
The $m$ mappings of the mIS together comprise $mn$ distinct vertices in the data graph due to the independence property 
of mIS. Each of these $mn$ vertices from the mIS can be allocated to 
a different mapping of MIS. However, since $M > mn$, it follows that there is at least one mapping $\mu$ in MIS whose data graph 
vertices do not include any of the $mn$ mIS vertices. Then, $\mu$ can be added to the mIS, contradicting that it was maximal.
\end{proof}
We show that the bound in \cref{theo:mal} is tight using the example in \cref{fig:malVisual} with pattern $P_m$ containing $n~=~4$ vertices.  An MIS computation yields $M = 4$ mappings shown in blue, green, yellow, and orange. A \ac{mal} computation either picks the same 4 mappings as MIS or picks the single mapping (i.e., $m = 1$) shown in red. The latter scenario gives $mn = M = 4$, yielding a tight bound.
\begin{figure}[htb]
    \centering
    \begin{subfigure}{0.66\linewidth}
      \centering
      \begin{tikzpicture}

  \def\smallerScale{18}
  \def\biggerScale{22}
  \def\smallerWidth{4}
  \def\biggerWidth{6}

  \node[redNode] (d2) {$d_2$};
  \node[redNode] (d1) [above of=d2] {$d_1$};
  \node[redNode] (d3) [below right of=d2] {$d_3$};
  \node[redNode] (d4) [below left of=d2] {$d_4$};
  \node[redNode] (d5) [right of=d1] {$d_5$};
  \node[redNode] (d6) [above right of=d5] {$d_6$};
  \node[redNode] (d7) [below right of=d5] {$d_7$};
  \node[redNode] (d8) at (160:1.1) {$d_8$};
  \node[redNode] (d9) [above of=d8] {$d_9$};
  \node[redNode] (d10) [left of=d8] {$d_{10}$};
  \node[redNode] (d11) [below right of=d3] {$d_{11}$};
  \node[redNode] (d12) [right of=d11] {$d_{12}$};
  \node[redNode] (d13) [below of=d11] {$d_{13}$};
  \node[redNode] (d14) [below left of=d4] {$d_{14}$};
  \node[redNode] (d15) [below of=d14] {$d_{15}$};
  \node[redNode] (d16) [left of=d14] {$d_{16}$};

  \draw[singleEdge] (d2) -- (d1);
  \draw[singleEdge] (d2) -- (d3);
  \draw[singleEdge] (d2) -- (d4);
  % connect d5 to d1, d6, d7
  \draw[singleEdge] (d5) -- (d1);
  \draw[singleEdge] (d5) -- (d6);
  \draw[singleEdge] (d5) -- (d7);
  % connect d8 to d2, d9, d10
  \draw[singleEdge] (d8) -- (d2);
  \draw[singleEdge] (d8) -- (d9);
  \draw[singleEdge] (d8) -- (d10);
  % connect d11 to d3, d12, d13
  \draw[singleEdge] (d11) -- (d3);
  \draw[singleEdge] (d11) -- (d12);
  \draw[singleEdge] (d11) -- (d13);
  % connect d14 to d4, d15, d16
  \draw[singleEdge] (d14) -- (d4);
  \draw[singleEdge] (d14) -- (d15);
  \draw[singleEdge] (d14) -- (d16);

  \begin{scope}[on background layer]
    \node[circularNode, fill=blue!40, scale=\biggerScale] at (d1) {};
    \node[circularNode, fill=blue!40, scale=\biggerScale] at (d5) {};
    \node[circularNode, fill=blue!40, scale=\biggerScale] at (d6) {};
    \node[circularNode, fill=blue!40, scale=\biggerScale] at (d7) {};
    \draw[singleEdge, fill=blue!40, draw=blue!40, line width=\biggerWidth] (d5) -- (d1);
    \draw[singleEdge, fill=blue!40, draw=blue!40, line width=\biggerWidth] (d5) -- (d6);
    \draw[singleEdge, fill=blue!40, draw=blue!40, line width=\biggerWidth] (d5) -- (d7);

    \node[circularNode, fill=green!40, scale=\biggerScale] at (d2) {};
    \node[circularNode, fill=green!40, scale=\biggerScale] at (d8) {};
    \node[circularNode, fill=green!40, scale=\biggerScale] at (d9) {};
    \node[circularNode, fill=green!40, scale=\biggerScale] at (d10) {};
    \draw[singleEdge, fill=green!40, draw=green!40, line width=\biggerWidth] (d8) -- (d2);
    \draw[singleEdge, fill=green!40, draw=green!40, line width=\biggerWidth] (d8) -- (d9);
    \draw[singleEdge, fill=green!40, draw=green!40, line width=\biggerWidth] (d8) -- (d10);

    \node[circularNode, fill=yellow!40, scale=\biggerScale] at (d3) {};
    \node[circularNode, fill=yellow!40, scale=\biggerScale] at (d11) {};
    \node[circularNode, fill=yellow!40, scale=\biggerScale] at (d12) {};
    \node[circularNode, fill=yellow!40, scale=\biggerScale] at (d13) {};
    \draw[singleEdge, fill=yellow!40, draw=yellow!40, line width=\biggerWidth] (d11) -- (d3);
    \draw[singleEdge, fill=yellow!40, draw=yellow!40, line width=\biggerWidth] (d11) -- (d12);
    \draw[singleEdge, fill=yellow!40, draw=yellow!40, line width=\biggerWidth] (d11) -- (d13);

    \node[circularNode, fill=orange!40, scale=\biggerScale] at (d4) {};
    \node[circularNode, fill=orange!40, scale=\biggerScale] at (d14) {};
    \node[circularNode, fill=orange!40, scale=\biggerScale] at (d15) {};
    \node[circularNode, fill=orange!40, scale=\biggerScale] at (d16) {};
    \draw[singleEdge, fill=orange!40, draw=orange!40, line width=\biggerWidth] (d14) -- (d4);
    \draw[singleEdge, fill=orange!40, draw=orange!40, line width=\biggerWidth] (d14) -- (d15);
    \draw[singleEdge, fill=orange!40, draw=orange!40, line width=\biggerWidth] (d14) -- (d16);

    \node[circularNode, fill=red!40, scale=\smallerScale] at (d2) {};
    \node[circularNode, fill=red!40, scale=\smallerScale] at (d1) {};
    \node[circularNode, fill=red!40, scale=\smallerScale] at (d3) {};
    \node[circularNode, fill=red!40, scale=\smallerScale] at (d4) {};
    \draw[singleEdge, fill=red!40, draw=red!40, line width=\smallerWidth] (d2) -- (d1);
    \draw[singleEdge, fill=red!40, draw=red!40, line width=\smallerWidth] (d2) -- (d3);
    \draw[singleEdge, fill=red!40, draw=red!40, line width=\smallerWidth] (d2) -- (d4);
  \end{scope}

\end{tikzpicture}
      \caption{Data graph $D_m$}
    \end{subfigure}
    \begin{subfigure}{0.32\linewidth}
      \centering
      \begin{tikzpicture}
  \node[redNode] (u2) {$u_2$};
  \node[redNode] (u1) [above of=u2] {$u_1$};
  \node[redNode] (u3) [below right of=u2] {$u_3$};
  \node[redNode] (u4) [below left of=u2] {$u_4$};

  \draw[singleEdge] (u1) -- (u2);
  \draw[singleEdge] (u2) -- (u3);
  \draw[singleEdge] (u2) -- (u4);
\end{tikzpicture}
      \caption{Pattern graph $P_m$}
    \end{subfigure}
    \caption{Maximal independent set visualization}
    \label{fig:malVisual}
    \Description{}
    red: $\set{d_1,d_2,d_3,d_4}$, blue: $\set{d_1,d_5,d_6,d_7}$,
    green: $\set{d_2,d_8,d_9,d_{10}}$, yellow: $\set{d_3,d_{11},d_{12},d_{13}}$, orange: $\set{d_4,d_{14},d_{15},d_{16}}$
\end{figure}
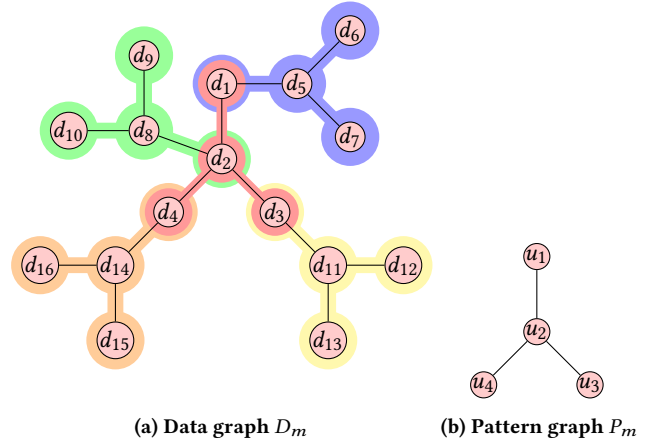

In classical FSM, patterns that occur at least $\sigma$ (the support) times in the data graph are considered frequent. We
instead consider a pattern to be frequent if it occurs at least $\tau$ times. 
\begin{equation}
\tau = \lfloor \sigma (1 - 1/n) \lambda + \sigma/n \rfloor
\label{eqn:tau}
\end{equation}
where $n$ denotes the number of vertices in the pattern and the user-defined parameter $\lambda$ is in the range $[0, 1]$. Notice that when $\lambda = 1$, $\tau = \sigma$ and when $\lambda = 0$, $\tau = \sigma /n$. The choice of $\sigma/ n$ as one of the endpoints results from the second inequality of Theorem~\ref{theo:mal}, which shows that \ac{mal} approximates \ac{mis} to within a factor of $n$. Relative to the ``gold standard" \ac{mis} metric, $\lambda = 1$ guarantees that there will be no false positives while $\lambda = 0$ guarantees there will be no false negatives. A $\lambda$ in between trades off 
between these two scenarios. We discuss these with some examples below.

The number of occurrences of $P_1$ in $D$ (Figure~\ref{fig:example}) depends on the
metric: e.g., \ac{mis} gives 2 (\cref{fig:dataGraphMapping4}), while \ac{mni} gives 3 (\cref{fig:dataGraphMapping5}).
If run to completion, \ac{mal} gives either 1 (\cref{fig:dataGraphMapping3}) or 2 (\cref{fig:dataGraphMapping4}).
Our implementation of \ac{mal} allows us to terminate when a given $\tau$ is reached. Thus, if $\tau =1$, \ac{mal} could
return the match shown in \cref{fig:dataGraphMapping1} or \cref{fig:dataGraphMapping2} in addition to
\cref{fig:dataGraphMapping3}.
Now, assume $\lambda = 1$ so that $\tau = \sigma$. If $\sigma = 3$, $P_1$ is not considered frequent under \ac{mis} and \ac{mal}, but
is frequent under \ac{mni} (which allows overlapping vertices). This example illustrates the additional pruning that 
occurs when using \ac{mal} relative to \ac{mni}.
If $\lambda = 1$ as before, but $\sigma = 2$ giving $\tau =2$, \ac{flexis} gives either \cref{fig:dataGraphMapping3} and declares $P_1$ infrequent or \cref{fig:dataGraphMapping4} and correctly declares $P_1$ frequent. However, if 
$\sigma = 2$ and $\lambda = 0.25$, $\tau = 1$ for a size-3 pattern. Now, \ac{flexis} terminates after finding \cref{fig:dataGraphMapping1}, \cref{fig:dataGraphMapping2}, or \cref{fig:dataGraphMapping3}. Here, \ac{flexis} 
correctly determines that $P_1$ is frequent. 

When $\lambda = 0$, \ac{flexis} overestimates the number of 
frequent patterns. This approximates the behavior of $\ac{mni}$, which also overestimates the number of frequent 
patterns, but never reports an infrequent pattern. However, early termination allows \ac{flexis} to run faster than 
\ac{mis}.

\added{
This versatility grants users the flexibility to fine-tune the trade-off between accuracy and processing time, tailoring the metric to suit the specific needs of their applications. It is important to emphasize that our approach achieves performance levels akin to both \ac{mni} and MIS but at significantly reduced computational time, maintaining nearly equivalent accuracy when compared to traditional methods.
}

\subsubsection{Contribution 2}
The concept of merging two $k-1$-size frequent patterns to find possible $k$-size frequent patterns has been utilized
previously~\cite{kuramochi:2005:finding}. In that work, the size of the pattern graph was defined as the number of edges, 
whereas we define size as the number of vertices. (A graph with 3 vertices and 2 edges thus has size 2 in their method and 
size 3 in ours.) This distinction is important because the vertex-based approach prunes the search space more effectively, resulting in better performance. 

\ac{flexis} starts by identifying all size-2 patterns (edges) in the data graph. It then uses a matcher (vf3matcher) to
determine which size-2 patterns are frequent. These are merged to create candidate patterns of size 3. The candidate
patterns are evaluated to identify the frequent ones, which are then merged, etc. We illustrate this using size-3 patterns $P_1$ and $P_2$ (from Figure~\ref{fig:example}) and merging them to form size-4 candidate pattern graphs (\cref{fig:mergeProcessExamplev2}). 

\begin{figure}[ht]
  \centering
  \begin{tikzpicture}

  % \newcommand*{\xMin}{0}%
  % \newcommand*{\xMax}{6}%
  % \newcommand*{\yMin}{-3}%
  % \newcommand*{\yMax}{3}%
  
  %   \foreach \i in {\xMin,...,\xMax} {
  %       \draw [very thin,gray] (\i,\yMin) -- (\i,\yMax)  node [below] at (\i,\yMin) {$\i$};
  %   }
  %   \foreach \i in {\yMin,...,\yMax} {
  %       \draw [very thin,gray] (\xMin,\i) -- (\xMax,\i) node [left] at (\xMin,\i) {$\i$};
  %   }

  \node[redNode] (u1) at (0, 0) {};
  \node[redNode] (u2) at (1.0, 0) {};
  \node[blueNode] (u4) at (0.5, 1.0) {};
  \draw[doubleArrow] (u1) -- (u2);
  \draw[doubleArrow] (u2) -- (u4);
  \draw[doubleArrow] (u1) -- (u4);

  \node[blueNode] (u1) at (0, -0.75) {};
  \node[blueNode] (u2) at (1.0, -0.75) {};
  \node[redNode] (u3) at (0.5, -1.75) {};
  \draw[doubleArrow] (u1) -- (u2);
  \draw[doubleArrow] (u2) -- (u3);
  \draw[doubleArrow] (u1) -- (u3);

  % \begin{scope}[shift={(3.0, 2.25)}]
  %   \node[blueNode, dotted] (u1) at (0.0, 0.0) {};
  %   \node[redNode] at (-0.5, -0.5) {};
  %   \node[redNode] at (+0.5, -0.5) {};
  %   \draw[doubleArrow] (-0.35, -0.5) -- (+0.35, -0.5);
  % \end{scope}

  \begin{scope}[shift={(3.0, 1.0)}]
    \node[redNode, dotted] (u1) at (0.0, 0.0) {};
    \node[redNode] at (-0.5, -0.5) {};
    \node[blueNode] at (+0.5, -0.5) {};
    \draw[doubleArrow] (-0.35, -0.5) -- (+0.35, -0.5);
  \end{scope}

  \begin{scope}[shift={(3.0, -0.25)}]
    \node[redNode, dotted] (ud) at (0.0, 0.0) {};
    \node[blueNode] at (-0.5, -0.5) {};
    \node[blueNode] at (+0.5, -0.5) {};
    \draw[doubleArrow] (-0.35, -0.5) -- (+0.35, -0.5);
  \end{scope}

  \begin{scope}[shift={(3.0, -1.5)}]
    \node[blueNode, dotted] (ud) at (0.0, 0.0) {};
    \node[redNode] at (-0.5, -0.5) {};
    \node[blueNode] at (+0.5, -0.5) {};
    \draw[doubleArrow] (-0.35, -0.5) -- (+0.35, -0.5);
  \end{scope}

  % \begin{scope}[shift={(5.5, 2.5)}]
  %   \node[redNode] (u1) at (-0.5, 0.0) {};
  %   \node[redNode] (u2) at (+0.5, 0.0) {};
  %   \node[blueNode] (u4) at (0.0, +0.5) {};
  %   \node[blueNode] (u3) at (0.0, -0.5) {};
  %   \draw[doubleArrow] (u1) -- (u2);
  %   \draw[doubleArrow] (u1) -- (u3);
  %   \draw[doubleArrow] (u1) -- (u4);
  %   \draw[doubleArrow] (u2) -- (u3);
  %   \draw[doubleArrow] (u2) -- (u4);
  % \end{scope}
  
  \begin{scope}[shift={(5.5, 1.0)}]
    \node[redNode] (u1) at (-0.5, 0.0) {};
    \node[blueNode] (u2) at (+0.5, 0.0) {};
    \node[redNode] (u4) at (0.0, +0.5) {};
    \node[redNode] (u3) at (0.0, -0.5) {};
    \draw[doubleArrow] (u1) -- (u2);
    \draw[doubleArrow] (u1) -- (u3);
    \draw[doubleArrow] (u1) -- (u4);
    % \draw[doubleArrow] (u3) -- (u4);
    \draw[doubleArrow] (u2) -- (u3);
    \draw[doubleArrow] (u2) -- (u4);
  \end{scope}

  \begin{scope}[shift={(5.5, -0.5)}]
    \node[redNode] (u1) at (-0.5, 0.0) {};
    \node[blueNode] (u2) at (+0.5, 0.0) {};
    \node[redNode] (u4) at (0.0, +0.5) {};
    \node[blueNode] (u3) at (0.0, -0.5) {};
    \draw[doubleArrow] (u1) -- (u3);
    \draw[doubleArrow] (u1) -- (u4);
    \draw[doubleArrow] (u2) -- (u3);
    \draw[doubleArrow] (u2) -- (u4);
    \draw[doubleArrow] (u1) -- (u2);
  \end{scope}

  \begin{scope}[shift={(5.5, -2.0)}]
    \node[redNode] (u1) at (-0.5, 0.0) {};
    \node[blueNode] (u2) at (+0.5, 0.0) {};
    \node[redNode] (u4) at (0.0, +0.5) {};
    \node[blueNode] (u3) at (0.0, -0.5) {};
    \draw[doubleArrow] (u1) -- (u2);
    \draw[doubleArrow] (u1) -- (u3);
    \draw[doubleArrow] (u1) -- (u4);
    \draw[doubleArrow] (u2) -- (u3);
    \draw[doubleArrow] (u2) -- (u4);
    \draw[doubleArrow] (u3) -- (u4);
  \end{scope}

  \coordinate (a) at (1.2, 0.75);
  \coordinate (b) at (2.3, 1.75);
  \coordinate (c1) at (1.4, 0.05);
  \coordinate (d1) at (2.3, 0.50);
  \coordinate (c) at (1.5, -0.75);
  \coordinate (d) at (2.3, -0.65);
  \coordinate (e) at (1.2, -1.35);
  \coordinate (f) at (2.0, -1.75);

  \coordinate (g) at (3.8, 0.60);
  \coordinate (h) at (4.8, 0.85);
  \coordinate (i) at (3.7, -1.70);
  \coordinate (j) at (4.8, -0.5);
  \coordinate (i2) at (3.8, -0.65);
  \coordinate (j2) at (4.8, -1.65);
  \coordinate (k1) at (3.8, 0.50);
  \coordinate (l1) at (4.8, -0.35);
  \coordinate (k2) at (3.8, 1.55);
  \coordinate (l2) at (4.8, 1.15);
  \coordinate (k) at (3.8, 0.35);
  \coordinate (l) at (4.9, -1.45);
  \coordinate (k3) at (3.8, -1.9);
  \coordinate (l3) at (4.8, -1.9);

  % \coordinate (k4) at (3.8, 1.85);
  % \coordinate (l4) at (4.8, 2.5);

  % \coordinate (m) at (5.5, -0.4);
  % \coordinate (n) at (5.5, -0.6);

  % \coordinate (o) at (3.8, 0.3);
  % \coordinate (p) at (5.0, -0.9);
  % \coordinate (q) at (3.7, -0.60);
  % \coordinate (r) at (4.9, -1.2);

  % \draw[flowArrow] (a) -- (b);
  \draw[flowArrow] (c1) -- (d1);
  \draw[flowArrow] (c) -- (d);
  \draw[flowArrow] (e) -- (f);
  
  \draw[flowArrow] (g) -- (h) node[midway, sloped, above, text=red] {$x2$};
  \draw[flowArrow] (i) -- (j);
  \draw[flowArrow] (i2) -- (j2);
  \draw[flowArrow] (k) -- (l);
  \draw[flowArrow] (k1) -- (l1);
  % \draw[flowArrow] (k2) -- (l2);
  \draw[flowArrow] (k3) -- (l3);

  % \draw[flowArrow] (k4) -- (l4) node[midway, sloped, above, text=red] {$x2$};

  % \draw[flowArrow] (m) -- (n);

  % \draw[flowArrow] (o) -- (p);
  % \draw[flowArrow] (q) -- (r);
  
  \node at (0.5, -3.0) {(k-1)-size pattern};
  \node at (3.0, -3.0) {CoreGraphs};
  \node at (5.5, -3.0) {k-size pattern};

\end{tikzpicture}
  \caption{Merge process}
  \label{fig:mergeProcessExamplev2}
    \Description{}
\end{figure}
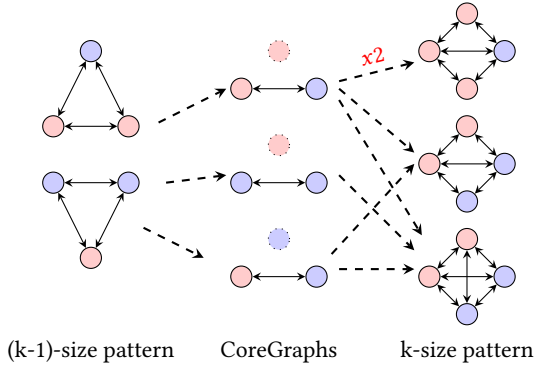

To gain intuition about why this approach prunes more efficiently relative to edge-based merging, consider a data graph 
with $40$ vertices and $\tau = 10$. No frequent pattern can have more than $40/10 = 4$ vertices because \ac{mal} does not permit overlap. Patterns of size 5 or more do not need to be considered and can be pruned. \ac{flexis} will not require more than three merging steps to generate candidates ($2 \rightarrow 3$, $3 \rightarrow 4$, $4 \rightarrow 5$). Edge-based merging~\cite{kuramochi:2005:finding} requires more merging steps and incurs more runtime. 

Other methods such as GraMi~\cite{elseidy:2014:grami} and T-FSM~\cite{yuan:2023:tfsm} do not use merging. but instead,  
extend frequent patterns by adding an edge to generate larger patterns. Merging two frequent graphs generates fewer candidates, making \ac{flexis} more efficient than GraMi and T-FSM.

\subsubsection{Summary}
Our method has the following advantages,
\begin{enumerate*}
  \item \ac{mal} is closely related to the actual frequent pattern count metric \ac{mis}, while providing comparable or better speedup than \ac{mni}.
  \item \ac{mal} is the first metric that allows the user to control the accuracy-speed trade-off.
  \item \ac{flexis} is the first method that uses the merging of two frequent patterns based on vertices rather than edges, resulting in a faster search process
\end{enumerate*}

% \begin{figure*}[htb]
%   \centering
%   \runningDatagraph{0}
%   \runningDatagraph{1}
%   \runningDatagraph{2}
%   % \runningDatagraph{3}
%   % \runningDatagraph{4}
%   \caption{Running example}
% \end{figure*}

\subsection{Algorithmic flow of FLEXIS}~\label{sec:algorithm}

% The flow of \ac{flexis} is illustrated in \cref{alg:mining}. The main algorithm has two main parts,
% \begin{enumerate*}[label=\roman*)]
%   \item \textbf{Generation Step}: This phase generates candidate patterns either directly from the data graph $G$ or based on frequent patterns obtained in the previous step ($FP$).
%   \item \textbf{Metric Step}: Subsequently, candidate patterns are assessed for their frequency within the data graph $G$
%   \end{enumerate*}

  \ac{flexis} outlined in \cref{alg:mining} consists of the following steps:

  \begin{enumerate}
      \item Candidate Pattern Generation (Generation Step): Initially, \ac{flexis} generates all size-2 candidate patterns (i.e., edges) ($CP$) from the data graph $G$ (\cref{alg:mining:edges}).
      \item Finding Frequent Patterns (Metric Step): All frequent patterns ($FP$) within the candidate set $CP$ are found by utilizing a modified version of Vf3Light~\cite{carletti:2018:vf3,carletti:2019:vf3}, with our custom metric \ac{mal} integrated, to ensure non-overlapping vertices (\cref{alg:mining:matcher}). Note that the original implementation of Vf3Light uses the \ac{mni} metric.
      \item Candidate Pattern Expansion (Next Generation Step): Subsequently, \ac{flexis} merges frequent patterns ($FP$) to generate candidate patterns ($CP$) that are one vertex larger (\cref{alg:mining:merger}).
      \item Steps $2$ and $3$  continue until all frequent patterns are exhausted (\cref{alg:mining:loop}).
  \end{enumerate}

\begin{algorithm}
  \small
  \caption{\ac{flexis}}
  \label{alg:mining}
  \begin{algorithmic}[1]
  \Require {$G, \sigma, \lambda$}
  \Ensure {freqPattList}
  \Procedure{mining}{$G, \sigma$}
  \State {$CP \gets \FunctionCall{edges}{G}$} ~\label{alg:mining:edges}
  \State {$\text{freqPattList} \gets \emptyset$}
  % \State {$\text{step} \gets 2$}
  \sWhile{$\Not{} CP.\FunctionCall{empty}{}$ } {~\label{alg:mining:loop}
    \State determine $\tau$ from $CP, \sigma, \lambda$ using Eqn~\ref{eqn:tau}~\label{line:mining:lerp}
    \State $FP \gets \FunctionCall{vf3Matcher}{G, CP, \tau}$~\label{alg:mining:matcher}
    \State $\text{freqPattList} \gets \text{freqPattList} \cup FP$
    % \State $\text{freqPattMap[step]} \gets FP$
    % \sIf{ $\text{step} \geq \text{maxStep}$ } {~\label{alg:mining:stop}
    %     \Return freqPattMap
    % }
    % \State $\text{step} \gets \text{step} + 1$
    \State $CP \gets \FunctionCall{generateNewPatterns}{FP}$~\label{alg:mining:merger}
  }
  \State \Return freqPattList
  \EndProcedure
\end{algorithmic}

\end{algorithm}

We explain in detail our approach in the Generation and the Metric Steps, in the following sections. 

\subsubsection{Generation Step}

\Call{generateNewPatterns}{} (\cref{alg:generate:New:Patterns:nonclique}) combines $(k-1)$-vertex frequent patterns to form $k$-vertex candidate patterns. \cref{alg:gen:nc:cg:gen} of the function first computes core groups (\cref{sec:def}) from the input set $P^{(k-1)}$ consisting of all $(k-1)$-vertex frequent  patterns. Each core group identified by $cgID$ is processed in \cref{alg:gen:nc:cg:iter1} and each pair of core graphs $C_1$ and $C_2$ in the core group $cgID$ is considered in \cref{alg:gen:nc:cg:iter2}. 
The next step is to merge pairs of core graphs $C_1$ and $C_2$ (\cref{alg:gen:nc:ma}): recall that all core graphs (excluding their marked vertices) 
in a core group are isomorphic; each core graph can be denoted by a common $(k-2)$-vertex component $\Gamma$ and a disconnected marked vertex. Core graphs 
differ from each other in that their marked vertices may have different labels and may be attached to the vertices in $\Gamma$ in different ways. The merge step adds the marked vertices from $C_1$ and $C_2$ to $\Gamma$ to get a $k$-vertex pattern. However, to generate all possible $k$-vertex candidate patterns, 
we must explicitly consider every automorphism of $\Gamma$ (\cref{alg:gen:nc:fa,alg:gen:nc:pa}). This process generates all non-clique candidate patterns. A modification to this process (merging in a third graph) generates all clique candidate patterns (\cref{alg:gen:nc:gnc}) and adds them to the set of candidate $P^k$ patterns. After generating all the $P_k$, we use \Call{RemoveDuplicates}{} to remove duplicate patterns in the candidate generation using Bliss's canonical form (\cref{alg:line:removeDuplicates}).
\begin{algorithm}
  \small   
  \caption{Candidate Pattern Generation}
  \label{alg:generate:New:Patterns:nonclique}
%  \input{algorithms/generateNewPatterns}

%All core graphs (excluding their marked vertices) in a core group are isomorphic and can be denoted by $\Gamma$. They differ from each other in that their marked vertices can have different labels and can be attached to the vertices in $\Gamma$ in different ways. To ensure all candidate patterns are generated, we must consider all automorphisms of $\Gamma$ and examine how they interact with the marked vertices. 

\begin{algorithmic}[1]
  \Require {Set of size-$(k-1)$ pattern graphs $P^{(k-1)}$}
  \Ensure {Set of size-$k$ pattern graphs $P^{k}$}
  \Function{generateNewPatterns}{$ P^{(k-1)} $}
  \State  {\em cgroups} $\gets \FunctionCall{CoreGroup}{P^{(k-1)}} $ \label{alg:gen:nc:cg:gen}
    \State $ P^k \gets \emptyset $
    \sForAll{ $cgID \in$ {\em cgroups} }{~\label{alg:gen:nc:cg:iter1}
      \sForAll{$[C_i, C_j] \in \Call{pairs}{cgID}$}{~\label{alg:gen:nc:cg:iter2}
        \State $\text{autoList} \gets \Call{findAutomorphism}{cgID, C_i, C_j)}$~\label{alg:gen:nc:fa}
        \sForAll{ $\alpha \gets \text{autoList} $ } {~\label{alg:gen:nc:pa}
          \State $\text{P} \gets \Call{merge}{C_i, C_j, \alpha} $~\label{alg:gen:nc:ma}
          \State $P^k.\Call{add}{P}$
          \State $P^k \gets P^k \cup \Call{generateCliques}{P, C_i, C_j, cgroups}$~\label{alg:gen:nc:gnc}
        }
      }
    }
    \State $\Call{RemoveDuplicates}{P^k}$~\label{alg:line:removeDuplicates}
    \State \Return $P^k$
  \EndFunction
\end{algorithmic}
\end{algorithm}
%\begin{algorithm}
%  \small
%  \caption{merging CoreGraphs to generate new patterns {\color{red} do we need this function?}}
%  \label{alg:mergecore:nonclique}
%  \input{algorithms/merge}
%  \begin{algorithmic}[1]
%  \Require {CoreGraph cg1, cg2, permutation perm}
%  \Ensure {PatternGraph np}
%  \Procedure{merge}{\textit{cg1, cg2, perm}}
%  \State {$np \gets$ attach marked vertex to cg1} \label{algo:line:cg:marked}
%  \State {$cgm \gets $ cg2's marked label} \label{algo:line:other:marked}
%  \State {$np \gets $ apply perm to np} \label{algo:line:merge:perm}
%  \State {$np \gets $ addEdges from np to cgm} \label{algo:line:other:addedge}
%  % \sForAll{ $topo \in cg2.topo$ } {
%  %   \State $w_{new}^o \gets cg2.canon[cg2.topo]$~\label{algo:merge:line:canon}
%  %   \State $w_{alt}^o \gets perm[w_{new}^o]$~\label{algo:merge:line:perm}
%  %   \State $w_{new} \gets cg1.ci[w_{alt}^o]$~\label{algo:merge:line:inverse}
%  %   \State $np.\Call{addEdge}{w_{new}, cg2.marked}$ \label{algo:line:other:addedge}
%  % }
%  \State \Return $np$
%  \EndProcedure
%\end{algorithmic}
%\end{algorithm}

\cref{fig:mergeProcessExamplev2} depicts \Call{generateNewPatterns}{}, the process of taking 3-vertex patterns, decomposing them into core graphs, and then merging them to form 4-vertex patterns. Non-clique patterns (such as the top two 4-vertex patterns in \cref{fig:mergeProcessExamplev2}) are formed by merging two core graphs, whereas clique patterns (such as the bottom 4-vertex pattern in \cref{fig:mergeProcessExamplev2}) are obtained by additionally merging in a third core graph. The \Call{merge}{} operation (Line 8) is a key step within \Call{generateNewPatterns}{}. We illustrate it below with examples:
\paragraph{Example 1 (simple merging)}
\cref{fig:merge:patternResultant} shows three examples that use core graphs derived from patterns $P_1$ and $P_2$ (\cref{fig:patt1,fig:patt2}). In \cref{fig:merged:pattern:c1u1c1u1}, core graph $C_1^{u_1}$ is merged with itself. Here, two copies of the marked vertex $u_1$ are reattached to
the core consisting of vertices $u_2$ and $u_3$. \cref{fig:merged:pattern:c1u2c1u2} similarly shows $C_1^{u_2}$ being merged with itself. Two copies 
of $u_2$ are reattached to the core ($u_1$ and $u_3$). Finally, \cref{fig:mergePatt3} shows the result of merging $C_1^{u_3}$ and $C_2^{u_4}$. 
\begin{figure}[htb]
  \centering
  \begin{subfigure}{0.30\linewidth}
    \centering
    \begin{tikzpicture}
  \node[redNode] (u1) at (0, 0) {$u_1$};
  \node[blueNode] (u2) at (1, 0) {$u_2$};
  \node[redNode] (u3) at (0, 1) {$u_3$};
  \node[redNode] (u4) at (1, 1) {$u_1$};
  \draw[doubleArrow] (u1) -- (u2);
  \draw[doubleArrow] (u2) -- (u3);
  \draw[doubleArrow] (u1) -- (u3);
  \draw[doubleArrow] (u3) -- (u4);
  \draw[doubleArrow] (u2) -- (u4);
\end{tikzpicture}
    \caption{$C_1^{u_1} \cup C_1^{u_1}$}
    \label{fig:merged:pattern:c1u1c1u1}
  \end{subfigure}
  \begin{subfigure}{0.30\linewidth}
    \centering
    \begin{tikzpicture}
  \node[redNode] (u1) at (0, 0) {$u_1$};
  \node[blueNode] (u2) at (1, 0) {$u_2$};
  \node[redNode] (u3) at (0, 1) {$u_3$};
  \node[blueNode] (u4) at (1, 1) {$u_2$};
  \draw[doubleArrow] (u1) -- (u2);
  \draw[doubleArrow] (u2) -- (u3);
  \draw[doubleArrow] (u1) -- (u3);
  \draw[doubleArrow] (u3) -- (u4);
  \draw[doubleArrow] (u1) -- (u4);
\end{tikzpicture}
    \caption{$C_1^{u_2} \cup C_1^{u_2}$}
    \label{fig:merged:pattern:c1u2c1u2}
  \end{subfigure}
  \begin{subfigure}{0.30\linewidth}
    \centering
    \begin{tikzpicture}
  \node[redNode] (u1) at (0, 0) {$u_1$};
  \node[blueNode] (u2) at (1, 0) {$u_2$};
  \node[redNode] (u3) at (0, 1) {$u_3$};
  \node[blueNode] (u4) at (1,1) {$u_4$};
  \draw[doubleArrow] (u1) -- (u2);
  \draw[doubleArrow] (u2) -- (u3);
  \draw[doubleArrow] (u1) -- (u3);
  \draw[doubleArrow] (u2) -- (u4);
  \draw[doubleArrow] (u1) -- (u4);
\end{tikzpicture}
    \caption{$C_1^{u_3} \cup C_2^{u_4}$}
    \label{fig:mergePatt3}
  \end{subfigure}
  
  \caption{$\cup$ refers to merging to corresponding CoreGraphs}
  \label{fig:merge:patternResultant}
    \Description{}
\end{figure}
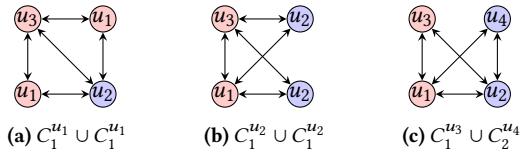

\paragraph{Example 2 (merging with automorphisms)}~\label{sec:gen:perm}
Next, we illustrate the more complicated scenario of merging with automorphisms in \cref{fig:trailedTriangle}. \cref{fig:tailTri} shows the 
4-vertex pattern $P$. We will merge core graph $C^{u_4}$ (obtained from $P$ by disconnecting $u_4$ from $u_2$) with itself. 
\cref{fig:mergeTri1} shows the simple merge where two copies of $u_4$ are reattached to $u_2$. 

However, $\Gamma$, the triangle obtained by deleting $u_4$ from $P$ has an automorphism $\alpha$ that maps $u_2$ and $u_3$ to each other 
(because both have red labels) and $u_1$ to itself. This gives a second merged graph (\cref{fig:mergeTri1}) obtained by reattaching $u_4$ 
to $u_2$ and attaching a copy of $u_4$ to $u_3 = \alpha(u_2)$ as a result of the automorphism (we call this Case 1). If $u_4$ had instead originally 
been connected to $u_1$ (the sole blue vertex in $P$), the automorphism would not be used since it maps $u_1$ to itself (we call this Case 2). 
\Call{findAutomorphism}{} (called in Line 6 of \Call{generateNewPatterns}{} and shown in \cref{alg:core:findautomorphism}) handles these
intricacies. A key point is that \Call{findAutomorphism}{} does not merely return automorphisms of $\Gamma$. Instead, it additionally 
examines the nature of the connection of the marked vertex to $\Gamma$ using \Call{differentTopology}{} to determine whether we are in
Case 1 or Case 2. 

\begin{figure}[htb]
    \centering
  \begin{subfigure}{0.30\linewidth}
    \centering
    \begin{tikzpicture}
  \node[blueNode] (u1) at (0, 0) {$u_1$};
  \node[redNode] (u2) at (1, 0) {$u_2$};
  \node[redNode] (u3) at (0, 1) {$u_3$};
  \node[redNode] (u4) at (1, 1) {$u_4$};
  \draw[doubleArrow] (u1) -- (u2);
  \draw[doubleArrow] (u2) -- (u3);
  \draw[doubleArrow] (u1) -- (u3);
  \draw[doubleArrow] (u2) -- (u4);
\end{tikzpicture}
    \caption{$P$}
    \label{fig:tailTri}
  \end{subfigure}
  \begin{subfigure}{0.30\linewidth}
    \centering
        \begin{tikzpicture}
      \node[blueNode] (u1) at (0, 0) {$u_1$};
      \node[redNode] (u2) at (1, 0) {$u_2$};
      \node[redNode] (u3) at (0, 1) {$u_3$};
      \node[redNode] (u4) at (0.8, 1) {$u_4$};
      \node[redNode] (u5) at (1.4, 1) {$u_4$};
      \draw[doubleArrow] (u1) -- (u2);
      \draw[doubleArrow] (u2) -- (u3);
      \draw[doubleArrow] (u1) -- (u3);
      \draw[doubleArrow] (u2) -- (u4);
      \draw[doubleArrow] (u2) -- (u5);
    \end{tikzpicture}
    \caption{$C^{u_4} \cup C^{u_4}$}
    \label{fig:mergeTri1}
  \end{subfigure}
  \begin{subfigure}{0.30\linewidth}
    \centering
    \begin{tikzpicture}
  \node[blueNode] (u1) at (0, 0) {$u_1$};
  \node[redNode] (u2) at (1, 0) {$u_2$};
  \node[redNode] (u3) at (0, 1) {$u_3$};
  \node[redNode] (u4) at (0.85, 1.0) {$u_4$};
  \node[redNode] (u5) at (1.5, 0.8) {$u_4$};
  \draw[doubleArrow] (u1) -- (u2);
  \draw[doubleArrow] (u2) -- (u3);
  \draw[doubleArrow] (u1) -- (u3);
  \draw[doubleArrow] (u3) -- (u4);
  \draw[doubleArrow] (u2) -- (u5);
\end{tikzpicture}
    \caption{$C^{u_4} \cup \alpha(C^{u_4})$}
    \label{fig:mergeTri2}
  \end{subfigure}
    \caption{Pattern graphs with $k-1$ and $k$ size}
    \label{fig:trailedTriangle}
    \Description{}
\end{figure}
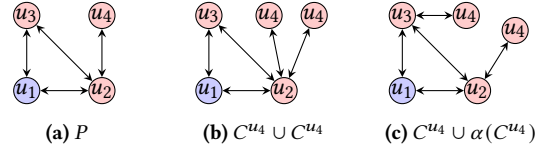

% \paragraph{Other Optimisations and Implementation Details}
% \cref{alg:core:findautomorphism}, \FunctionCall{findAutomorphism}{} uses generators produced by Bliss~\cite{JunttilaKaski:TAPAS:2011:bliss, JunttilaKaski:ALENEX:2007:bliss} in \cref{alg:fa:gen}. Generators can be used to generate automorphisms of the CoreGraph. The algorithm iterates through all the CoreGraphs and generators while checking if different connections with marked vertex are possible (\cref{alg:fa:dt}). If different connections are possible then the algorithm returns all the automorphisms, otherwise, it returns the identity permutation.
% \begin{comment}
% For instance, generator for $C_1^{u_2}$ are shown in \cref{tab:generator:c1}.
% Generators can be used to generate different permutations of the CoreGraph. Different permutations are shown in \cref{tab:permutation:c1}.
% \end{comment}

\begin{comment}
 \begin{table}[hb]
   \caption{generator and permutations of the graph}
   \begin{subtable}[b]{0.45\linewidth}
     \centering
     \caption{generator for $C_1^{u_2}$}
     \label{tab:generator:c1}
     \input{tabular/generator}
   \end{subtable}
   \begin{subtable}[b]{0.45\linewidth}
     \centering
     \caption{permutations of $C_1^{u_2}$}
     \label{tab:permutation:c1}
     \input{tabular/permutation}
   \end{subtable}
  \end{table}
\end{comment}

 \begin{algorithm}[ht]
   \caption{checks if merged in different configs possible}
   \label{alg:core:findautomorphism}
   \begin{algorithmic}[1]
  \Require{CoreGraphs $\set{C_i, C_j}$}
  \Ensure{Possible Permutations (perms)}
  \Function{findAutomorphism}{$icg, \set{C_i, C_j}$}
    \State $generators \gets \Call{Gen}{icg}$ \label{alg:fa:gen}
    \sForAll { $cg \in \set{C_i, C_j}$ } {
      \sForAll { $gen \in generators$ } {
        \State $\text{store marked vertex neighbors } (topo)$
        \sIf{$\Call{differentTopology}{cg, gen, topo}$} { \label{alg:fa:dt}
          \State \Return \textit{set of all automorphisms}
        }
      }
    }
  \State \Return \textit{identity permutation}
  \EndFunction
\end{algorithmic}
 \end{algorithm}

% \cref{alg:core:diffTopo} is used to check if marked vertex can be connected via different topologies.

% Let's consider a CoreGraph $C_n^{u_m}$. The algorithm first converts the CoreGraph into a graph in Bliss's canonical mapping via \cref{tab:canon:c1}. Then, the algorithm transforms the vertices (neighbors of marked vertex) via generators. It then converts from Bliss's canonical form back to CoreGraph via \textbf{I} (\cref{tab:inverse:c1}). After transformations, the algorithm checks if the marked vertex is being connected to the same neighboring vertices or not (\cref{alg:line:dt:return}).

\begin{comment}
\begin{table}[b]
  \centering
  \caption{Canonical and Inv. canonical form from $C_1^{u_2}$ \& to $C_1^{u_2}$}
  \label{tab:canoninverse:c1}
  \begin{subtable}[b]{0.43\linewidth}
    \caption{Canonical form \textbf{C}}
    \label{tab:canon:c1}
    \input{tabular/canon}
  \end{subtable}
  \begin{subtable}[b]{0.55\linewidth}
    \caption{Inv. canonical form \textbf{I}}
    \label{tab:inverse:c1}
    \input{tabular/inverse}
  \end{subtable}
\end{table}
\end{comment}

%\begin{algorithm}[ht]
%   \caption{Generating different permutation of same graph $C_n^{u_m}$}
%   \label{alg:core:diffTopo}
%   \input{algorithms/differentTopology}
%\end{algorithm}

We next provide the lemmas and theorem used in the Generation Step.
%\footnote{The proofs to the lemma and theorem are omitted for brevity and will be added upon acceptance to the Appendix.}

\begin{lemma} \label{lemma:dia}
  Let $G$ be a connected graph with more than $2$ vertices. Let $u$ and $v$ be $2$ vertices of $G$ whose distance from each other is equal to the diameter of $G$ ($d(u, v) = dia(G)$). Then  $u$ and $v$ are \ac{nav} of $G$ ~\cite{kingan:2022:GraphsAndNetworks,Clark:2005:AFirstLookAtGraphTheory}.
\end{lemma}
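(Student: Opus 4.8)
The plan is to argue by contradiction, exhibiting a vertex that is strictly farther from one endpoint than the diameter permits whenever the other endpoint is assumed to be a cut vertex. By the symmetry of the hypothesis in $u$ and $v$, it suffices to prove that $u$ is a \ac{nav}; the argument for $v$ is obtained verbatim by interchanging their roles.

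First I would suppose, for contradiction, that $u$ is an articulation vertex. By the definition given earlier, deleting $u$ together with its incident edges disconnects $G$, so $G - u$ has at least two connected components. Let $C_v$ be the component containing $v$, and let $C_w$ be any other component; since $u$ is a cut vertex, $C_w$ is nonempty, so I can select a vertex $w \in C_w$ with $w \neq u$.

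The key step is the observation that every path in $G$ from $w$ to $v$ must pass through $u$: as $w$ and $v$ lie in distinct components of $G - u$, no walk avoiding $u$ can join them. Hence the shortest $w$--$v$ path splits at $u$, giving the additive decomposition $d(w,v) = d(w,u) + d(u,v)$. Since $w \neq u$ we have $d(w,u) \ge 1$, so $d(w,v) \ge 1 + d(u,v) = 1 + dia(G) > dia(G)$, which contradicts the fact that $dia(G)$ is the maximum distance realized by any pair of vertices of $G$. Therefore $u$ cannot be an articulation vertex, and the symmetric argument shows the same for $v$.

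I expect no serious obstacle here, as the result is classical; the only point warranting care is the justification of the decomposition $d(w,v) = d(w,u) + d(u,v)$, which rests entirely on $u$ separating $w$ from $v$ and thereby forcing every $w$--$v$ path through $u$. The hypothesis that $G$ has more than $2$ vertices is used only to keep the statement nontrivial and to guarantee, via connectivity, that $dia(G) \ge 1$ and that the separated component $C_w$ supplies a genuine vertex $w \neq u$, which is exactly what makes the strict inequality $d(w,v) > dia(G)$ go through.
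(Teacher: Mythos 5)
Your proof is correct: the contradiction via a vertex $w$ in a component of $G-u$ other than the one containing $v$, forcing $d(w,v) \ge d(w,u) + d(u,v) \ge 1 + dia(G)$, is exactly the standard argument, and the one technical point you flag (the decomposition through the separating vertex $u$) is justified correctly. Note that the paper itself offers no proof of this lemma---it is quoted as a known result with citations to Kingan and to Clark--Holton---and your argument is precisely the classical textbook proof those references contain, so there is nothing to compare beyond saying you have supplied the omitted details soundly.
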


\begin{lemma} \label{lemma:nAV}
Let $G$ be a connected non-clique graph with at least three vertices. $G$ has at least two \ac{nav} that are not joined by an edge.
\end{lemma}
\begin{proof}%[Proof for \cref{lemma:nAV}]
 Since $G$ is not a clique, $dia(G) \ge 2$. Consider two vertices $u$ and $v$ in $G$ such that $d(u, v) = dia(G)$. From Lemma~\ref{lemma:dia}, $u$ and $v$ are \ac{nav}s. Since $d(u, v) > 1$, they are not joined by an edge. Thus $u, v$ satisfy the lemma.
\end{proof}

\begin{lemma} \label{lemma:candidateGeneration}
Every connected, non-clique $k$-vertex pattern graph can be generated by merging two connected $(k-1)$-vertex pattern graphs
(i.e., by the \Call{Merge}{} function).
\end{lemma}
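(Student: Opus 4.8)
The plan is to prove this constructively: given an arbitrary connected, non-clique $k$-vertex graph $G$, I will exhibit two connected $(k-1)$-vertex patterns whose merge reconstructs $G$ exactly. The entire argument hinges on \cref{lemma:nAV}. Since $G$ is connected, non-clique, and has at least three vertices, \cref{lemma:nAV} supplies two \emph{non-articulation vertices} $u$ and $v$ that are \emph{not} joined by an edge. These two vertices are precisely the ``seams'' along which $G$ can be decomposed into two smaller patterns.

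First I would form the two candidate $(k-1)$-patterns $P_i = G - u$ and $P_j = G - v$ (ordinary vertex deletion). Because $u$ and $v$ are non-articulation vertices of $G$, each deletion leaves a connected graph, so $P_i$ and $P_j$ are legitimate connected $(k-1)$-vertex patterns; this is exactly where the non-articulation property is needed. Next I would exhibit the common core: disconnecting $v$ from $P_i$ produces a core graph whose $(k-2)$-vertex component is $\Gamma = G - \{u,v\}$ with marked vertex $v$, while disconnecting $u$ from $P_j$ produces a core graph with the \emph{identical} component $\Gamma$ and marked vertex $u$. Since these two core graphs share the same $\Gamma$, they lie in the same core group and the merge operation applies. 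Reattaching the marked vertices to $\Gamma$ restores $v$ with its edges to $N_G(v)$ and $u$ with its edges to $N_G(u)$ (all of which land in $\Gamma$, since $u$ and $v$ are non-adjacent), so the merged edge set is $E(\Gamma)$ together with all edges incident to $u$ and to $v$ in $G$. The only edge the merge could conceivably miss is $\{u,v\}$ itself, but \cref{lemma:nAV} guarantees $u$ and $v$ are non-adjacent, so no such edge is required. Using the identity automorphism of $\Gamma$, which is among those \Call{findAutomorphism}{} enumerates, the merge therefore outputs exactly $G$.

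The crux of the argument is that a \emph{single} hypothesis from \cref{lemma:nAV} simultaneously secures the two properties the construction needs: (i) both $(k-1)$-patterns are connected, because both deleted vertices are non-articulation vertices, and (ii) the merge neither introduces a spurious $u$--$v$ edge nor omits any edge of $G$, because the two vertices are non-adjacent. The non-clique hypothesis is exactly what makes such a pair exist, since in a clique every pair is adjacent (this is precisely why cliques must instead be handled by the separate third-graph construction, \Call{generateCliques}{}). The remaining point worth a sentence is that $\Gamma = G - \{u,v\}$ need not itself be connected; however, the definition of a core graph imposes no connectivity requirement on the non-marked component, so a possibly disconnected $\Gamma$ does not affect either the core-group membership or the correctness of the reconstruction.
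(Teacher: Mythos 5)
Your proof is correct and follows essentially the same route as the paper's: invoke \cref{lemma:nAV} to obtain two non-adjacent non-articulation vertices $u$ and $v$, delete each to get connected $(k-1)$-vertex patterns, observe they share the common core $\Gamma = G - \{u,v\}$, and note that \textsc{Merge} reverses the decomposition. You additionally make explicit two points the paper leaves implicit — that the non-adjacency of $u$ and $v$ is what guarantees no $\{u,v\}$ edge is needed (and why cliques require the separate construction), and that a possibly disconnected $\Gamma$ is harmless — which strengthens rather than changes the argument.
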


\begin{proof}
Consider any non-clique k-vertex candidate pattern graph $P$. From \cref{lemma:nAV}, $P$ has two non-articulation vertices $u$ and $v$. The two $(k-1)$-vertex graphs $P_u$ and $P_v$, formed respectively by separately removing $u$ and $v$ from $P$ are connected. The process of forming core groups from $(k-1)$-vertex patterns in \added{\Call{generateNewPatterns}{}(\cref{alg:gen:nc:cg:gen})} removes $v$ from $P_u$ and $u$ from $P_v$, giving the common core graph (i.e., $P$ with both $u$ and $v$ removed). The \Call{Merge}{} function, essentially the reverse of this process, constructs $P$ by starting with the common core graph and adding $u$ and $v$ with the appropriate edges. 
%It should also be mentioned somewhere beforehand that any connected subgraph of a 'frequent' pattern is necessarily frequent, so we know we've generated those frequent subgraphs already.
\end{proof}

\textbf{Clique generation}: We next extend the idea of merging $(k-1)$-vertex pattern graphs to
form $k$-vertex cliques. As before, we begin by merging two $(k-1)$-vertex pattern graphs. If both 
are cliques, there is a possibility that they are both sub-patterns of a $k$-vertex clique. Note
that the pattern generated by merging two cliques using \Call{Merge}{} is not itself a clique (the two 
marked vertices are not joined by an edge). Therefore, we need to find and merge a suitable third $(k-1)$-vertex 
pattern graph that will supply the missing edge. 

Figure~\ref{fig:mergeCliqueExample} shows an example.
\begin{figure}[h]
  \begin{tikzpicture}
     % Define patterns in a row
    \runningDistinctPatterns{1}{0}{0}{0}{$P_3$}   % First pattern at (0,0)
    \runningDistinctPatterns{2}{0}{2}{0}{$P_4$}   % Second pattern, offset to the right
    \runningDistinctPatterns{3}{0}{4}{0}{$P_5$}   % Third pattern, further offset to the right
    \runningDistinctPatterns{4}{0}{6}{0}{$P_6$}   % Fourth pattern, further offset to the right

    % Add the "Pattern Graphs" text below the patterns
    % Adjust the positioning of the text as needed based on the size and number of patterns
    \node[align=center] at (3, -0.75) {Pattern Graphs}; % Positioned below the row of patterns

    \runningDistinctPatterns{1}{3}{0.00}{-3.25}{$C_3^3$} % C1_3
    \runningDistinctPatterns{2}{3}{0.00}{-5.25}{$C_4^3$} % C2_3

    \runningDistinctPatterns{5}{0}{3.00}{-2.25}{$C_3^3 \cup C_4^3$} % M1
    \runningDistinctPatterns{1}{2}{3.00}{-4.25}{$C_3^2$} % C1_2
    \runningDistinctPatterns{3}{3}{3.00}{-6.25}{$C_5^3$} % C3_3

    \runningDistinctPatterns{6}{0}{6.00}{-4.25}{$C_3^3 \cup C_4^3 \cup C_5^3$} % M2

    % Arrows with padding at starting and ending points
    \draw[flowArrow, shorten <=10pt, shorten >=22pt] (0.5, -2.75) -- (3.5, -1.75); % Arrow from C1_3 to M1
    \draw[flowArrow, shorten <=15pt, shorten >=30pt] (0.5, -4.75) -- (3.5, -1.75); % Arrow from C2_3 to M1

    \draw[flowArrow, shorten <=30pt, shorten >=30pt] (3.5, -1.75) -- (6.5, -3.75); % Arrow from M1 to M2
    \draw[flowArrow, shorten <=15pt, shorten >=30pt] (3.5, -5.75) -- (6.5, -3.75); % Arrow from C3_3 to M2

  \end{tikzpicture}
  \caption{Merging Clique Example}~\label{fig:mergeCliqueExample}
  \Description{Merging Clique Example}
\end{figure}
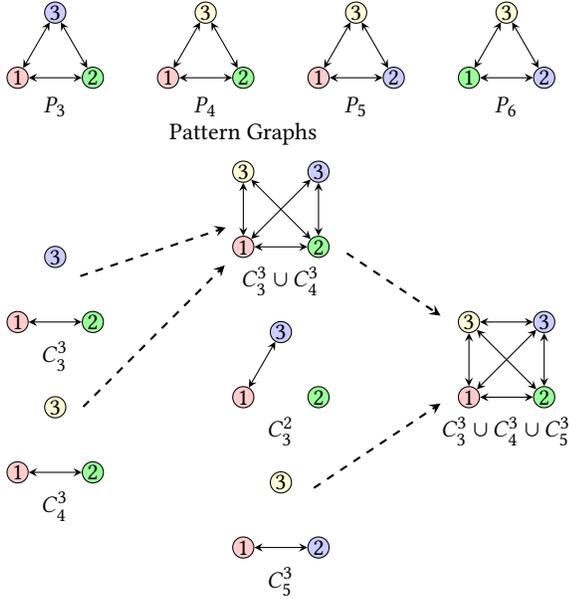
We start by merging patterns $P_3$ and $P_4$ (note that both are 3-cliques). Core graphs $C_3^{3}$ (obtained from $P_3$) and 
$C_4^{3}$ (obtained from $P_4$) are merged to form $M_1$ ($C_3^{3} \cup C_4^{3}$). $M_1$ has the potential to form a 
clique by adding an edge between the two marked vertices. We will now identify a third pattern to merge with $M_1$ to complete 
the clique. An exhaustive search to find such a pattern can be very expensive. We streamline the search by focusing on another core graph $C_3^{2}$ obtained from pattern $P_3$. The idea is that merging two core graphs along one of the 4-clique's four edges and then a third graph along a diagonal can potentially complete the missing edge of the clique. 
All core graphs in the core group with ID $C_3^{2}$ and a marked vertex label matching that of $C_4^{3}$ (i.e., a yellow vertex) are examined to locate the missing edge. In the example, $C_5^{3}$ meets these criteria. Merging with $M_1$ gives the clique. Should this core graph be absent, the pattern does not qualify as a candidate clique. 

A final post-processing step confirms that all other sub-cliques of size 3 are frequent. In our example, this would mean checking that pattern $P_6$ is frequent.

That is in general,
\begin{itemize}
  \item Consider two (k-1) cliques whose core graphs have a common (k-2)-subclique in addition to the one marked vertex each.
  \item After merging the two core graphs, we get a k-clique minus the edge between the two marked vertices ($M$).
  \item To generate the clique, we need to find a third (k-1) clique, whose (k-2)-subclique core graph is different from the first two, but comes from the same pattern as one of the first two. Also, the marked vertex of the third core graph must have the same label as the marked vertex of the second core graph.
  \item If such a CoreGraph is found, we merge it with $M$ to generate the clique.
  \item In the post-processing step, we check if all the (k-1)-size sub-graphs of the generated clique are frequent. If yes, the generated clique is set as a candidate pattern. Otherwise, the generated clique is discarded.
\end{itemize}

% As mentioned in the proof, clique is a special condition.
%Clique is a special condition.
%To determine if Clique is a possible candidate pattern, we need to determine if there exists another CoreGraph in the list that allows us to connect the disconnected \ac{nav}. Thus, for Cliques, three patterns are required, as shown in \cref{lemma:clique:candidateGeneration}. For instance, using our method for non-cliques, from patterns $P_1$ and $P_2$ (\cref{fig:example}), we can generate all the figures from \cref{fig:merge:patternResultant}.
%But, as it can be seen we are missing $\gPath{u_1}{u_4}$, $\gPath{u_2}{u_4}$, and $\gPath{u_3}{u_4}$.
%However, considering the CoreGraphs  $C_1^{u2}, C_2^{u1},$ and $C_1^{u1}$, we can generate the clique.

The generation of the clique pattern is an extension of the non-clique pattern generation (\cref{alg:generate:New:Patterns:nonclique}). We present the algorithm for generating Clique, as shown in \cref{alg:generateNewCliques}. The algorithm first starts by checking if the frequent pattern graphs are cliques, as this must be true for the candidate pattern graph to be a clique (trivially true, checked in \cref{alg:line:cl:cliques}). Then the algorithm strategically searches through the CoreGroups to determine if the missing Clique, which contains the missing edge of the candidate clique is present (\cref{alg:line:cl:otherCores,alg:line:cl:anoCliqs,alg:line:cl:anoSameLabel}). Upon finding the required Clique, the algorithm merges the two cliques to form the candidate clique pattern (\cref{alg:line:cl:merge}). This pattern is then added to the candidate pattern set (\cref{alg:line:cl:add}).

% \begin{algorithm}
%   \small
%   \caption{algorithm for generating candidate cliques}
%   \label{alg:generateNewCliques}
%   \input{algorithms/generateNewCliques}
% \end{algorithm}

\begin{algorithm}
  \caption{algorithm for generating candidate cliques}
  \label{alg:generateNewCliques}
  \begin{algorithmic}[1]
    \Require {intermediates from algorithm~\ref{alg:generate:New:Patterns:nonclique}}
    \Ensure {Cliques with size $k$}
    \Function{generateCliques}{$P, C_i, C_j, cgroups$}
        \sIf{ $\Call{pattern}{C_i} \OR \Call{pattern}{C_j} \neq $ cliques } {~\label{alg:line:cl:cliques}
            \State \Return $\emptyset$
        }
        \State $P_l \gets \{\}$
        \State $c3s \gets \Call{OtherCoresSamePattern}{C_i, cgroups}$~\label{alg:line:cl:otherCores}
        \sForAll { $C_k \in c3s$ } {~\label{alg:line:cl:anoCliqs}
          \sIf { $\Call{SameMarkedLabel}{C_j, C_k}$ } {~\label{alg:line:cl:anoSameLabel}
              \State $pc \gets \Call{mergeClique}{P, C_k}$~\label{alg:line:cl:merge}
              \sIf { $pc$ }{
                \State $P_l \gets P_l \cup pc$~\label{alg:line:cl:add}
              }
          }
        }
        \State \Return $P_l$
    \EndFunction
\end{algorithmic}

\end{algorithm}

\begin{comment}

\begin{lemma} \label{lemma:clique:supporter}
Removing one edge of a clique pattern graph will result in a non-clique pattern graph, with two non-articulation vertices, which are not connected.
\end{lemma}

\begin{proof}
    Removing an edge will make a Clique pattern graph a non-clique pattern graph (by definition). The second claim can be proved by contradiction, let us assume that the vertices that are not connected in the new graph become an articulation vertex. Then there is a vertex in the new graph that can be connected to some other vertex only through this vertex, which is a contradiction to the definition of a clique. Thus, the vertices that are not connected in the new graph are non-articulation vertices.
\end{proof}

\end{comment}

\begin{lemma} \label{lemma:clique:candidateGeneration}
  Every clique $k$-size candidate pattern graph can be generated by using three $(k-1)$-size frequent clique patterns.
\end{lemma}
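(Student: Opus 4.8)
The plan is to mirror the proof of \cref{lemma:candidateGeneration}, adapting it to the clique case, where merging two $(k-1)$-patterns with \Call{Merge}{} falls exactly one edge short of a $k$-clique. I would begin by fixing an arbitrary $k$-vertex clique $P$ (with $k \ge 3$) that qualifies as a candidate. By the anti-monotone property, every one of its $k$ induced $(k-1)$-vertex subgraphs is frequent; and since deleting a vertex from a clique leaves a clique, each such subgraph is in fact a frequent $(k-1)$-clique. This establishes the pool of frequent clique patterns from which the three witnesses are drawn, and it simultaneously discharges the final post-processing requirement.

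Next I would exhibit the three patterns explicitly. Choose any three distinct vertices $v_A, v_B, v_E$ of $P$ and set $A = P - v_A$, $B = P - v_B$, and $E = P - v_E$. The cliques $A$ and $B$ share the common $(k-2)$-clique $\Gamma = P - \{v_A, v_B\}$, so the core graph obtained by disconnecting $v_B$ from $A$ and the core graph obtained by disconnecting $v_A$ from $B$ both have core $\Gamma$ and hence lie in a single core group. Applying \Call{Merge}{} exactly as in \cref{lemma:candidateGeneration} reattaches $v_A$ and $v_B$ to $\Gamma$ and reconstructs every edge of $P$ except $(v_A, v_B)$, yielding the graph $M = P - \{(v_A, v_B)\}$, a $k$-clique missing a single edge.

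I would then argue that $E$ supplies the missing edge and is reachable by the restricted search in \Call{generateCliques}{}. Since $v_E \notin \{v_A, v_B\}$, the clique $E$ contains both $v_A$ and $v_B$, and therefore the edge $(v_A, v_B)$. Moreover $E$ shares the $(k-2)$-clique $P - \{v_A, v_E\}$ with $A$: disconnecting $v_E$ from $A$ and disconnecting $v_A$ from $E$ both yield core $P - \{v_A, v_E\}$, so a core graph of $E$ lies in the core group of the core graph of $A$ found by \Call{OtherCoresSamePattern}{}. The marked vertex of that core graph of $E$ is $v_A$, whose label matches the marked vertex $v_A$ of $B$'s core graph $C_j$, so the \Call{SameMarkedLabel}{} test succeeds. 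Feeding $E$ into \Call{mergeClique}{} then inserts the edge $(v_A, v_B)$ into $M$ and produces exactly $P$, completing the construction.

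The main obstacle I anticipate is not reconstructing $P$ — that follows from \cref{lemma:candidateGeneration} plus one supplementary edge — but verifying that the third clique is genuinely located by the limited search the algorithm performs, rather than by an exhaustive scan of all $(k-1)$-cliques. The crux is to confirm that $E = P - v_E$, viewed through the right core graph, simultaneously satisfies the \Call{OtherCoresSamePattern}{} and \Call{SameMarkedLabel}{} conditions, and, when $P$ has repeated vertex labels, to check that the label-based match neither misses $E$ nor admits a spurious pattern that fails to complete a clique. Handling this labeled and automorphism-sensitive bookkeeping is where the real care is required.
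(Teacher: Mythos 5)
Your proof is correct and takes essentially the same approach as the paper's: both reconstruct the $k$-clique minus one edge by merging two vertex-deleted $(k-1)$-cliques (invoking \cref{lemma:candidateGeneration}, with the two edge endpoints as the non-articulation vertices) and then supply the missing edge by merging in a third vertex-deleted $(k-1)$-clique that contains both endpoints and shares a $(k-2)$-clique anchor. If anything, your version is more explicit than the paper's—naming the three witnesses as $P-v_A$, $P-v_B$, $P-v_E$, justifying their frequency via the anti-monotone property, and verifying that $E$ satisfies the \textsc{OtherCoresSamePattern} and \textsc{SameMarkedLabel} search conditions—where the paper argues the existence of the third pattern more abstractly.
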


\begin{proof}

Removing a single edge, $e = {v_a, v_b}$, from $G_k$ (a clique of $k$-size) results in a non-clique graph, $G'_k$, with vertices ${v_a, v_b}$ as non-articulation vertices. Then according to Lemma~\ref{lemma:candidateGeneration} every non-clique $k$-size pattern graph can be generated by merging two $(k-1)$-size frequent patterns. Thus, $G'_k$ can be reconstructed from two $(k-1)$-size patterns. Since in $G'_k$, every vertex is interconnected except for $v_a$ and $v_b$, the two $(k-1)$-size patterns used to form this non-clique must themselves be cliques. 

To reintroduce $e$ and restore the complete clique structure of $G_k$, we seek a third $(k-1)$-size pattern that includes both $v_a$ and $v_b$. This pattern must share a $(k-2)$-size subgraph (an anchor) with $G'_k$, ensuring compatibility for the merging process. By definition, merging requires one of $e$’s vertices to be the marked vertex, and the other to be part of the anchor. By the same reasoning as above this subgraph will also be a clique.

Thus, three (k-1)-size clique patterns are required to form a k-size clique.
\end{proof}

\begin{theorem} \label{theo:candidateGeneration}
Every candidate pattern can be generated by \added{\ac{flexis}}.
\end{theorem}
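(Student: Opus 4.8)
The plan is to derive \cref{theo:candidateGeneration} as a synthesis of the two structural lemmas already proved, \cref{lemma:candidateGeneration} (non-clique case) and \cref{lemma:clique:candidateGeneration} (clique case), organized as an induction on the number of vertices $k$ of the pattern. The dichotomy is clean: any connected candidate pattern is either a clique or it is not, and each of the two lemmas shows that the corresponding class is reconstructible by the \Call{Merge}{}-based generation routine. The role of the induction is not the reconstruction itself --- that is already supplied by the lemmas --- but rather to guarantee that the specific $(k-1)$-vertex inputs the lemmas call for are in fact available to \Call{generateNewPatterns}{} when it runs at level $k$.

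First I would establish the base case. Size-2 patterns (single labeled edges) are produced directly from the data graph by \Call{edges}{} on \cref{alg:mining:edges} of \cref{alg:mining}, so every size-2 candidate is generated; this seeds the induction. For the inductive step, I would assume that every frequent pattern of size $k-1$ has already been produced and is present in $FP$, and take an arbitrary connected candidate pattern $P$ on $k$ vertices. By definition a $k$-vertex pattern is a candidate only if it can still be frequent, and by the anti-monotone property every $(k-1)$-vertex subgraph of $P$ is frequent, hence --- by the inductive hypothesis --- already in $FP$.

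I would then split on the structure of $P$. If $P$ is not a clique, \cref{lemma:candidateGeneration} exhibits two non-articulation vertices $u,v$ such that $P$ is the \Call{Merge}{} of the connected $(k-1)$-vertex graphs $P_u$ and $P_v$ obtained by deleting $u$ and $v$ respectively. Since $P_u$ and $P_v$ are subgraphs of $P$, they are frequent by anti-monotonicity and therefore lie in $FP$; thus \Call{generateNewPatterns}{} has both merge operands in hand and emits $P$. If $P$ is a clique, \cref{lemma:clique:candidateGeneration} shows that $P$ is obtained from three $(k-1)$-vertex clique patterns, each of which is a subgraph of $P$ and hence frequent and available; these are exactly the operands consumed by \Call{generateCliques}{}. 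In either case $P$ is generated at level $k$, closing the induction.

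The main obstacle is the availability argument rather than the reconstruction, and it is what forces the inductive framing: the lemmas say $P$ \emph{can} be assembled from particular $(k-1)$-vertex pieces, but completeness of \ac{flexis} additionally requires those pieces to have survived the earlier frequency filtering. This is precisely where the anti-monotone property is indispensable, since it certifies that deleting a vertex from a frequent $P$ cannot increase support and therefore keeps the operands frequent. A secondary point to verify is that the particular attachment topology of the marked vertices is among those \Call{generateNewPatterns}{} actually enumerates: the merge must range over all automorphisms of the shared core $\Gamma$ (via \Call{findAutomorphism}{}) so that the correct way of re-attaching $u$ and $v$ is realized. I would note that the constructions in the two lemmas use exactly such an automorphism, so no valid attachment is missed.
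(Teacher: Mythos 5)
Your proposal is correct and follows essentially the same route as the paper's own proof: induction on the vertex count $k$ with the size-2 edges as base case, invoking \cref{lemma:candidateGeneration} for non-cliques and \cref{lemma:clique:candidateGeneration} for cliques in the inductive step, and using the anti-monotone property to certify that the required $(k-1)$-vertex operands survive the frequency filtering. Your added remark about enumerating automorphisms of the shared core via \Call{findAutomorphism}{} is a useful explicit check that the paper leaves implicit in the lemma and algorithm descriptions, but it does not change the argument.
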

\begin{proof}%[Proof for \cref{theo:candidateGeneration}]
The proof follows by induction on the number of vertices $k$. For $k=2$ (the base case), we began by computing all frequent edges in the data graph. In the inductive step, we use \cref{lemma:candidateGeneration,lemma:clique:candidateGeneration} to argue that all $k$-patterns can be 
generated by merging two or three $(k-1)$-patterns. 

At each stage, prior to merging, candidate patterns are discarded if they are not frequent.
A $k$ pattern can only be frequent if all of its $(k-1)$- sub-patterns are frequent. 
Therefore, discarding infrequent sub-patterns does not eliminate any frequent patterns.    
\end{proof}

\subsubsection{Metric/Matcher Step}~\label{sec:matcher_step}
\added{We check the frequency of the candidate patterns using \ac{mal} metric (\cref{sec:contri1}). A candidate pattern is deemed frequent based on a user defined parameter ($\lambda$) and support parameter ($\sigma$) (\cref{eqn:tau}). We now explain the modifications to Vf3Light for using this metric.}

\paragraph{Algorithmic Implementation of Matcher Step}

Vf3Light~\cite{carletti:2018:vf3, carletti:2019:vf3}, is a subgraph matching algorithm that is used to find all the embeddings of a pattern 
graph in the data graph. Vf3Light is a fast and efficient depth-first search-based algorithm that uses matching order and properties of graph isomorphism for early termination and effective pruning. VF3Light finds embeddings that may have automorphisms and overlapping vertices across embeddings. Both of these properties are not desirable for \ac{mal}. Thus, we modify VF3Light \changed{\Call{Vf3LightM}{}}) to find \ac{mal} of data vertices given the pattern graph. 
%In \ac{flexis} we use a modification Vf3Light for matching candidate pattern graphs in a data graph. 
% Specifically, we extend the \FunctionCall{MatchingEngine}{}, \FunctionCall{VF3LightSubState}{}, and \FunctionCall{MatchingVisitor}{} to achieve a maximal independent set of data vertices given the pattern graphs. We explain each of the extensions in detail below.
\added{
\ac{flexis} incorporates the following modifications to VF3Light (this description assumes familiarity with VF3Light):
\begin{enumerate}
  \item Pruning: Once the embedding succeeds, none of the vertices from that embedding can be used again (independent set violation). So, we modified the matching engine to stop the search for new embeddings with the same root vertex if a match is found. This is done by setting a flag 
  % (\cref{alg:matching:findAllMatching}: \cref{line:findAllMatching:cf1,line:findAllMatchingRec:cf3,line:findAllMatchingRec:cf4}) 
  if a match is found. The flag is reset when the root vertex is unstacked from the recursion stack
  % (\cref{alg:matching:findAllMatching}: \cref{line:findAllMatching:cf2}).
  \item Independent Set: To make sure the embeddings do not have overlapping vertices, we use a bitmap to store the used vertices. The bitmap is set when a vertex is used in an embedding. The bitmap is shared across different instances of VFState for the same pattern graph to avoid expensive copies.
\end{enumerate}
}

\begin{algorithm}
  \small
  \caption{VF3Matcher}
  \label{alg:matching}
  \begin{algorithmic}[1]
  \Require {$G, CP, \tau$}
  \Ensure {freqPatterns $f$}
  \Procedure{vf3Matcher}{$G, CP, \tau$}
  % \State $l \gets \sigma/|CP[0]|$~\label{line:vf3matcher:lower}
  % \State $\tau \gets \lfloor\Call{lerp}{l, \sigma, \lambda}\rfloor$~\label{line:vf3matcher:lerp}
  \State $f \gets \emptyset$~\label{line:vf3matcher:fempty}
  \sForAll { $x \in CP$ } {~\label{line:vf3matcher:for}
    \sIf{ $\Call{Vf3LightM}{G, x, \tau}$ }{~\label{line:vf3matcher:isfrequent}
      $f \gets f \cup x$~\label{line:vf3matcher:fintersect}
    }
  }
  \State \Return $f$~\label{line:vf3matcher:return}
  \EndProcedure
\end{algorithmic}

\end{algorithm}

\subsection{Complexity Discussion}

A formal asymptotic analysis of \ac{flexis} is not meaningful because of the inherent complexity of the 
FSM problem and its subproblems: graph isomorphism and 
subgraph isomorphism, which are not known to be 
in $P$. Concretely, the runtime of the algorithm depends on
three factors: (1) the number of iterations \#iter of the 
while loop (line 4, Algorithm 1), (2) the run-time $R_{gen}$ of the candidate generation step (\Call{generateNewpatterns}{}), and (3) the runtime $R_{metric}$ of the metric step (\Call{VF3Matcher}{}). \#iter depends inversely on the threshold $\tau$ (the lower its value, the greater the
number of candidate patterns and frequent patterns at
each iteration and the greater is \#iter).
$R_{gen}$ depends on Bliss performance to solve the graph isomorphism ($GI$) problem multiple times. There is presently no 
polynomial-time algorithm for $GI$, nor a proof of NP-completeness. Indeed, this is an active area of research
in theoretical Computer Science. The 
Bliss runtime is not polynomially bounded and depends on characteristics of the input graph,
including size and symmetry. Finally, $R_{metric}$
depends on counting the number of instances of a pattern graph in a data graph. This is related to subgraph matching,
which is NP-complete. \ac{flexis} uses VF3Light, which is a backtracking method whose worst-case complexity is exponential. 

\section{Results}~\label{sec:results}

In this section, we evaluate \ac{flexis} and compare it with GraMi and T-FSM. For this, we utilize the code provided by the authors for GraMi and T-FSM. The original version of GraMi uses Java 17.0.7 and T-FSM uses C++. We implement our method in C++ compiled by g++\-11. We ran our experiments on Xeon CPU E5-4610 v2 (2.30GHz) on Ubuntu 22.04. Our system has 192 GB DDR3 memory.

\textbf{Datasets}
We use $5$ real-world graph datasets with varied numbers of vertices, edges and labels. The datasets included are as follows, 
\begin{enumerate}
    \item Gnutella08~\cite{leskovec:2007:graph, ripeanu:2002:mapping} constitutes a specific instance captured from the Gnutella peer-to-peer file sharing network.
    \item The soc-Epinions1~\cite{richardson:2003:trust} is derived from the online social network Epinions.com, a consumer review website. 
    \item The Slashdot~\cite{leskovec:2009:community} reveals user interconnections on Slashdot's tech news platform. 
    \item The wiki-vote dataset~\cite{leskovec:2010:signed, leskovec:2010:predicting} is a compilation of voting interactions within Wikipedia.
    \item MiCo~\cite{elseidy:2014:grami} dataset is Microsoft co-authorship information.
\end{enumerate}

Table \ref{tab:datasets} shows the properties of these data graphs. $|V_d|$ refers to the maximum degree of the graph, $|V_l|$ is the number of distinct vertex labels, and $|E_l|$ is the number of distinct edge labels. Vertex and edge labels are randomly assigned to vertices and edges, respectively. 

\begin{table}
  \centering
  \caption{Datasets used for experiments}
  \label{tab:datasets}
  \begin{tabular}{|c|c|c|c|c|c|}
  \toprule
  datagraph & $|V|$  & $|V_d|$ & $|V_l|$ & $|E|$ & $|E_l|$ \\
  \midrule
  Gnutella & 6301 & 48 & 5 & 20777 & 5 \\ \hline
  Epinions & 75879 & 1801 & 5 & 508837 & 5 \\ \hline
  Slashdot & 82168 & 2511 & 5 & 948464 & 5 \\ \hline
  wiki-Vote & 7115 & 893 & 5 & 103689 & 5 \\ \hline
  MiCo & 100000 & 21 & 29 & 1080298 & 106 \\
  \bottomrule
\end{tabular}

\end{table}

\textbf{Metrics}
The support threshold \acs{support}, in conjunction with a given data graph, and the user defined parameter $\lambda$, serves as the input for the proposed system. The assessment of our system encompasses the execution time incurred during the mining process, analysis of the frequent pattern obtained, and the quantification of memory utilization. We compare our method to both the modifications of the state-of-art algorithm GraMi, namely the AGrami,  where $\alpha$ is set to $1$, and the baseline GraMi version. We also compare our algorithm with the newest graph mining system T-FSM. In T-FSM we compare our method with both the \ac{mni} version and the fractional score version. Conversely, for our developed algorithm denoted as \ac{flexis}, we present two distinct variants. The first variant entails an implementation absent of any approximation mechanisms. In contrast, the second variant pursues the generation of maximal frequent patterns, a subset of which intersects optimally with those identified by GraMi and T-FSM. Since the problem can run for a long time in the case of large graphs, we restrict the time that both the algorithms run to 30 minutes, after which the system times out. If an algorithm times out then the corresponding values will be missing in the graphs. 

\subsection{Experimental Setup}

Our method uses an undirected data loader and a directed matching algorithm. Thus, to determine the efficiency of performance, only looking at the undirected version might not be fair. Since GraMi has both undirected and directed versions we choose to consider the directed version of GraMi. Moreover, T-FSM proposes a better undirected version of GraMi with some optimizations. Thus, instead of considering the original undirected GraMi, we use the T-FSM version of GraMi, referred to as T-FSM-MNI. We also compare our algorithm with the T-FSM fractional score model (referred in paper as T-FSM). It is important to note that our observations between GraMi and T-FSM do not follow directly from their paper. This is because we consider directed GraMi, while their paper considered the undirected version. Also, since the paper itself postulated the equivalent performance of their \ac{mni} version with that of GraMi and better performance of their fractional-score method when compared to GraMi, we do not reinforce this point again in this paper, rather here we use their important observations.

\begin{figure}
    \centering
    \includegraphics[width=0.6\linewidth]{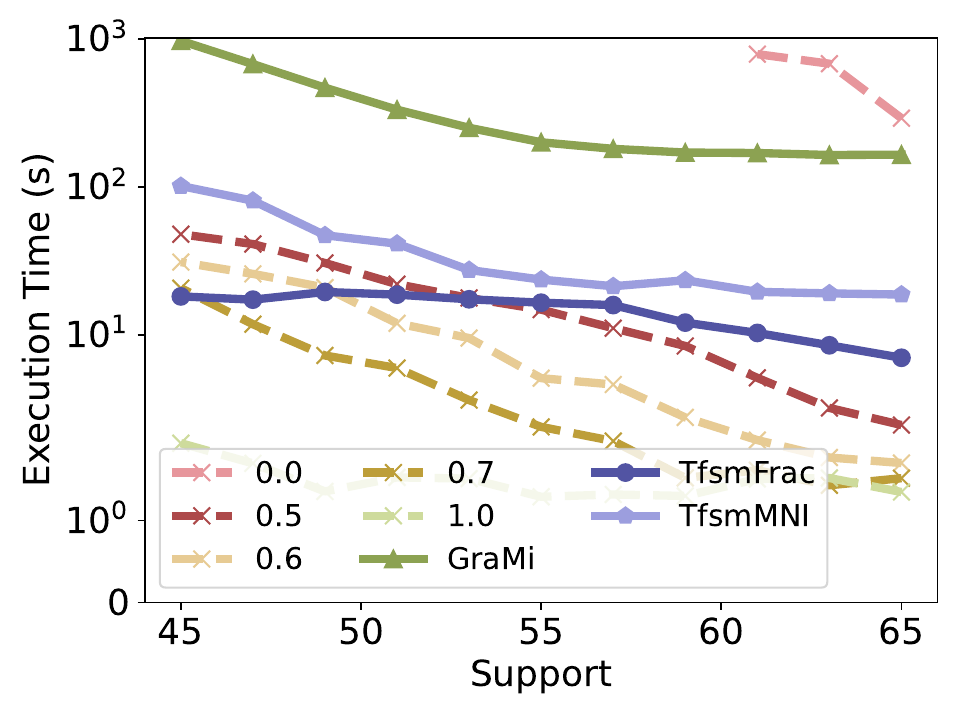}
    \caption{Results on Gnutella}
    \label{fig:log:TimeTaken:Gnutella}
    \vspace{-10pt}
    \Description{}
\end{figure}

\begin{figure*}[htb]
  \centering
  \begin{subfigure}{0.245\linewidth}
    \includegraphics[width=1.0\linewidth]{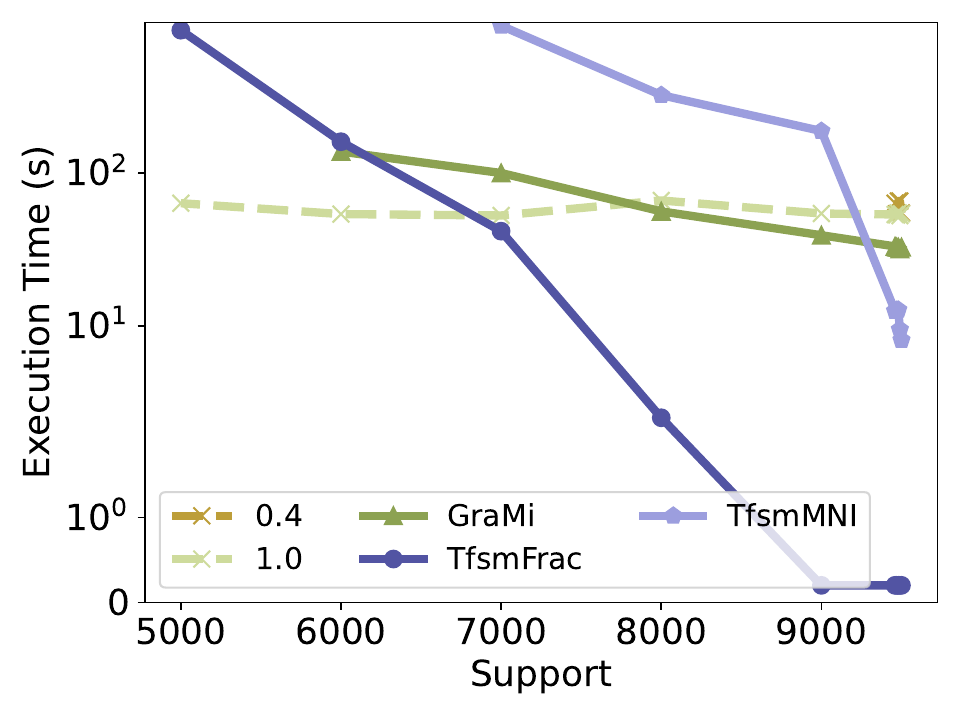}
    \caption{MiCo}
  \end{subfigure}
  \begin{subfigure}{0.245\linewidth}
    \includegraphics[width=1.0\linewidth]{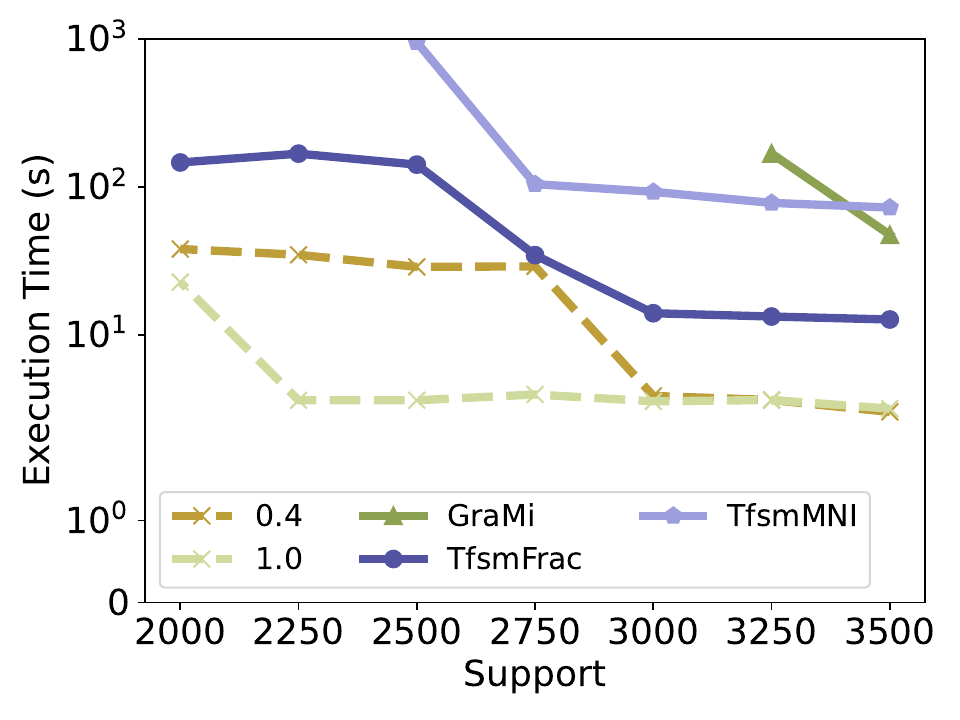}
    \caption{Slashdot}
  \end{subfigure}
  \begin{subfigure}{0.245\linewidth}
    \includegraphics[width=1.0\linewidth]{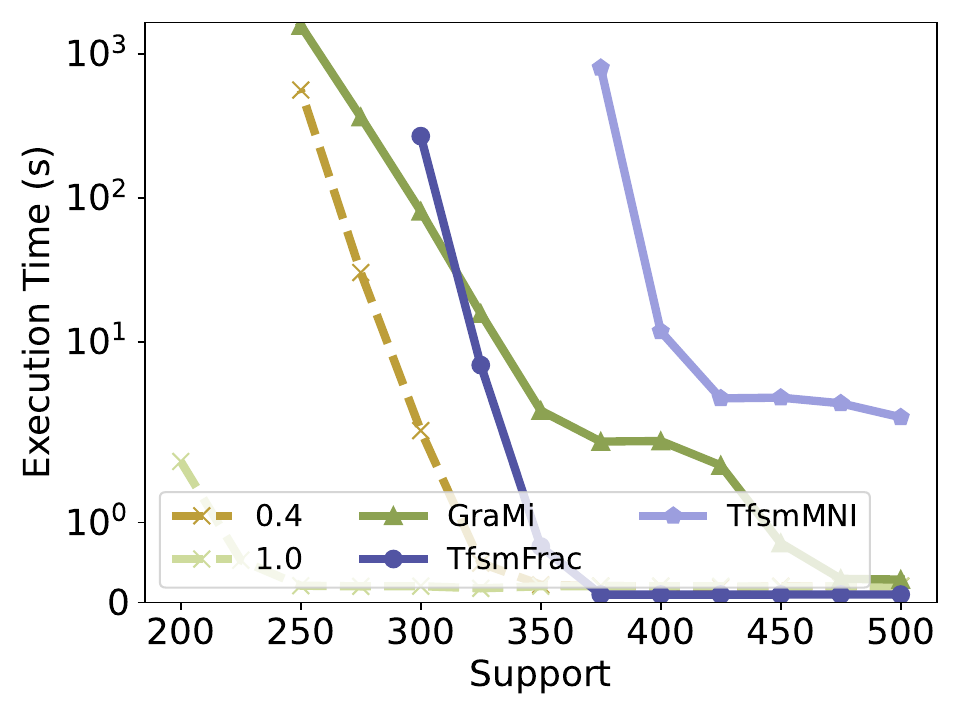}
    \caption{wiki-Vote}
  \end{subfigure}
  \begin{subfigure}{0.245\linewidth}
    \includegraphics[width=1.0\linewidth]{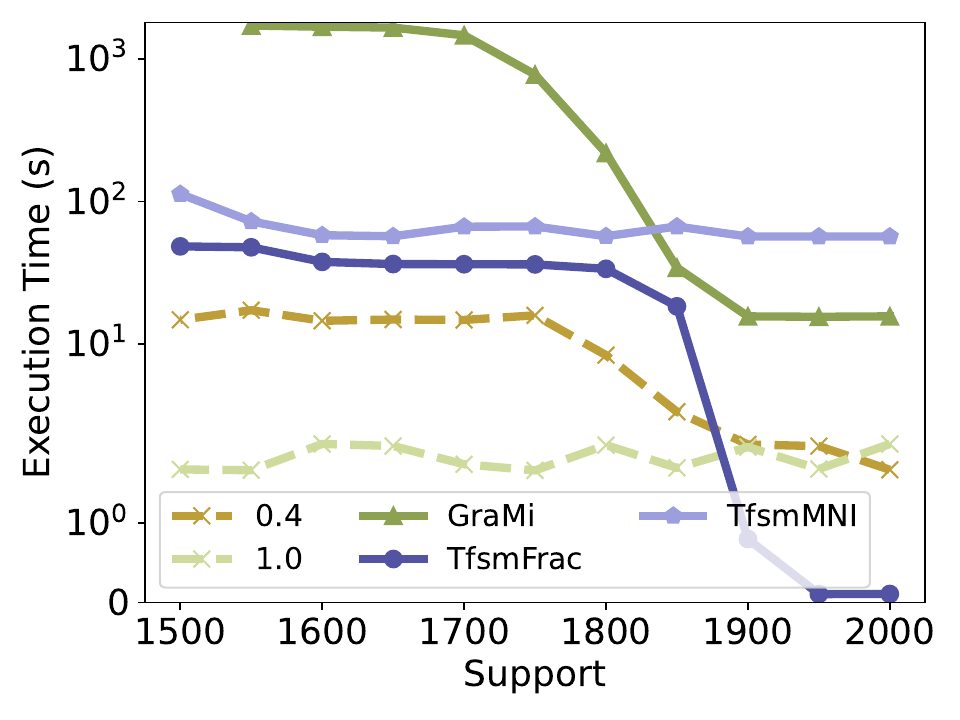}
    \caption{Epinions}
  \end{subfigure}

  \caption{Execution Time}
  \label{fig:log:TimeTaken}
  \Description{}
\end{figure*}

\begin{figure*}[htb]
  \centering
  \begin{subfigure}{0.245\linewidth}
    \includegraphics[width=1.0\linewidth]{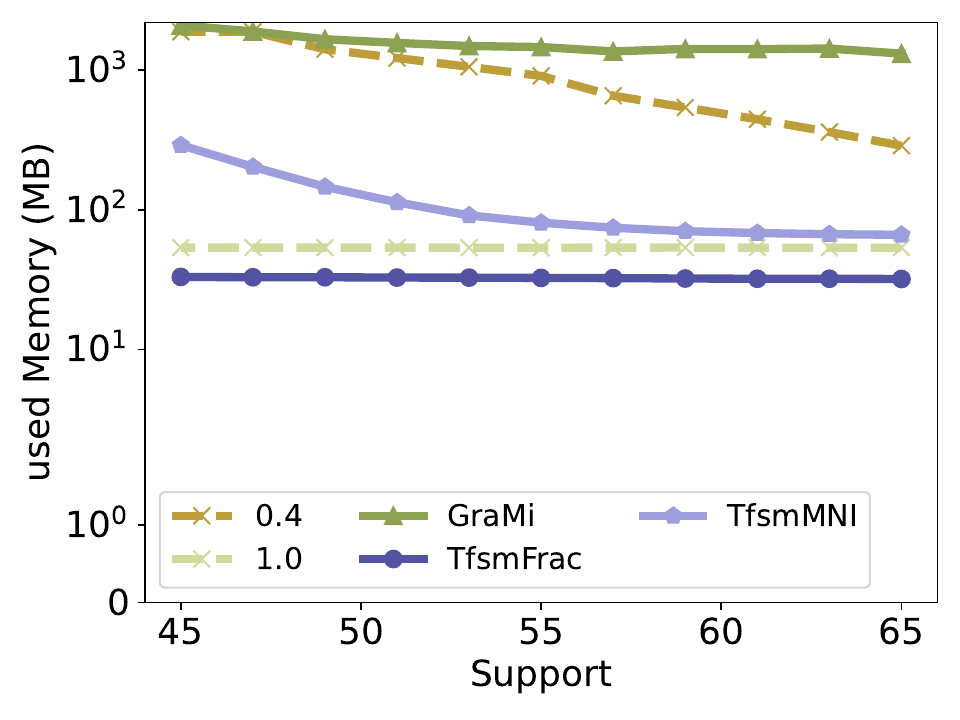}
    \caption{Gnutella}
  \end{subfigure}
  \begin{subfigure}{0.245\linewidth}
    \includegraphics[width=1.0\linewidth]{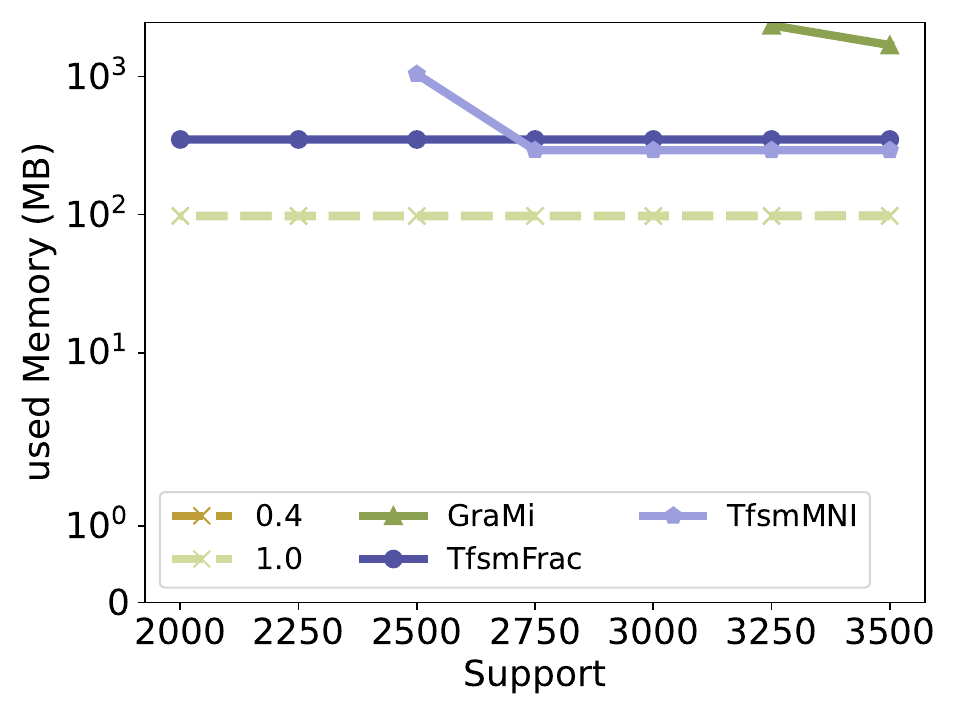}
    \caption{Slashdot}
  \end{subfigure}
  \begin{subfigure}{0.245\linewidth}
    \includegraphics[width=1.0\linewidth]{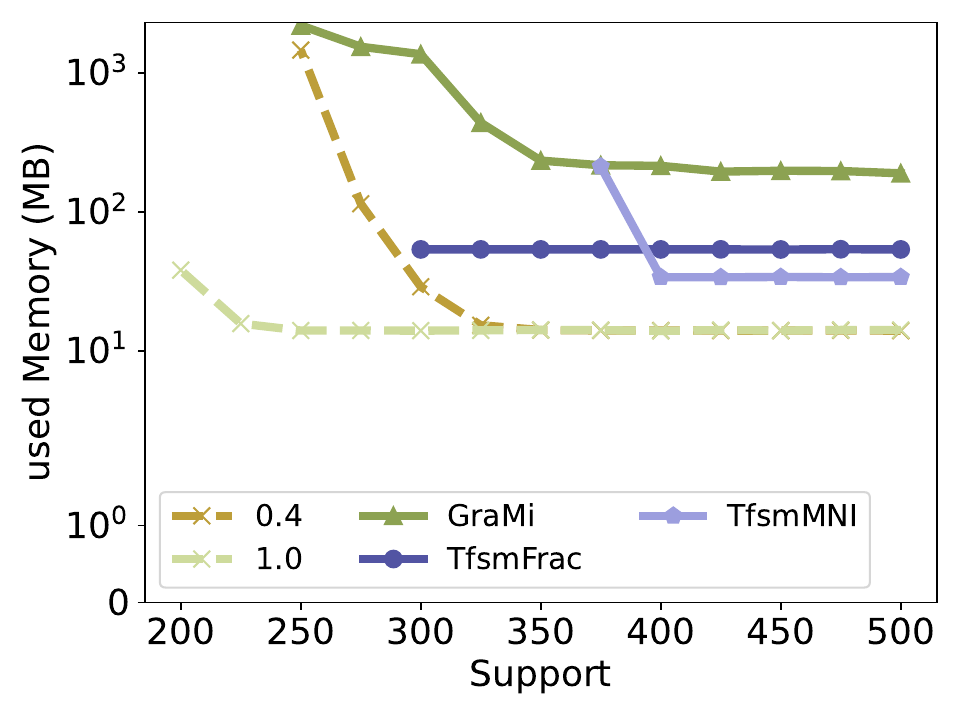}
    \caption{wiki-Vote}
  \end{subfigure}
  \begin{subfigure}{0.245\linewidth}
    \includegraphics[width=1.0\linewidth]{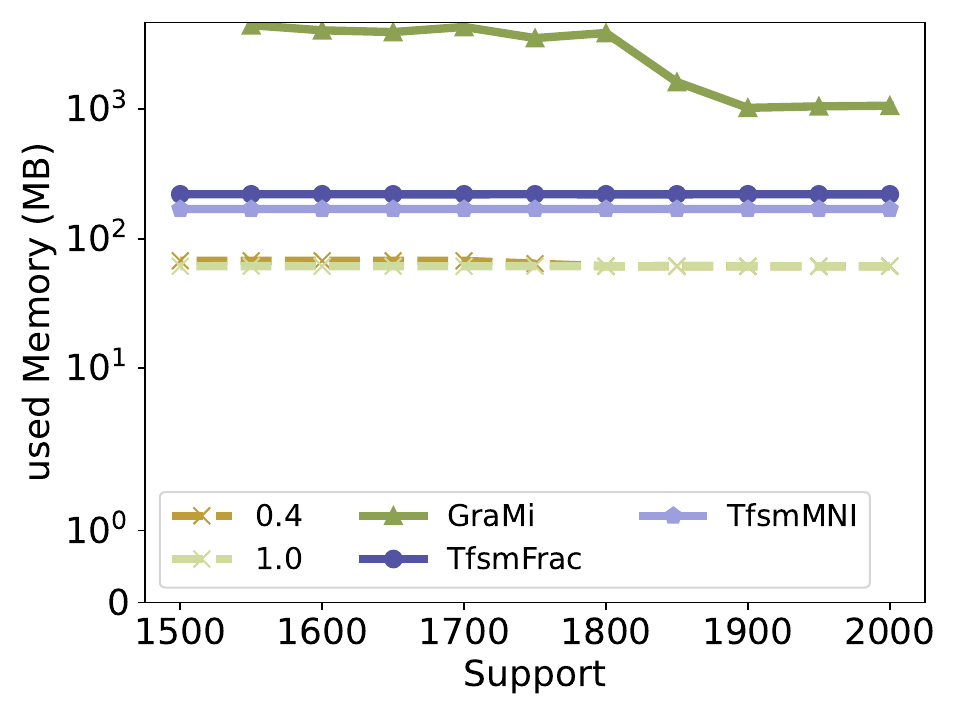}
    \caption{Epinions}
  \end{subfigure}

  \caption{Memory Usage}
  \label{fig:log:MemoryUsage}
  \Description{}
\end{figure*}

\begin{figure*}[htb]
  \centering
  \begin{subfigure}{0.245\linewidth}
    \includegraphics[width=1.0\linewidth]{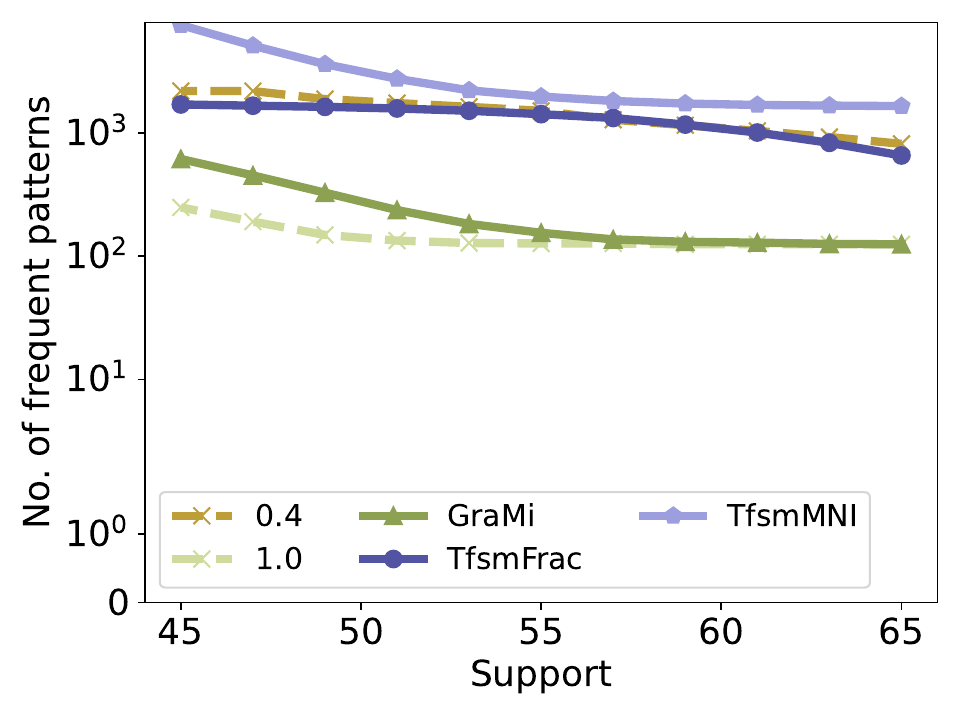}
    \caption{Gnutella}
    \label{fig:log:freq:Gnu}
  \end{subfigure}
  \begin{subfigure}{0.245\linewidth}
    \includegraphics[width=1.0\linewidth]{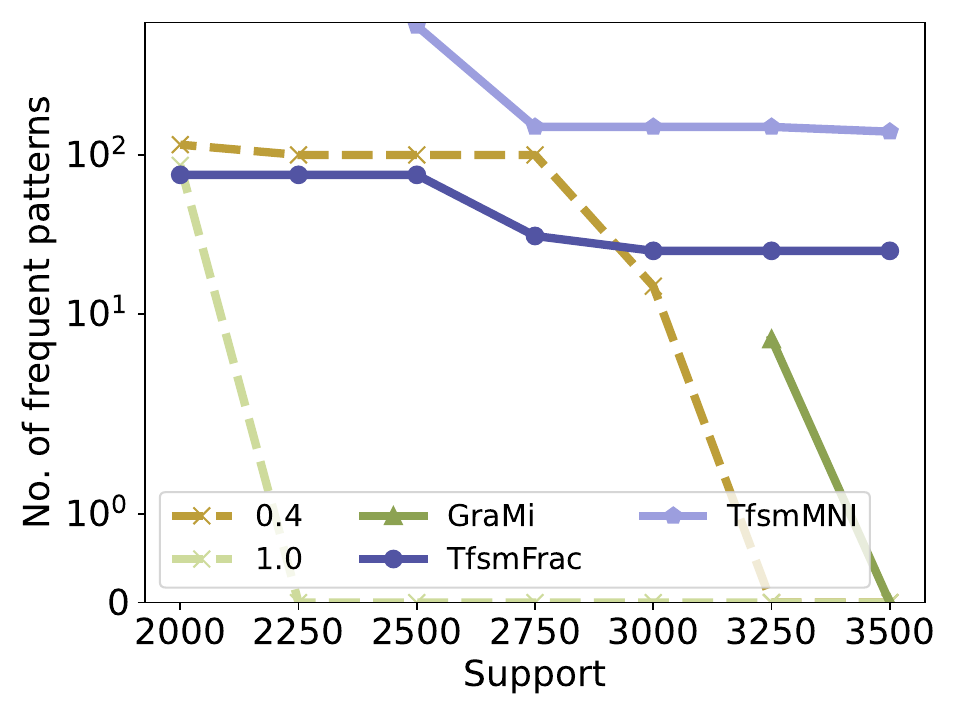}
    \caption{Slashdot}
    \label{fig:log:freq:dot}
  \end{subfigure}
  \begin{subfigure}{0.245\linewidth}
    \includegraphics[width=1.0\linewidth]{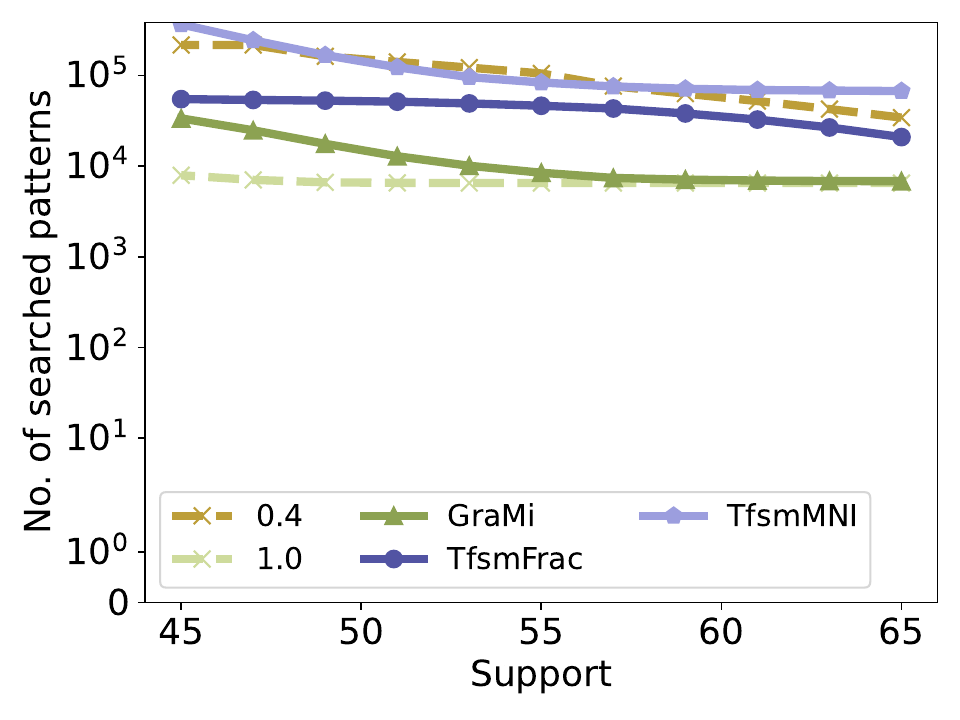}
    \caption{Gnutella}
    \label{fig:log:srch:Gnu}
  \end{subfigure}
  \begin{subfigure}{0.245\linewidth}
    \includegraphics[width=1.0\linewidth]{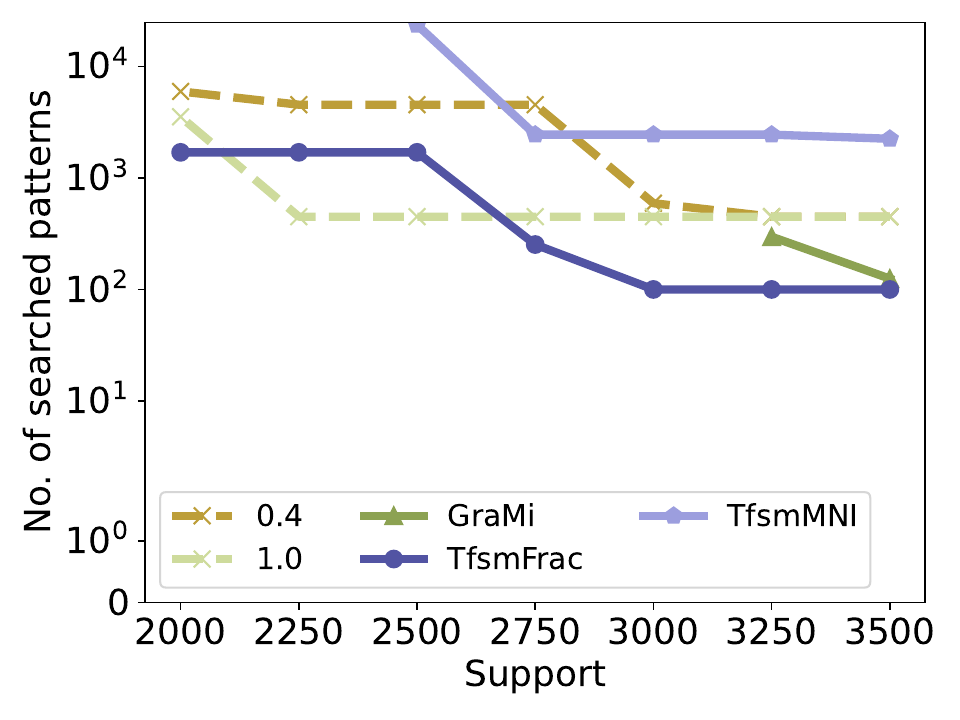}
    \caption{Slashdot}
    \label{fig:log:srch:dot}
  \end{subfigure}
  \caption{Frequent (a, b) and Searched (c, d) Patterns}
  \label{fig:log:FrequentSearched}
  \Description{}
\end{figure*}

%\subsection{Computation of Frequent Subgraphs}

We evaluate the performance of \ac{flexis}, GraMi, and T-FSM in terms of the time taken to compute frequent subgraphs. The outcomes of both systems were recorded, with a predetermined time limit of $30$ minutes. For the sake of clarity for all the other subsequent graphs, we denote GraMi as \qoutes{GraMi},  T-FSM fractional-based algorithm is shown as \qoutes{TfsmFrac} while T-FSM \ac{mni} is shown as \qoutes{TfsmMNI}, while numeric identifiers are used for different configurations of \ac{flexis}, unless mentioned otherwise. Since, \ac{flexis} is characterized by a parameter range spanning $\left[0.0, 1.0\right]$ (\ac{lerp}), we use the value chosen for the representation.

\subsection{Methodology to select \ac{lerp}}
It is commonly known that the number of times a pattern occurs is overestimated by \ac{mni} and as a consequence, the number of patterns classified as frequent is more than the actual value for GraMi. And a similar explanation also holds for T-FSM-MNI. If we take a closer look at the functioning of T-FSM using fractional-score, there is still a considerable overestimation. However, this overestimation is much less than that of GraMi. 
In contrast, our method spans from over-estimating to under-estimating, with over-estimation happening at a lower \ac{lerp} value and under-estimation possibly at a higher value. Thus, we postulate that we will merge closer to GraMi at a lower value of \ac{lerp}, to the T-FSM-MNI at a similar value and to T-FSM, at the same or higher value, depending on the efficiency of fractional-score with those datasets. We follow the following method to determine the most suitable \ac{lerp} value to compare with GraMi and T-FSM.

\begin{itemize}
    \item \textbf{Comparison with different \ac{lerp} values:} We show the time taken (\cref{fig:slider:time}) and the number of frequent patterns (\cref{fig:slider:frequent}) for varied values of $\ac{lerp}$. As expected, the number of patterns found decreases with an increase in $\ac{lerp}$, because of a reduction in overestimation. Also, the time required to find the patterns decreases with an increase in $\ac{lerp}$, as the number of patterns searched for decreases. Thus, to be fair in comparison with GraMi and T-FSM, we need to determine the value of \ac{lerp} in which the number of patterns found matches with that of GraMi and T-FSM.
    \item \textbf{Choosing a suitable \ac{lerp} value: }   From our experiments, as shown in the \cref{fig:log:TimeTaken:Gnutella} most of the values of the execution time of T-FSM (lowest among the other comparing methods) falls closer to our slider value of 0.5. But, as can be seen in the \cref{tab:similarity} the number of frequent patterns in all the three algorithms, namely \ac{flexis}, GraMi and T-FSM merge at 0.4. To show the similarity in the frequent patterns determined, we used graph isomorphim, and present the results in \cref{tab:similarity}. In the table, $f_g, f_f, f_t$ refers to frequent patterns generated by GraMi, \ac{flexis}, and T-FSM respectively, and $V_p$ represents the number of vertices. The isomorophism between, GraMi and \ac{flexis}, T-FSM and \ac{flexis} are shown as, $f_f \cap f_g$ and $f_f \cap f_t$ respectively.  We show only one dataset because of the lack of space. Thus, for the fairness of comparison for the rest of the paper, we will consider the \ac{lerp} value as $0.4$.
\end{itemize}

\begin{figure}[htb]
    \centering
    \begin{subfigure}{0.45\linewidth}
        \includegraphics[width=1.0\linewidth]{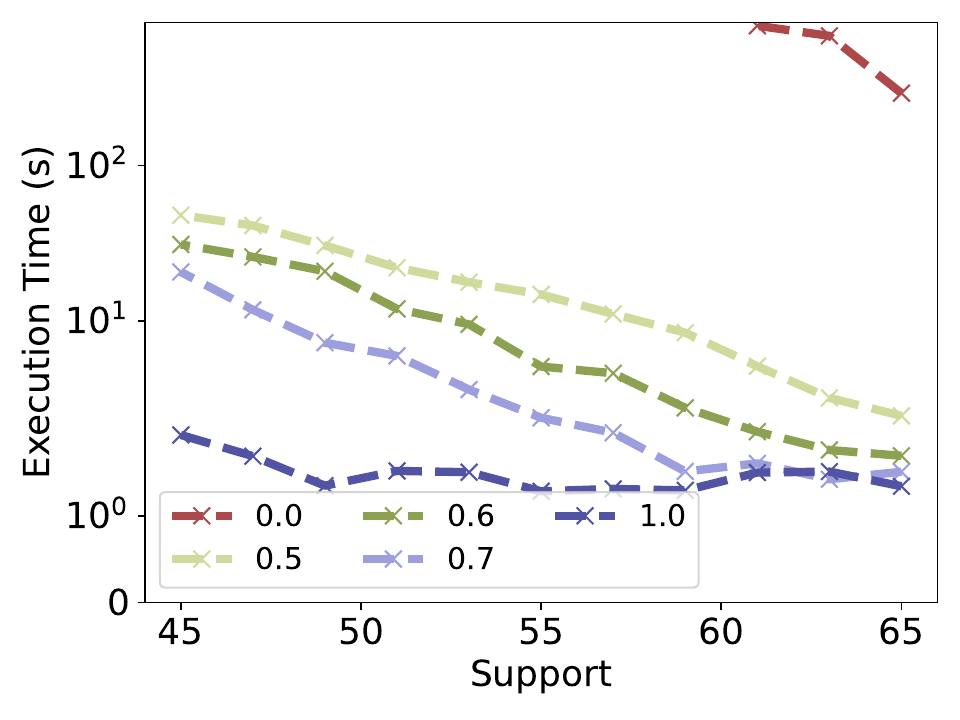}
        \caption{Time taken}
        \label{fig:slider:time}
    \end{subfigure}
    \begin{subfigure}{0.45\linewidth}
        \includegraphics[width=1.0\linewidth]{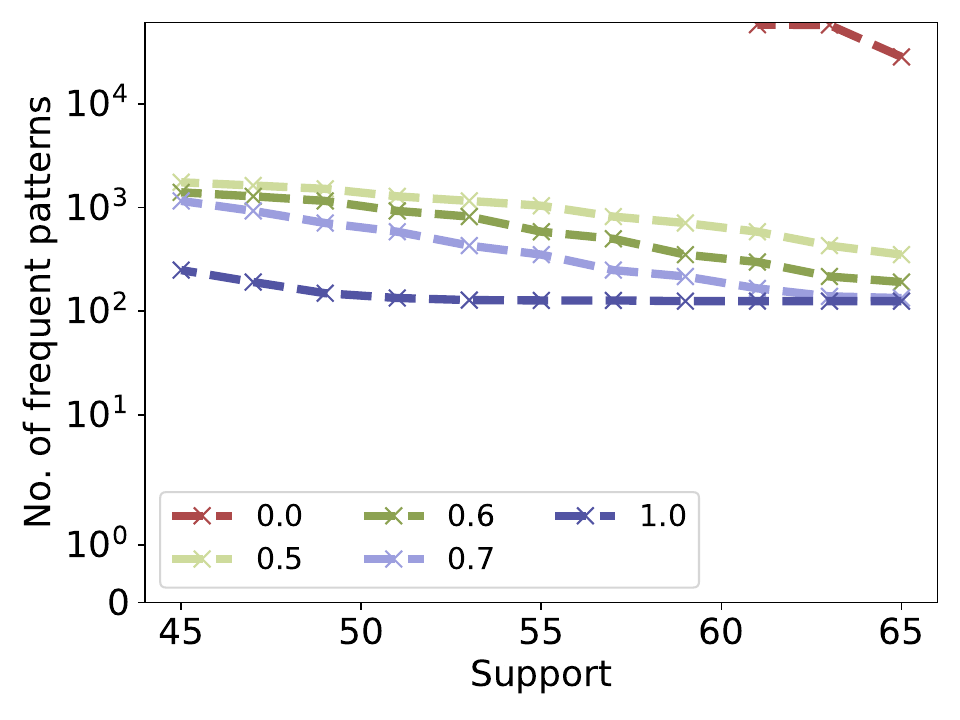}
        \caption{Frequent patterns}
        \label{fig:slider:frequent}
    \end{subfigure}
    \caption{Various slider values with Gnutella}
    \label{fig:slider:gnutella}
    \Description{}
\end{figure}

% From our experiments, as shown in the graph \cref{fig:log:TimeTaken:Gnutella} most of the values of the execution time of Tfsm-Frac (lowest among the other comparing methods) fall closer to our slider value of 0.5 and most of the values are a little above the slider value of 0.5. Also, as can be seen in the \cref{tab:similarity} the number of frequent patterns in all the three algorithms, namely \ac{flexis}, GraMi and T-FSM merge at 0.4. Thus, we choose a slider value of 0.4 as the apt one for comparison.

\subsection{Execution Time}
Execution Time for the Gnutella dataset, across varied slider values, along with GraMi and T-FSM is as shown in \cref{fig:log:TimeTaken:Gnutella}. For MiCo, Epinions, Slashdot, and wiki-Vote datagraphs in \cref{fig:log:TimeTaken}. 
Across Epinions, Slashdot and wiki-Vote, slider $0.4$ performs with an average speed up of 3.02x over T-FSM fractional and 10.58x over GraMi.
While with Gnutella, slider $0.4$ with an average speedup of 1.97x over T-FSM fractional and 7.90x over GraMi.
Notably, the number of frequent subgraphs exhibits a growth pattern as the support value is decreased.

\vspace{-10pt}
\subsection{Memory Usage} \label{sec:results:memoryUsage}
\Cref{fig:log:MemoryUsage} presents memory utilization patterns observed during the processing of the Epinions and wiki-Vote data graphs by \ac{flexis}, GraMi and T-FSM. It is important to note that instances, where execution was prematurely terminated due to reaching predefined timeout constraints, do not have corresponding memory utilization values represented. As shown in \cref{fig:log:MemoryUsage}, GraMi and T-FSM Fractional uses 8.79x and 1.01x memory than \ac{flexis} ($\lambda = 0.4$) in average across wiki-Vote, Epinions, and Slashdot. We postulate that the differences in memory utilization patterns emerge between the three systems due to their distinct storage strategies. Specifically, GraMi maintains an expansive repository of both frequent and infrequent graphs, hence higher memory demand. T-FSM is built on top of GraMi and has similar memory requirements, but since T-FSM performs a memory-bound search in the data graph it has lower memory requirements when compared to GraMi. On the contrary, \ac{flexis} approach stores only the frequent graphs at each level. Since, for the determination of the potential candidates for $k$-size pattern only requires $k-1$-size pattern, thereby adapting a more resource-efficient memory management framework. It is important to note that with the addition of Gnutella, GraMi and T-FSM Fractional uses 5.83x and 0.64x more memory than \ac{flexis}. The decrease in memory improvement achieved by \ac{flexis} can be attributed to more searches required for the Gnutella dataset. This leads to a higher number of frequent and searched patterns.

We can also observe, that as the support value decreases, the memory utilization proportionately increases. This phenomenon occurs due to the greater number of patterns being recognized as frequent as the support threshold decreases. This trend is consistent across both GraMi and our method \ac{flexis}. In the context of our approach, a reduction in the \ac{lerp} value corresponds to an elevated count of frequent patterns, consequently translating to an increase in memory requirement. It is crucial to acknowledge that this increased memory allocation, with a decrease in support, in our approach is essential, as the nature of the mining problem necessitates the retention of identified frequent subgraphs. In stark contrast to GraMi, our approach refrains from incurring any excess memory overhead beyond that required to store the indispensable patterns. Thus, our approach effectively harnesses considerably less memory in comparison to GraMi. In comparison to T-FSM, \ac{flexis} sometimes have higher memory consumption (Gnutella and MiCo), while other times ours are more efficient (Epinions, Slashdot, and wiki-Vote). This is because the memory utilization reported is the maximum utilization and it is possible that at a particular time in the entire process, we might have had a slightly higher memory utilization.

\begin{table}
  \centering
  \caption{Searched patterns}
  \label{tab:searchedPatterns}
  No. of searched patterns for different support values for Gnutella.
  $|S_g|$ for MNI,
  $|S_f|$ for \acs{mal}, and 
  $|S_t|$ for MNI-Fractional.
  T-FSM's version of MNI is used due to its higher efficiency than GraMi~\cite{yuan:2023:tfsm}\\
  \begin{tabular}{cccc}
    \toprule
      & $|S_g|$ & $|S_f|$ & $|S_t|$ \\
    support &  &  &  \\
    \midrule
    % \multirow[t]{5}{*}{mico} & 9460 & 198 & 0 & 2 \\
    %  & 9470 & 194 & 0 & 2 \\
    %  & 9480 & 194 & 0 & 2 \\
    %  & 9490 & 168 & 0 & 2 \\
    %  & 9500 & 168 & 0 & 2 \\
    % \cline{1-5}
    % \multirow[t]{11}{*}{p2p-Gnutella08} & 45 & 363475 & 140546 & 54728 \\
    %  & 47 & 242253 & 121687 & 53587 \\
    %  & 49 & 166260 & 104898 & 52648 \\
    %  & 51 & 122109 & 76129 & 51300 \\
    %  & 53 & 95081 & 62932 & 49079 \\
    %  & 55 & 82981 & 52081 & 46220 \\
     57 & 75043 & 34236 & 43110 \\
     59 & 70910 & 27022 & 38035 \\
     61 & 68727 & 19865 & 32573 \\
     63 & 67853 & 13044 & 26657 \\
     65 & 67025 & 10431 & 20879 \\
    % \cline{1-5}
    % \multirow[t]{11}{*}{soc-Epinions1} & 1500 & 3573 & 0 & 1700 \\
    %  & 1550 & 2375 & 0 & 1700 \\
    %  & 1600 & 2375 & 0 & 1700 \\
    %  & 1650 & 2375 & 0 & 1700 \\
    %  & 1700 & 2375 & 0 & 1700 \\
    %  & 1750 & 2375 & 0 & 1700 \\
    %  & 1800 & 2375 & 0 & 1546 \\
    %  & 1850 & 2375 & 0 & 721 \\
    %  & 1900 & 2375 & 0 & 29 \\
    %  & 1950 & 2375 & 0 & 0 \\
    %  & 2000 & 2375 & 0 & 0 \\
    % \cline{1-5}
    % \multirow[t]{7}{*}{soc-Slashdot0902} & 2000 & 0 & 0 & 1700 \\
    %  & 2250 & 0 & 0 & 1700 \\
    %  & 2500 & 23617 & 0 & 1700 \\
    %  & 2750 & 2375 & 0 & 253 \\
    %  & 3000 & 2375 & 0 & 100 \\
    %  & 3250 & 2375 & 0 & 100 \\
    %  & 3500 & 2180 & 0 & 100 \\
    % \cline{1-5}
    % \multirow[t]{11}{*}{wiki-Vote} & 250 & 0 & 0 & 0 \\
    %  & 275 & 0 & 0 & 0 \\
    %  & 300 & 0 & 0 & 14005 \\
    %  & 325 & 0 & 0 & 1199 \\
    %  & 350 & 0 & 0 & 67 \\
    %  & 375 & 220082 & 0 & 0 \\
    %  & 400 & 5667 & 0 & 0 \\
    %  & 425 & 2375 & 0 & 0 \\
    %  & 450 & 2327 & 0 & 0 \\
    %  & 475 & 2243 & 0 & 0 \\
    %  & 500 & 1585 & 0 & 0 \\
    % \cline{1-5}
    \bottomrule
\end{tabular}
    
\end{table}

\subsection{Frequent Subgraph Instances}
\cref{tab:searchedPatterns} shows that we conducted fewer searches when compared to GraMi and T-FSM algorithms, for a $\lambda$ value of $0.5$. The reason we conducted fewer searches is because of our way of choosing which patterns to look for.
We focused only on exploring graphs that could be made from the frequent graphs we found earlier. In contrast to GraMi, which spends time looking at all the possible extensions that could come from the frequent graphs in previous steps.

However, \ac{flexis} searches more candidate patterns as the effective threshold $\tau$ reduces as $\lambda$ decreases. Specifically, \ac{flexis} ($\lambda=0.4$) searches 7.2x and 5.7x more than GraMi and T-FSM fractional. This might be because, we find more frequent patterns in at $\lambda=0.4$ when compared to GraMi and T-FSM, as shown in \cref{fig:log:freq:Gnu,fig:log:freq:dot}. Precisely we found 8.8x and 2.4x more than frequent patterns than GraMi and T-FSM. Note as the number of found patterns increases, the number of patterns that can be combined to form a new candidate pattern also increases.

\begin{table}[htb]
    \centering
    \caption{Similarity with \ac{flexis} $\lambda = 0.4$ on Epinions}
    \label{tab:similarity}
    \begin{tabular}{|c|c|c|c|c|c|c|}
  \toprule
  $\sigma$ & $\abs{V_p}$ & $|f_g|$ & $|f_f|$ & $|f_t|$ & $|f_f\cap f_g|$ & $|f_f\cap f_t|$ \\
  \midrule
  % wiki-Vote & 300 & 2 & 78 & 75 & 72 & 55 & 32 \\ \hline
  % wiki-Vote & 300 & 3 & 13 & 44 & 98 & 6 & 0 \\
  1600 & 2 & 125 & 100 & 75 & 100 & 50 \\ \hline
  1700 & 2 & 108 & 100 & 75 & 86 & 50 \\
  % Gnutella & 51 & 2 & 125 & 125 & 75 & 125 & 75 \\ \hline
  % Gnutella & 51 & 3 & 112 & 1625 & 1500 & 105 & 383 \\ \hline
  % Gnutella & 59 & 2 & 125 & 125 & 75  & 125 & 75 \\ \hline
  % Gnutella & 59 & 3 & 6 & 1032 & 1092 & 3 & 1095 \\ \hline

  % wiki-Vote & 300 & 0.4 & 4 & 581 & 2013 & 309 \\ \hline
  % wiki-Vote & 250 & 0.4 & 5 & 0 & 482 & 0 \\ \hline
  % Epinions & 1600 & 0.4 & 2 & 125 & 100 & 100 \\ \hline
  % Epinions & 1625 & 0.4 & 2 & 125 & 100 & 100 \\ \hline
  % Gnutella & 60 & 0.4 & 2 & 125 & 125 & 125 \\ \hline
  % Gnutella & 60 & 0.4 & 3 & 4 & 915 & 1 \\ \hline
  % Gnutella & 70 & 0.4 & 2 & 125 & 125 & 125 \\ \hline
  % Gnutella & 70 & 0.4 & 3 & 0 & 314 & 0 \\ \hline
  \bottomrule
\end{tabular}

\end{table}

\subsection{Difference in identified Frequent Patterns}

As can be noticed in~\cref{tab:similarity}, there is not a complete intersection between both, this can be attributed to the following reasons,
\begin{itemize}
    \item \textbf{Difference in the approach:} Our method uses an undirected data processing and directed matcher method, while GraMi uses fully directed search. So there is a possibility that we are getting some frequent patterns, which GraMi does not find.
    \item \textbf{Difference in over-estimation:} In lower slider values, \ac{mal} tends to over-estimate due to the approximation applied to the threshold value. In contrast, the overestimation of GraMi is due to the recounting of the same vertices, considering automorphism and so on. Thus, the over-estimated patterns might not match between \ac{flexis} and GraMi.
\end{itemize}

% \subsection[FLEXIS slider]{FLEXIS $\lambda$}
% We experimented \ac{flexis} with different \ac{lerp} values. Recall, a  \ac{lerp} value of $1.0$ means no approximation and $0.0$ means all the frequent patterns which satisfy $\frac{\sigma}{\abs{V_p}}$. We emphasize the results produced by $0.4$ as it most closely has the same characteristics as GraMi, in terms of several pattern graphs being searched. While searching in a similar search space, we find more pattern graphs due to a lower effective threshold $\tau$. However, we generate only better candidate pattern generation.

\newpage
\section{Related Work}~\label{sec:related}

The solution to the FSM problem can be largely derived from a two-step process, primarily extended from the work of ~\cite{nguyen:2022:subgraph}. The initial step, referred to as the \sqoutes{Generation step}, involves the creation of all possible candidate patterns and potential subgraph candidates. Following this, the candidate subgraph space is streamlined by pruned automorphisms. Lastly, the \sqoutes{Metric step} is the number of isomorphic instances among the selected candidates within the data graph, determining whether it exceeds predefined support. Many of the algorithms presented in the context of Subgraph Isomorphism focus on modifying one or more of these components to enhance overall performance.

In the generation step, all possible subgraphs can be achieved through two main methods: edge extension and vertex extension. In the edge extension methods, a candidate subgraph is derived from an existing frequent candidate by adding an edge. This process can be accomplished in two distinct ways. Initially, the path extension approach is employed, where vertices within the frequent subgraph are expanded by adding edges, as illustrated in previous works such as Peregrine, GSpan, Pangolin, and Arabesque~\cite{jamshidi:2020:peregrine,yan:2002:gspan,chen:2020:pangolin, teixeira:2015:arabesque}. Alternatively, the utilization of level-wise methods becomes prominent, wherein two frequent subgraphs of order $k$ combine to form a subgraph of order $k+1$~\cite{kuramochi:2005:finding}. In the context of vertex extension methods, the frequent subgraphs are expanded by adding a vertex~\cite{chen:2020:pangolin, teixeira:2015:arabesque}. Notably, there has been limited exploration into the extension of level-wise notions within the context of vertex extension scenarios. In this paper, we introduce a novel approach that integrates the level-wise methodology into vertex extension methods.  

Within the generation step, the redundant and automorphic candidate graphs must be eliminated. The common practice involves the utilization of canonical representations to eliminate redundant candidate graphs that are generated. Diverse canonical representations exist, including the Minimum DFS Code representation~\cite{yan:2002:gspan}, Bliss~\cite{junttila:2007:engineering}, Eigen Value representations~\cite{zhao:2020:kaleido}, Canonical Adjacency matrix representations, as well as Breadth First Canonical String and Depth First Canonical String representations~\cite{jiang:2013:survey}, among others. Given its exceptional efficiency, Bliss is selected as the method of choice in this paper.

The last step of FSM is the `Metric' phase, wherein the generated candidates are compared with the data graphs to ascertain the frequency of occurrence (matches) of a specific candidate within the data graph. This comparison also involves assessing whether the match count exceeds a predefined threshold. Subgraph isomorphism methods are employed for match determination. Varied iterations of this phase exist within the literature. The gold standard is utilizing the Maximum Independent Set~\cite{fiedler:2007:support}, which identifies the maximal number of disjoint subgraph isomorphisms in the primary graph, an NP-complete problem in itself. An alternative, SATMargin~\cite{liu:2022:satmargin}, employs random walks, framing the subgraph mining challenge as an SAT problem and employing an SAT algorithm (such as CryptoMiniSat~\cite{soos:2019:cryptominisat}) for its resolution. Notably, the most prevailing approximation method is \ac{mni}~\cite{bringmann:2008:frequent}, which tends to overestimate the number of isomorphic patterns by allowing certain overlaps. \ac{mni} finds extensive utilization either directly or with subsequent optimization strategies~\cite{elseidy:2014:grami,jamshidi:2020:peregrine,chen:2020:pangolin,chen:2021:sandslash}. In the context of GraMi~\cite{elseidy:2014:grami}, graph mining is conceptualized as a Constraint Satisfaction Problem (CSP) that uses \ac{mni} to prune the search space. Other metrics that are an extension to \ac{mni}, such as a fractional score, which alleviates the over-estimation of \ac{mni} have also been proposed~\cite{yuan:2023:tfsm}. While this method succeeds in reducing the overestimation to a limit, it still shows a substantial overestimation. Therefore, this paper introduces an innovative metric termed 'Maximal Independent Set,' an approximation of the Maximum Independent Set that provides the users with a slider to control the overlap that is required with the Maximum Independent set. While we might also overestimate, it can be brought to the minimum if the user chooses to. Notably, our proposed method diverges from the concepts of maximal frequent subgraph as discussed in SATMargin~\cite{liu:2022:satmargin} and Margin~\cite{thomas:2010:margin}, wherein the focus is on determining the maximum feasible size of a frequent subgraph, as opposed to utilizing it as a metric, as shown in our study.

While tangential to the core focus of this paper, the subsequent passages are included to provide a definitive categorization of our study.

FSM algorithms can be classified into two distinct categories: 1) Transactional methods, and 2) Single-graph methods. Transactional algorithms~\cite{liu:2022:satmargin, thomas:2010:margin}, define a frequent pattern as one that occurs across a set number of graphs. In other words, within a collection of $N$ data graphs, a pattern is deemed frequent if it emerges in more than \ac{support} data graphs. Conversely, the single graph paradigm considers a pattern frequent if it repeats a predefined number of times within a single large graph~\cite{elseidy:2014:grami, nguyen:2021:method, mawhirter:2021:graphzero, jamshidi:2020:peregrine, chen:2022:efficient}. This paper squarely falls within the realm of Single-graph methods. Notably, there exist other methodological extensions within FSM that aim to ascertain precise support counts~\cite{nguyen:2020:fast}, solely consider closed-form solutions~\cite{nguyen:2021:method} or delve into weighted subgraph mining~\cite{le:2020:mining}. These extensions, however, lie outside the scope of our current investigation.

Moreover, Subgraph Mining can be achieved through diverse ways, spanning CPU-based approaches~\cite{elseidy:2014:grami, mawhirter:2019:automine, mawhirter:2021:graphzero, aberger:2017:emptyheaded, shi:2020:graphpi, jamshidi:2020:peregrine}, GPU-accelerated techniques~\cite{chen:2020:pangolin, guo:2020:gpu, guo:2020:exploiting, chen:2022:efficient}, Distributed Systems~\cite{salem:2021:rasma}, near-memory architectures~\cite{dai:2022:dimmining}, and out-of-core systems~\cite{zhao:2020:kaleido}. In this present study, we focus exclusively on CPU-based architectures. However, it is important to note that our proposed method exhibits adaptability for potential extension to GPU and other architectural frameworks.

\section{Conclusion}~\label{sec:conclusion}
In this paper, we introduce a novel metric based on maximal independent sets (MIS) that enables users to fine-tune the level of overlap they require with the original results. This level of customizability can be invaluable, particularly in applications where precision is paramount. We also proposed an innovative technique for efficiently generating the candidate search space by merging previously identified frequent subgraphs. This approach significantly reduced the computational overhead associated with graph mining, making it more accessible and practical. By incorporating these methods we showed that our method performs better than the existing state-of-art GraMi and T-FSM. In essence, our contributions pave the way for more effective and versatile graph mining techniques, expanding their utility across a wide spectrum of applications.

\bibliographystyle{bibstyle/ACM-Reference-Format}
\bibliography{references}

%%% -*-BibTeX-*-
%%% Do NOT edit. File created by BibTeX with style
%%% ACM-Reference-Format-Journals [18-Jan-2012].

\begin{thebibliography}{47}

%%% ====================================================================
%%% NOTE TO THE USER: you can override these defaults by providing
%%% customized versions of any of these macros before the \bibliography
%%% command.  Each of them MUST provide its own final punctuation,
%%% except for \shownote{}, \showDOI{}, and \showURL{}.  The latter two
%%% do not use final punctuation, in order to avoid confusing it with
%%% the Web address.
%%%
%%% To suppress output of a particular field, define its macro to expand
%%% to an empty string, or better, \unskip, like this:
%%%
%%% \newcommand{\showDOI}[1]{\unskip}   % LaTeX syntax
%%%
%%% \def \showDOI #1{\unskip}           % plain TeX syntax
%%%
%%% ====================================================================

\ifx \showCODEN    \undefined \def \showCODEN     #1{\unskip}     \fi
\ifx \showDOI      \undefined \def \showDOI       #1{#1}\fi
\ifx \showISBNx    \undefined \def \showISBNx     #1{\unskip}     \fi
\ifx \showISBNxiii \undefined \def \showISBNxiii  #1{\unskip}     \fi
\ifx \showISSN     \undefined \def \showISSN      #1{\unskip}     \fi
\ifx \showLCCN     \undefined \def \showLCCN      #1{\unskip}     \fi
\ifx \shownote     \undefined \def \shownote      #1{#1}          \fi
\ifx \showarticletitle \undefined \def \showarticletitle #1{#1}   \fi
\ifx \showURL      \undefined \def \showURL       {\relax}        \fi
% The following commands are used for tagged output and should be
% invisible to TeX
\providecommand\bibfield[2]{#2}
\providecommand\bibinfo[2]{#2}
\providecommand\natexlab[1]{#1}
\providecommand\showeprint[2][]{arXiv:#2}

\bibitem[\protect\citeauthoryear{Aberger, Lamb, Tu, N{\"o}tzli, Olukotun, and R{\'e}}{Aberger et~al\mbox{.}}{2017}]%
        {aberger:2017:emptyheaded}
\bibfield{author}{\bibinfo{person}{Christopher~R Aberger}, \bibinfo{person}{Andrew Lamb}, \bibinfo{person}{Susan Tu}, \bibinfo{person}{Andres N{\"o}tzli}, \bibinfo{person}{Kunle Olukotun}, {and} \bibinfo{person}{Christopher R{\'e}}.} \bibinfo{year}{2017}\natexlab{}.
\newblock \showarticletitle{Emptyheaded: A relational engine for graph processing}.
\newblock \bibinfo{journal}{\emph{ACM Transactions on Database Systems (TODS)}} \bibinfo{volume}{42}, \bibinfo{number}{4} (\bibinfo{year}{2017}), \bibinfo{pages}{1--44}.
\newblock


\bibitem[\protect\citeauthoryear{Bindu and Thilagam}{Bindu and Thilagam}{2016}]%
        {bindu:2016:mining}
\bibfield{author}{\bibinfo{person}{PV Bindu} {and} \bibinfo{person}{P~Santhi Thilagam}.} \bibinfo{year}{2016}\natexlab{}.
\newblock \showarticletitle{Mining social networks for anomalies: Methods and challenges}.
\newblock \bibinfo{journal}{\emph{Journal of Network and Computer Applications}}  \bibinfo{volume}{68} (\bibinfo{year}{2016}), \bibinfo{pages}{213--229}.
\newblock


\bibitem[\protect\citeauthoryear{Bringmann and Nijssen}{Bringmann and Nijssen}{2008}]%
        {bringmann:2008:frequent}
\bibfield{author}{\bibinfo{person}{Bj{\"o}rn Bringmann} {and} \bibinfo{person}{Siegfried Nijssen}.} \bibinfo{year}{2008}\natexlab{}.
\newblock \showarticletitle{What is frequent in a single graph?}. In \bibinfo{booktitle}{\emph{Advances in Knowledge Discovery and Data Mining: 12th Pacific-Asia Conference, PAKDD 2008 Osaka, Japan, May 20-23, 2008 Proceedings 12}}. Springer, \bibinfo{publisher}{Springer Berlin Heidelberg}, \bibinfo{address}{Berlin, Heidelberg}, \bibinfo{pages}{858--863}.
\newblock
\urldef\tempurl%
\url{https://doi.org/10.1007/978-3-540-68125-0_84}
\showDOI{\tempurl}


\bibitem[\protect\citeauthoryear{Carletti, Foggia, Greco, Saggese, and Vento}{Carletti et~al\mbox{.}}{2018}]%
        {carletti:2018:vf3}
\bibfield{author}{\bibinfo{person}{Vincenzo Carletti}, \bibinfo{person}{Pasquale Foggia}, \bibinfo{person}{Antonio Greco}, \bibinfo{person}{Alessia Saggese}, {and} \bibinfo{person}{Mario Vento}.} \bibinfo{year}{2018}\natexlab{}.
\newblock \showarticletitle{The VF3-light subgraph isomorphism algorithm: when doing less is more effective}. In \bibinfo{booktitle}{\emph{Structural, Syntactic, and Statistical Pattern Recognition: Joint IAPR International Workshop, S+ SSPR 2018, Beijing, China, August 17--19, 2018, Proceedings 9}}. Springer, \bibinfo{publisher}{Springer Berlin Heidelberg}, \bibinfo{address}{Berlin, Heidelberg}, \bibinfo{pages}{315--325}.
\newblock
\urldef\tempurl%
\url{https://doi.org/10.1007/978-3-319-97785-0_30}
\showDOI{\tempurl}


\bibitem[\protect\citeauthoryear{Carletti, Foggia, Greco, Vento, and Vigilante}{Carletti et~al\mbox{.}}{2019}]%
        {carletti:2019:vf3}
\bibfield{author}{\bibinfo{person}{Vincenzo Carletti}, \bibinfo{person}{Pasquale Foggia}, \bibinfo{person}{Antonio Greco}, \bibinfo{person}{Mario Vento}, {and} \bibinfo{person}{Vincenzo Vigilante}.} \bibinfo{year}{2019}\natexlab{}.
\newblock \showarticletitle{VF3-Light: a lightweight subgraph isomorphism algorithm and its experimental evaluation}.
\newblock \bibinfo{journal}{\emph{Pattern Recognition Letters}}  \bibinfo{volume}{125} (\bibinfo{year}{2019}), \bibinfo{pages}{591--596}.
\newblock
\urldef\tempurl%
\url{https://doi.org/10.1016/j.patrec.2019.07.001}
\showDOI{\tempurl}


\bibitem[\protect\citeauthoryear{Chen et~al\mbox{.}}{Chen et~al\mbox{.}}{2022}]%
        {chen:2022:efficient}
\bibfield{author}{\bibinfo{person}{Xuhao Chen} {et~al\mbox{.}}} \bibinfo{year}{2022}\natexlab{}.
\newblock \showarticletitle{Efficient and Scalable Graph Pattern Mining on $\{$GPUs$\}$}. In \bibinfo{booktitle}{\emph{16th USENIX Symposium on Operating Systems Design and Implementation (OSDI 22)}}. \bibinfo{publisher}{UNENIX Association}, \bibinfo{address}{2560 Ninth St. Suite 215 Berkeley, CA}, \bibinfo{pages}{857--877}.
\newblock


\bibitem[\protect\citeauthoryear{Chen, Dathathri, Gill, Hoang, and Pingali}{Chen et~al\mbox{.}}{2021}]%
        {chen:2021:sandslash}
\bibfield{author}{\bibinfo{person}{Xuhao Chen}, \bibinfo{person}{Roshan Dathathri}, \bibinfo{person}{Gurbinder Gill}, \bibinfo{person}{Loc Hoang}, {and} \bibinfo{person}{Keshav Pingali}.} \bibinfo{year}{2021}\natexlab{}.
\newblock \showarticletitle{Sandslash: a two-level framework for efficient graph pattern mining}. In \bibinfo{booktitle}{\emph{Proceedings of the ACM International Conference on Supercomputing}}. ACM, \bibinfo{publisher}{Association for Computing Machinery}, \bibinfo{address}{New York, NY, USA}, \bibinfo{pages}{378--391}.
\newblock
\urldef\tempurl%
\url{https://doi.org/10.48550/arXiv.2011.03135}
\showDOI{\tempurl}


\bibitem[\protect\citeauthoryear{Chen, Dathathri, Gill, and Pingali}{Chen et~al\mbox{.}}{2020}]%
        {chen:2020:pangolin}
\bibfield{author}{\bibinfo{person}{Xuhao Chen}, \bibinfo{person}{Roshan Dathathri}, \bibinfo{person}{Gurbinder Gill}, {and} \bibinfo{person}{Keshav Pingali}.} \bibinfo{year}{2020}\natexlab{}.
\newblock \showarticletitle{Pangolin: An efficient and flexible graph mining system on cpu and gpu}.
\newblock \bibinfo{journal}{\emph{Proceedings of the VLDB Endowment}} \bibinfo{volume}{13}, \bibinfo{number}{8} (\bibinfo{year}{2020}), \bibinfo{pages}{1190--1205}.
\newblock


\bibitem[\protect\citeauthoryear{Clark and Holton}{Clark and Holton}{2005}]%
        {Clark:2005:AFirstLookAtGraphTheory}
\bibfield{author}{\bibinfo{person}{John Clark} {and} \bibinfo{person}{Derek~Allan Holton}.} \bibinfo{year}{2005}\natexlab{}.
\newblock \bibinfo{booktitle}{\emph{A first look at graph theory}}.
\newblock \bibinfo{publisher}{World Scientific}, \bibinfo{address}{Singapore}.
\newblock


\bibitem[\protect\citeauthoryear{Dai, Zhu, Fu, Wei, Wang, Li, Xie, Yang, and Wang}{Dai et~al\mbox{.}}{2022}]%
        {dai:2022:dimmining}
\bibfield{author}{\bibinfo{person}{Guohao Dai}, \bibinfo{person}{Zhenhua Zhu}, \bibinfo{person}{Tianyu Fu}, \bibinfo{person}{Chiyue Wei}, \bibinfo{person}{Bangyan Wang}, \bibinfo{person}{Xiangyu Li}, \bibinfo{person}{Yuan Xie}, \bibinfo{person}{Huazhong Yang}, {and} \bibinfo{person}{Yu Wang}.} \bibinfo{year}{2022}\natexlab{}.
\newblock \showarticletitle{Dimmining: pruning-efficient and parallel graph mining on near-memory-computing}. In \bibinfo{booktitle}{\emph{Proceedings of the 49th Annual International Symposium on Computer Architecture}}. \bibinfo{publisher}{Association for Computing Machinery}, \bibinfo{address}{New York, NY, USA}, \bibinfo{pages}{130--145}.
\newblock
\urldef\tempurl%
\url{https://doi.org/10.1145/3470496.3527388}
\showDOI{\tempurl}


\bibitem[\protect\citeauthoryear{Elseidy, Abdelhamid, Skiadopoulos, and Kalnis}{Elseidy et~al\mbox{.}}{2014}]%
        {elseidy:2014:grami}
\bibfield{author}{\bibinfo{person}{Mohammed Elseidy}, \bibinfo{person}{Ehab Abdelhamid}, \bibinfo{person}{Spiros Skiadopoulos}, {and} \bibinfo{person}{Panos Kalnis}.} \bibinfo{year}{2014}\natexlab{}.
\newblock \showarticletitle{Grami: Frequent subgraph and pattern mining in a single large graph}.
\newblock \bibinfo{journal}{\emph{Proceedings of the VLDB Endowment}} \bibinfo{volume}{7}, \bibinfo{number}{7} (\bibinfo{year}{2014}), \bibinfo{pages}{517--528}.
\newblock
\urldef\tempurl%
\url{https://doi.org/10.14778/2732286.2732289}
\showDOI{\tempurl}


\bibitem[\protect\citeauthoryear{Fiedler and Borgelt}{Fiedler and Borgelt}{2007}]%
        {fiedler:2007:support}
\bibfield{author}{\bibinfo{person}{Mathias Fiedler} {and} \bibinfo{person}{Christian Borgelt}.} \bibinfo{year}{2007}\natexlab{}.
\newblock \showarticletitle{Support computation for mining frequent subgraphs in a single graph}. In \bibinfo{booktitle}{\emph{MLG}}. \bibinfo{publisher}{Association for Computing Machinery}, \bibinfo{address}{New York, NY, USA}.
\newblock


\bibitem[\protect\citeauthoryear{Guo, Li, Sha, He, Xiao, and Tan}{Guo et~al\mbox{.}}{2020b}]%
        {guo:2020:gpu}
\bibfield{author}{\bibinfo{person}{Wentian Guo}, \bibinfo{person}{Yuchen Li}, \bibinfo{person}{Mo Sha}, \bibinfo{person}{Bingsheng He}, \bibinfo{person}{Xiaokui Xiao}, {and} \bibinfo{person}{Kian-Lee Tan}.} \bibinfo{year}{2020}\natexlab{b}.
\newblock \showarticletitle{Gpu-accelerated subgraph enumeration on partitioned graphs}. In \bibinfo{booktitle}{\emph{Proceedings of the 2020 ACM SIGMOD International Conference on Management of Data}}. \bibinfo{publisher}{Association for Computing Machinery}, \bibinfo{address}{New York, NY, USA}, \bibinfo{pages}{1067--1082}.
\newblock


\bibitem[\protect\citeauthoryear{Guo, Li, and Tan}{Guo et~al\mbox{.}}{2020a}]%
        {guo:2020:exploiting}
\bibfield{author}{\bibinfo{person}{Wentian Guo}, \bibinfo{person}{Yuchen Li}, {and} \bibinfo{person}{Kian-Lee Tan}.} \bibinfo{year}{2020}\natexlab{a}.
\newblock \showarticletitle{Exploiting reuse for GPU subgraph enumeration}.
\newblock \bibinfo{journal}{\emph{IEEE Transactions on Knowledge and Data Engineering}} \bibinfo{volume}{34}, \bibinfo{number}{9} (\bibinfo{year}{2020}), \bibinfo{pages}{4231--4244}.
\newblock


\bibitem[\protect\citeauthoryear{Jamshidi, Mahadasa, and Vora}{Jamshidi et~al\mbox{.}}{2020}]%
        {jamshidi:2020:peregrine}
\bibfield{author}{\bibinfo{person}{Kasra Jamshidi}, \bibinfo{person}{Rakesh Mahadasa}, {and} \bibinfo{person}{Keval Vora}.} \bibinfo{year}{2020}\natexlab{}.
\newblock \showarticletitle{Peregrine: a pattern-aware graph mining system}. In \bibinfo{booktitle}{\emph{Proceedings of the Fifteenth European Conference on Computer Systems}}. \bibinfo{publisher}{Association for Computing Machinery}, \bibinfo{address}{New York, NY, USA}, \bibinfo{pages}{1--16}.
\newblock
\urldef\tempurl%
\url{https://doi.org/10.48550/arXiv.2004.02369}
\showDOI{\tempurl}


\bibitem[\protect\citeauthoryear{Jiang, Coenen, and Zito}{Jiang et~al\mbox{.}}{2013}]%
        {jiang:2013:survey}
\bibfield{author}{\bibinfo{person}{Chuntao Jiang}, \bibinfo{person}{Frans Coenen}, {and} \bibinfo{person}{Michele Zito}.} \bibinfo{year}{2013}\natexlab{}.
\newblock \showarticletitle{A survey of frequent subgraph mining algorithms}.
\newblock \bibinfo{journal}{\emph{The Knowledge Engineering Review}} \bibinfo{volume}{28}, \bibinfo{number}{1} (\bibinfo{year}{2013}), \bibinfo{pages}{75--105}.
\newblock
\urldef\tempurl%
\url{https://doi.org/10.1017/S0269888912000331}
\showDOI{\tempurl}


\bibitem[\protect\citeauthoryear{Junttila and Kaski}{Junttila and Kaski}{2007a}]%
        {JunttilaKaski:ALENEX:2007:bliss}
\bibfield{author}{\bibinfo{person}{Tommi Junttila} {and} \bibinfo{person}{Petteri Kaski}.} \bibinfo{year}{2007}\natexlab{a}.
\newblock \showarticletitle{Engineering an efficient canonical labeling tool for large and sparse graphs}. In \bibinfo{booktitle}{\emph{Proceedings of the Ninth Workshop on Algorithm Engineering and Experiments and the Fourth Workshop on Analytic Algorithms and Combinatorics}}, \bibfield{editor}{\bibinfo{person}{David Applegate}, \bibinfo{person}{Gerth~St{\o}lting Brodal}, \bibinfo{person}{Daniel Panario}, {and} \bibinfo{person}{Robert Sedgewick}} (Eds.). \bibinfo{publisher}{SIAM}, \bibinfo{address}{Philadelphia}, \bibinfo{pages}{135--149}.
\newblock
\urldef\tempurl%
\url{https://doi.org/10.1137/1.9781611972870.13}
\showDOI{\tempurl}


\bibitem[\protect\citeauthoryear{Junttila and Kaski}{Junttila and Kaski}{2007b}]%
        {junttila:2007:engineering}
\bibfield{author}{\bibinfo{person}{Tommi Junttila} {and} \bibinfo{person}{Petteri Kaski}.} \bibinfo{year}{2007}\natexlab{b}.
\newblock \showarticletitle{Engineering an efficient canonical labeling tool for large and sparse graphs}. In \bibinfo{booktitle}{\emph{2007 Proceedings of the Ninth Workshop on Algorithm Engineering and Experiments (ALENEX)}}. SIAM, \bibinfo{publisher}{Society for Industrial and Applied Mathematics}, \bibinfo{address}{Philadelphia, Pennsylvania}, \bibinfo{pages}{135--149}.
\newblock
\urldef\tempurl%
\url{https://doi.org/10.5555/2791188.2791201}
\showDOI{\tempurl}


\bibitem[\protect\citeauthoryear{Junttila and Kaski}{Junttila and Kaski}{2011}]%
        {JunttilaKaski:TAPAS:2011:bliss}
\bibfield{author}{\bibinfo{person}{Tommi Junttila} {and} \bibinfo{person}{Petteri Kaski}.} \bibinfo{year}{2011}\natexlab{}.
\newblock \showarticletitle{Conflict Propagation and Component Recursion for Canonical Labeling}. In \bibinfo{booktitle}{\emph{Theory and Practice of Algorithms in (Computer) Systems -- First International {ICST} Conference, {TAPAS} 2011, Rome, Italy, April 18--20, 2011. Proceedings}} \emph{(\bibinfo{series}{Lecture Notes in Computer Science})}, \bibfield{editor}{\bibinfo{person}{Alberto Marchetti{-}Spaccamela} {and} \bibinfo{person}{Michael Segal}} (Eds.), Vol.~\bibinfo{volume}{6595}. \bibinfo{publisher}{Springer}, \bibinfo{address}{Berlin, Heidelberg}, \bibinfo{pages}{151--162}.
\newblock
\urldef\tempurl%
\url{https://doi.org/10.1007/978-3-642-19754-3\_16}
\showDOI{\tempurl}


\bibitem[\protect\citeauthoryear{Kingan}{Kingan}{2022}]%
        {kingan:2022:GraphsAndNetworks}
\bibfield{author}{\bibinfo{person}{Sandra~R. Kingan}.} \bibinfo{year}{2022}\natexlab{}.
\newblock \bibinfo{booktitle}{\emph{Graphs and Networks}}.
\newblock \bibinfo{publisher}{Wiley}, \bibinfo{address}{Hoboken, NJ}.
\newblock


\bibitem[\protect\citeauthoryear{Kong, Huang, Tan, and Liu}{Kong et~al\mbox{.}}{2022}]%
        {kong:2022:molecule}
\bibfield{author}{\bibinfo{person}{Xiangzhe Kong}, \bibinfo{person}{Wenbing Huang}, \bibinfo{person}{Zhixing Tan}, {and} \bibinfo{person}{Yang Liu}.} \bibinfo{year}{2022}\natexlab{}.
\newblock \showarticletitle{Molecule generation by principal subgraph mining and assembling}.
\newblock \bibinfo{journal}{\emph{Advances in Neural Information Processing Systems}}  \bibinfo{volume}{35} (\bibinfo{year}{2022}), \bibinfo{pages}{2550--2563}.
\newblock


\bibitem[\protect\citeauthoryear{Kuramochi and Karypis}{Kuramochi and Karypis}{2004}]%
        {kuramochi:2004:efficient}
\bibfield{author}{\bibinfo{person}{M. Kuramochi} {and} \bibinfo{person}{G. Karypis}.} \bibinfo{year}{2004}\natexlab{}.
\newblock \showarticletitle{An efficient algorithm for discovering frequent subgraphs}.
\newblock \bibinfo{journal}{\emph{IEEE Transactions on Knowledge and Data Engineering}} \bibinfo{volume}{16}, \bibinfo{number}{9} (\bibinfo{year}{2004}), \bibinfo{pages}{1038--1051}.
\newblock
\urldef\tempurl%
\url{https://doi.org/10.1109/TKDE.2004.33}
\showDOI{\tempurl}


\bibitem[\protect\citeauthoryear{Kuramochi and Karypis}{Kuramochi and Karypis}{2005}]%
        {kuramochi:2005:finding}
\bibfield{author}{\bibinfo{person}{Michihiro Kuramochi} {and} \bibinfo{person}{George Karypis}.} \bibinfo{year}{2005}\natexlab{}.
\newblock \showarticletitle{Finding frequent patterns in a large sparse graph}.
\newblock \bibinfo{journal}{\emph{Data mining and knowledge discovery}} \bibinfo{volume}{11}, \bibinfo{number}{3} (\bibinfo{year}{2005}), \bibinfo{pages}{243--271}.
\newblock
\urldef\tempurl%
\url{https://doi.org/10.1007/s10618-005-0003-9}
\showDOI{\tempurl}


\bibitem[\protect\citeauthoryear{Le, Vo, Nguyen, Fujita, and Le}{Le et~al\mbox{.}}{2020}]%
        {le:2020:mining}
\bibfield{author}{\bibinfo{person}{Ngoc-Thao Le}, \bibinfo{person}{Bay Vo}, \bibinfo{person}{Lam~BQ Nguyen}, \bibinfo{person}{Hamido Fujita}, {and} \bibinfo{person}{Bac Le}.} \bibinfo{year}{2020}\natexlab{}.
\newblock \showarticletitle{Mining weighted subgraphs in a single large graph}.
\newblock \bibinfo{journal}{\emph{Information Sciences}}  \bibinfo{volume}{514} (\bibinfo{year}{2020}), \bibinfo{pages}{149--165}.
\newblock
\urldef\tempurl%
\url{https://doi.org/10.1016/j.ins.2019.12.010}
\showDOI{\tempurl}


\bibitem[\protect\citeauthoryear{Leskovec, Huttenlocher, and Kleinberg}{Leskovec et~al\mbox{.}}{2010a}]%
        {leskovec:2010:predicting}
\bibfield{author}{\bibinfo{person}{Jure Leskovec}, \bibinfo{person}{Daniel Huttenlocher}, {and} \bibinfo{person}{Jon Kleinberg}.} \bibinfo{year}{2010}\natexlab{a}.
\newblock \showarticletitle{Predicting positive and negative links in online social networks}. In \bibinfo{booktitle}{\emph{Proceedings of the 19th international conference on World wide web}}. \bibinfo{publisher}{Association for Computing Machinery}, \bibinfo{address}{New York, NY, USA}, \bibinfo{pages}{641--650}.
\newblock


\bibitem[\protect\citeauthoryear{Leskovec, Huttenlocher, and Kleinberg}{Leskovec et~al\mbox{.}}{2010b}]%
        {leskovec:2010:signed}
\bibfield{author}{\bibinfo{person}{Jure Leskovec}, \bibinfo{person}{Daniel Huttenlocher}, {and} \bibinfo{person}{Jon Kleinberg}.} \bibinfo{year}{2010}\natexlab{b}.
\newblock \showarticletitle{Signed networks in social media}. In \bibinfo{booktitle}{\emph{Proceedings of the SIGCHI conference on human factors in computing systems}}. \bibinfo{publisher}{Association for Computing Machinery}, \bibinfo{address}{New York, NY, USA}, \bibinfo{pages}{1361--1370}.
\newblock


\bibitem[\protect\citeauthoryear{Leskovec, Kleinberg, and Faloutsos}{Leskovec et~al\mbox{.}}{2007}]%
        {leskovec:2007:graph}
\bibfield{author}{\bibinfo{person}{Jure Leskovec}, \bibinfo{person}{Jon Kleinberg}, {and} \bibinfo{person}{Christos Faloutsos}.} \bibinfo{year}{2007}\natexlab{}.
\newblock \showarticletitle{Graph evolution: Densification and shrinking diameters}.
\newblock \bibinfo{journal}{\emph{ACM transactions on Knowledge Discovery from Data (TKDD)}} \bibinfo{volume}{1}, \bibinfo{number}{1} (\bibinfo{year}{2007}), \bibinfo{pages}{2--es}.
\newblock


\bibitem[\protect\citeauthoryear{Leskovec, Lang, Dasgupta, and Mahoney}{Leskovec et~al\mbox{.}}{2009}]%
        {leskovec:2009:community}
\bibfield{author}{\bibinfo{person}{Jure Leskovec}, \bibinfo{person}{Kevin~J Lang}, \bibinfo{person}{Anirban Dasgupta}, {and} \bibinfo{person}{Michael~W Mahoney}.} \bibinfo{year}{2009}\natexlab{}.
\newblock \showarticletitle{Community structure in large networks: Natural cluster sizes and the absence of large well-defined clusters}.
\newblock \bibinfo{journal}{\emph{Internet Mathematics}} \bibinfo{volume}{6}, \bibinfo{number}{1} (\bibinfo{year}{2009}), \bibinfo{pages}{29--123}.
\newblock


\bibitem[\protect\citeauthoryear{Liu and Li}{Liu and Li}{2022}]%
        {liu:2022:satmargin}
\bibfield{author}{\bibinfo{person}{Muyi Liu} {and} \bibinfo{person}{Pan Li}.} \bibinfo{year}{2022}\natexlab{}.
\newblock \showarticletitle{SATMargin: Practical Maximal Frequent Subgraph Mining via Margin Space Sampling}. In \bibinfo{booktitle}{\emph{Proceedings of the ACM Web Conference 2022}}. ACM, \bibinfo{publisher}{Association for Computing Machinery}, \bibinfo{address}{New York, NY, USA}, \bibinfo{pages}{1495--1505}.
\newblock
\urldef\tempurl%
\url{https://doi.org/10.1145/3485447.3512196}
\showDOI{\tempurl}


\bibitem[\protect\citeauthoryear{Martinelli, Saracino, and Sgandurra}{Martinelli et~al\mbox{.}}{2013}]%
        {martinelli:2013:classifying}
\bibfield{author}{\bibinfo{person}{Fabio Martinelli}, \bibinfo{person}{Andrea Saracino}, {and} \bibinfo{person}{Daniele Sgandurra}.} \bibinfo{year}{2013}\natexlab{}.
\newblock \showarticletitle{Classifying android malware through subgraph mining}. In \bibinfo{booktitle}{\emph{International Workshop on Data Privacy Management}}. \bibinfo{publisher}{Springer}, \bibinfo{address}{Berlin, Heidelberg}, \bibinfo{pages}{268--283}.
\newblock


\bibitem[\protect\citeauthoryear{Mawhirter, Reinehr, Holmes, Liu, and Wu}{Mawhirter et~al\mbox{.}}{2021}]%
        {mawhirter:2021:graphzero}
\bibfield{author}{\bibinfo{person}{Daniel Mawhirter}, \bibinfo{person}{Sam Reinehr}, \bibinfo{person}{Connor Holmes}, \bibinfo{person}{Tongping Liu}, {and} \bibinfo{person}{Bo Wu}.} \bibinfo{year}{2021}\natexlab{}.
\newblock \showarticletitle{Graphzero: A high-performance subgraph matching system}.
\newblock \bibinfo{journal}{\emph{ACM SIGOPS Operating Systems Review}} \bibinfo{volume}{55}, \bibinfo{number}{1} (\bibinfo{year}{2021}), \bibinfo{pages}{21--37}.
\newblock


\bibitem[\protect\citeauthoryear{Mawhirter and Wu}{Mawhirter and Wu}{2019}]%
        {mawhirter:2019:automine}
\bibfield{author}{\bibinfo{person}{Daniel Mawhirter} {and} \bibinfo{person}{Bo Wu}.} \bibinfo{year}{2019}\natexlab{}.
\newblock \showarticletitle{Automine: harmonizing high-level abstraction and high performance for graph mining}. In \bibinfo{booktitle}{\emph{Proceedings of the 27th ACM Symposium on Operating Systems Principles}}. \bibinfo{publisher}{Association for Computing Machinery}, \bibinfo{address}{New York, NY, USA}, \bibinfo{pages}{509--523}.
\newblock


\bibitem[\protect\citeauthoryear{Mrzic, Meysman, Bittremieux, Moris, Cule, Goethals, and Laukens}{Mrzic et~al\mbox{.}}{2018}]%
        {mrzic:2018:grasping}
\bibfield{author}{\bibinfo{person}{Aida Mrzic}, \bibinfo{person}{Pieter Meysman}, \bibinfo{person}{Wout Bittremieux}, \bibinfo{person}{Pieter Moris}, \bibinfo{person}{Boris Cule}, \bibinfo{person}{Bart Goethals}, {and} \bibinfo{person}{Kris Laukens}.} \bibinfo{year}{2018}\natexlab{}.
\newblock \showarticletitle{Grasping frequent subgraph mining for bioinformatics applications}.
\newblock \bibinfo{journal}{\emph{BioData mining}} \bibinfo{volume}{11}, \bibinfo{number}{1} (\bibinfo{year}{2018}), \bibinfo{pages}{1--24}.
\newblock


\bibitem[\protect\citeauthoryear{Nguyen, Nguyen, Zelinka, Snasel, Nguyen, and Vo}{Nguyen et~al\mbox{.}}{2021}]%
        {nguyen:2021:method}
\bibfield{author}{\bibinfo{person}{Lam~BQ Nguyen}, \bibinfo{person}{Loan~TT Nguyen}, \bibinfo{person}{Ivan Zelinka}, \bibinfo{person}{Vaclav Snasel}, \bibinfo{person}{Hung~Son Nguyen}, {and} \bibinfo{person}{Bay Vo}.} \bibinfo{year}{2021}\natexlab{}.
\newblock \showarticletitle{A method for closed frequent subgraph mining in a single large graph}.
\newblock \bibinfo{journal}{\emph{IEEE Access}}  \bibinfo{volume}{9} (\bibinfo{year}{2021}), \bibinfo{pages}{165719--165733}.
\newblock
\urldef\tempurl%
\url{https://doi.org/10.1109/ACCESS.2021.3133666}
\showDOI{\tempurl}


\bibitem[\protect\citeauthoryear{Nguyen, Vo, Le, Snasel, and Zelinka}{Nguyen et~al\mbox{.}}{2020}]%
        {nguyen:2020:fast}
\bibfield{author}{\bibinfo{person}{Lam~BQ Nguyen}, \bibinfo{person}{Bay Vo}, \bibinfo{person}{Ngoc-Thao Le}, \bibinfo{person}{Vaclav Snasel}, {and} \bibinfo{person}{Ivan Zelinka}.} \bibinfo{year}{2020}\natexlab{}.
\newblock \showarticletitle{Fast and scalable algorithms for mining subgraphs in a single large graph}.
\newblock \bibinfo{journal}{\emph{Engineering Applications of Artificial Intelligence}}  \bibinfo{volume}{90} (\bibinfo{year}{2020}), \bibinfo{pages}{103539}.
\newblock
\urldef\tempurl%
\url{https://doi.org/10.1016/j.engappai.2020.103539}
\showDOI{\tempurl}


\bibitem[\protect\citeauthoryear{Nguyen, Zelinka, Snasel, Nguyen, and Vo}{Nguyen et~al\mbox{.}}{2022}]%
        {nguyen:2022:subgraph}
\bibfield{author}{\bibinfo{person}{Lam~BQ Nguyen}, \bibinfo{person}{Ivan Zelinka}, \bibinfo{person}{Vaclav Snasel}, \bibinfo{person}{Loan~TT Nguyen}, {and} \bibinfo{person}{Bay Vo}.} \bibinfo{year}{2022}\natexlab{}.
\newblock \showarticletitle{Subgraph mining in a large graph: A review}.
\newblock \bibinfo{journal}{\emph{Wiley Interdisciplinary Reviews: Data Mining and Knowledge Discovery}} \bibinfo{volume}{12}, \bibinfo{number}{4} (\bibinfo{year}{2022}), \bibinfo{pages}{e1454}.
\newblock


\bibitem[\protect\citeauthoryear{Richardson, Agrawal, and Domingos}{Richardson et~al\mbox{.}}{2003}]%
        {richardson:2003:trust}
\bibfield{author}{\bibinfo{person}{Matthew Richardson}, \bibinfo{person}{Rakesh Agrawal}, {and} \bibinfo{person}{Pedro Domingos}.} \bibinfo{year}{2003}\natexlab{}.
\newblock \showarticletitle{Trust management for the semantic web}. In \bibinfo{booktitle}{\emph{International semantic Web conference}}. Springer, \bibinfo{publisher}{Springer Berlin Heidelberg}, \bibinfo{address}{Berlin, Heidelberg}, \bibinfo{pages}{351--368}.
\newblock


\bibitem[\protect\citeauthoryear{RIPEANU, IAMNITCHI, and FOSTER}{RIPEANU et~al\mbox{.}}{2002}]%
        {ripeanu:2002:mapping}
\bibfield{author}{\bibinfo{person}{Matei RIPEANU}, \bibinfo{person}{Adriana IAMNITCHI}, {and} \bibinfo{person}{Ian FOSTER}.} \bibinfo{year}{2002}\natexlab{}.
\newblock \showarticletitle{Mapping the Gnutella network}.
\newblock \bibinfo{journal}{\emph{IEEE internet computing}} \bibinfo{volume}{6}, \bibinfo{number}{1} (\bibinfo{year}{2002}), \bibinfo{pages}{50--57}.
\newblock


\bibitem[\protect\citeauthoryear{Salem, Alokshiya, and Hasan}{Salem et~al\mbox{.}}{2021}]%
        {salem:2021:rasma}
\bibfield{author}{\bibinfo{person}{Saeed Salem}, \bibinfo{person}{Mohammed Alokshiya}, {and} \bibinfo{person}{Mohammad~Al Hasan}.} \bibinfo{year}{2021}\natexlab{}.
\newblock \showarticletitle{RASMA: a reverse search algorithm for mining maximal frequent subgraphs}.
\newblock \bibinfo{journal}{\emph{BioData Mining}}  \bibinfo{volume}{14} (\bibinfo{year}{2021}), \bibinfo{pages}{1--23}.
\newblock


\bibitem[\protect\citeauthoryear{Shi, Zhai, Xu, and Zhai}{Shi et~al\mbox{.}}{2020}]%
        {shi:2020:graphpi}
\bibfield{author}{\bibinfo{person}{Tianhui Shi}, \bibinfo{person}{Mingshu Zhai}, \bibinfo{person}{Yi Xu}, {and} \bibinfo{person}{Jidong Zhai}.} \bibinfo{year}{2020}\natexlab{}.
\newblock \showarticletitle{Graphpi: High performance graph pattern matching through effective redundancy elimination}. In \bibinfo{booktitle}{\emph{SC20: International Conference for High Performance Computing, Networking, Storage and Analysis}}. IEEE, \bibinfo{publisher}{Insitute of Electrical and Electronics Engineers}, \bibinfo{address}{Piscataway, NJ, USA}, \bibinfo{pages}{1--14}.
\newblock


\bibitem[\protect\citeauthoryear{Soos, Biere, Heule, Jarvisalo, and Suda}{Soos et~al\mbox{.}}{2019}]%
        {soos:2019:cryptominisat}
\bibfield{author}{\bibinfo{person}{Mate Soos}, \bibinfo{person}{Armin Biere}, \bibinfo{person}{M Heule}, \bibinfo{person}{M Jarvisalo}, {and} \bibinfo{person}{M Suda}.} \bibinfo{year}{2019}\natexlab{}.
\newblock \showarticletitle{CryptoMiniSat 5.6 with YalSAT at the SAT Race 2019}.
\newblock \bibinfo{journal}{\emph{Proc. of SAT Race}}  \bibinfo{volume}{B-2019-1} (\bibinfo{year}{2019}), \bibinfo{pages}{14--15}.
\newblock


\bibitem[\protect\citeauthoryear{Teixeira, Fonseca, Serafini, Siganos, Zaki, and Aboulnaga}{Teixeira et~al\mbox{.}}{2015}]%
        {teixeira:2015:arabesque}
\bibfield{author}{\bibinfo{person}{Carlos~HC Teixeira}, \bibinfo{person}{Alexandre~J Fonseca}, \bibinfo{person}{Marco Serafini}, \bibinfo{person}{Georgos Siganos}, \bibinfo{person}{Mohammed~J Zaki}, {and} \bibinfo{person}{Ashraf Aboulnaga}.} \bibinfo{year}{2015}\natexlab{}.
\newblock \showarticletitle{Arabesque: a system for distributed graph mining}. In \bibinfo{booktitle}{\emph{Proceedings of the 25th Symposium on Operating Systems Principles}}. \bibinfo{publisher}{Association for Computing Machinery}, \bibinfo{address}{New York, NY, USA}, \bibinfo{pages}{425--440}.
\newblock


\bibitem[\protect\citeauthoryear{Thomas, Valluri, and Karlapalem}{Thomas et~al\mbox{.}}{2010}]%
        {thomas:2010:margin}
\bibfield{author}{\bibinfo{person}{Lini~T Thomas}, \bibinfo{person}{Satyanarayana~R Valluri}, {and} \bibinfo{person}{Kamalakar Karlapalem}.} \bibinfo{year}{2010}\natexlab{}.
\newblock \showarticletitle{Margin: Maximal frequent subgraph mining}.
\newblock \bibinfo{journal}{\emph{ACM Transactions on Knowledge Discovery from Data (TKDD)}} \bibinfo{volume}{4}, \bibinfo{number}{3} (\bibinfo{year}{2010}), \bibinfo{pages}{1--42}.
\newblock
\urldef\tempurl%
\url{https://doi.org/10.1145/1839490.1839491}
\showDOI{\tempurl}


\bibitem[\protect\citeauthoryear{Yan and Han}{Yan and Han}{2002}]%
        {yan:2002:gspan}
\bibfield{author}{\bibinfo{person}{Xifeng Yan} {and} \bibinfo{person}{Jiawei Han}.} \bibinfo{year}{2002}\natexlab{}.
\newblock \showarticletitle{gspan: Graph-based substructure pattern mining}. In \bibinfo{booktitle}{\emph{2002 IEEE International Conference on Data Mining, 2002. Proceedings.}} IEEE, \bibinfo{publisher}{Institute of Electrical and Electronics Engineers}, \bibinfo{address}{Piscataway, NJ, USA}, \bibinfo{pages}{721--724}.
\newblock
\urldef\tempurl%
\url{https://doi.org/10.1109/ICDM.2002.1184038}
\showDOI{\tempurl}


\bibitem[\protect\citeauthoryear{Yao, Zheng, Zeng, Huang, Gui, Liao, Jin, and Xue}{Yao et~al\mbox{.}}{2020}]%
        {yao:2020:locality}
\bibfield{author}{\bibinfo{person}{Pengcheng Yao}, \bibinfo{person}{Long Zheng}, \bibinfo{person}{Zhen Zeng}, \bibinfo{person}{Yu Huang}, \bibinfo{person}{Chuangyi Gui}, \bibinfo{person}{Xiaofei Liao}, \bibinfo{person}{Hai Jin}, {and} \bibinfo{person}{Jingling Xue}.} \bibinfo{year}{2020}\natexlab{}.
\newblock \showarticletitle{A locality-aware energy-efficient accelerator for graph mining applications}. In \bibinfo{booktitle}{\emph{2020 53rd Annual IEEE/ACM International Symposium on Microarchitecture (MICRO)}}. IEEE, \bibinfo{publisher}{Institute of Electrical and Electronics Engineers}, \bibinfo{address}{Piscataway, NJ, USA}, \bibinfo{pages}{895--907}.
\newblock


\bibitem[\protect\citeauthoryear{Yuan, Yan, Qu, Adhikari, Khalil, Long, and Wang}{Yuan et~al\mbox{.}}{2023}]%
        {yuan:2023:tfsm}
\bibfield{author}{\bibinfo{person}{Lyuheng Yuan}, \bibinfo{person}{Da Yan}, \bibinfo{person}{Wenwen Qu}, \bibinfo{person}{Saugat Adhikari}, \bibinfo{person}{Jalal Khalil}, \bibinfo{person}{Cheng Long}, {and} \bibinfo{person}{Xiaoling Wang}.} \bibinfo{year}{2023}\natexlab{}.
\newblock \showarticletitle{T-FSM: A Task-Based System for Massively Parallel Frequent Subgraph Pattern Mining from a Big Graph}.
\newblock \bibinfo{journal}{\emph{Proc. ACM Manag. Data}} \bibinfo{volume}{1}, \bibinfo{number}{1}, Article \bibinfo{articleno}{74} (\bibinfo{date}{may} \bibinfo{year}{2023}), \bibinfo{numpages}{26}~pages.
\newblock
\urldef\tempurl%
\url{https://doi.org/10.1145/3588928}
\showDOI{\tempurl}


\bibitem[\protect\citeauthoryear{Zhao, Zhang, Xu, Zheng, and Guo}{Zhao et~al\mbox{.}}{2020}]%
        {zhao:2020:kaleido}
\bibfield{author}{\bibinfo{person}{Cheng Zhao}, \bibinfo{person}{Zhibin Zhang}, \bibinfo{person}{Peng Xu}, \bibinfo{person}{Tianqi Zheng}, {and} \bibinfo{person}{Jiafeng Guo}.} \bibinfo{year}{2020}\natexlab{}.
\newblock \showarticletitle{Kaleido: An efficient out-of-core graph mining system on A single machine}. In \bibinfo{booktitle}{\emph{2020 IEEE 36th International Conference on Data Engineering (ICDE)}}. IEEE, \bibinfo{publisher}{Insitute of Electrical and Electronics Engineers}, \bibinfo{address}{Piscataway, NJ, USA}, \bibinfo{pages}{673--684}.
\newblock


\end{thebibliography}

\end{document}